\title{The First Order Truth behind Undecidability of Regular Path Queries Determinacy
\footnote{Supported by the Polish National Science Centre (NCN) grant 2016/23/B/ST6/01438}}
\author{Grzegorz Głuch, Jerzy Marcinkowski, Piotr Ostropolski-Nalewaja\\
Institute of Computer Science, University of Wrocław} 
\theoremstyle{definition}
\newtheorem{definition}{Definition}[section]
\newtheorem{theorem}{Theorem}[section]
\newtheorem{lemma}[theorem]{Lemma}
\newtheorem{exercise}[theorem]{exercise}
\newtheorem{fact}[theorem]{Fact}
\newtheorem{observation}[theorem]{Observation}
\newtheorem{notation}[theorem]{Notation}
\newcommand{\ddd}[0]{\ldots}
\newcommand{\rtgd}[2]{#1 \rightarrow #2}
\newcommand{\rcrg}[1]{\rc{#1}{R}{G}}
\newcommand{\rcgr}[1]{\rc{#1}{G}{R}}
\newcommand{\rc}[3]{\rtgd{#2(#1)}{#3(#1)}}
\newcommand{\step}[0]{step}
\newcommand{\Path}[2]{#1(#2)}
\newcommand{\requests}[0]{rq}
\newcommand{\database}[0]{\mathbb{D}}
\newcommand{\history}[0]{\database_{\omega^2}}
\newcommand{\histories}[0]{\Omega}
\newcommand{\pair}[1]{\langle #1 \rangle}
\newcommand{\set}[1]{\{#1\}}
\newcommand{\naturals}{\mathbb{N}}
\newcommand{\CC}[5]{(\tensor*[^{#5}_{#4}]{\mathbf{#1}}{^{#3}_{#2}})}
\newcommand{\PP}[2]{\tensor*[^{#2}_{}]{\mathbb{P}}{^{}_{#1}}}
\newcommand{\GG}[2]{\tensor*[^{#2}_{}]{\mathbb{G}}{^{}_{#1}}}
\newcommand{\LL}[3]{\tensor*[^{#3}_{}]{\mathbb{L}}{^{#2}_{#1}}}
\newcommand{\ver}[0]{V}
\newcommand{\hor}[0]{H}
\newcommand{\reduction}[0]{\mathfrak{R}}
\newcommand{\concat}{\ensuremath{+\!\!\!\!+\,}}
\begin{document}

\maketitle

\begin{abstract}
In our paper [Głuch, Marcinkowski, Ostropolski-Nalewaja, LICS ACM, 2018] we have solved an old problem stated in [Calvanese, De Giacomo, Lenzerini, Vardi, SPDS ACM, 2000] showing that query determinacy is undecidable for Regular Path Queries. Here a strong generalisation of this result is shown, and
-- we think -- a very unexpected one. We prove that no regularity is needed: determinacy remains undecidable even for finite unions of conjunctive path queries. 
\end{abstract}
 
 ------------------

\section{Introduction}
\paragraph*{Query determinacy problem (QDP)}
Imagine there is a database $\database$
we have no direct access to, and there are views of this $\database$ available to us, defined by some set of
 queries $\mathcal{Q}=\{Q_1, Q_2,\ldots Q_k\}$ (where the language of queries from $\mathcal{Q}$ is a parameter of the problem).
 And we are given another  query $Q_0$. Will we be able, regardless of $\mathbb D$,
 to compute $Q_0(\mathbb D)$
only using the views   $Q_1(\mathbb D),\ldots Q_k(\mathbb D) $? The answer depends on whether the queries in $\mathcal {Q}$ 
{\em determine\footnote{Or, using the language of \cite{CGLV00}, \cite{CGLV00a} \cite{CGLV02} and \cite{CGLV02a}, whether $\mathcal{Q}$ are \textit{lossless} with respect to $Q_0$.  }} query $Q_0$. 
Stating it more precisely, the  {\bf Query Determinacy Problem} is\footnote{More precisely, the problem
  comes in two different flavors, ``finite'' and ``unrestricted'', depending on whether the ($\clubsuit$) ``each'' ranges over finite structures only, or all structures, including infinite.}:
\vspace{1.5mm}

\noindent\fbox{%
    \parbox{\linewidth}{%
  The instance of the problem is a set of  queries $\mathcal{Q}=\{Q_1,\ldots Q_k\}$, and 
another  query $Q_0$.

 The question is whether $\mathcal{Q}$ determines $Q_0$, which means that for ($\clubsuit$) each
 two structures (database instances) ${\mathbb D}_1$ and ${\mathbb D}_2$ such that  
$Q({\mathbb D}_1)= Q({\mathbb D}_2)$ for each $Q\in \mathcal{Q}$, it also holds that  $Q_0({\mathbb D}_1) = Q_0({\mathbb D}_2)$.}%
}
\vspace{1.5mm}

\noindent
QDP is seen as a very natural static analysis 
problem in the area of database theory.  It is important for privacy (when we don't want the adversary to be able to compute the query) 
and for (query evaluation plans) optimisation (we don't need to access again the database as the given views already provide enough information).
And, as a very natural static analysis 
problem, it has  a 30 years long history as a research subject -- the oldest paper we were able to trace, where QDP is studied, is \cite{LY85}, where decidability of QDP is shown
for the case where $Q_0$ is a conjunctive query (CQ) and also the set $\mathcal Q$ consists of a single CQ.

But this is not a survey paper, so let us just point a reader interested in the history of QDP to 
 Nadime Francis's thesis \cite{F15}, which is a very good read indeed.

 \subsection{The context}
 
 As we said, this is a technical paper not a survey paper. But still, we need to introduce the reader to the technical context of our results.
 And, from the point of view of this introduction, there are two lines of research which are interesting: decidability problems of 
 QDP for positive fragments of  SQL (conjunctive queries and their unions) and
 for fragments of the language of  Regular Path Queries (RPQs) -- the core of most navigational graph query languages.\\

\paragraph*{QDP for fragments of SQL.}
A lot of progress was done in this area in two past decades. 
The paper  \cite{NSV06} was the first to 
present a negative result. QDP was shown there 
to be undecidable if unions of conjunctive queries are allowed in  $\mathcal{Q}$ and $Q_0$. The proof is moderately hard, but the queries  are high arity (by arity of a query we mean the number of free variables) 
and hardly can be seen as living anywhere close to database practice.

 In \cite{NSV10} it was proved that determinacy is also undecidable if the elements of $\mathcal{Q}$ are conjunctive queries  and $Q_0$ is a first order sentence 
(or the other way round). Another somehow related (although no longer contained in the first order/SQL paradigm) negative result is presented in \cite{FGZ12}: determinacy is shown there to be undecidable  if  $\mathcal{Q}$  is a DATALOG program and 
$Q_0$ is a conjunctive query. Finally,  closing the classification for the traditional relational model, it was shown in \cite{GM15} and \cite{GM16} that QDP
is undecidable for $Q_0$ and the queries in $\mathcal{Q}$ being conjunctive queries. The queries in  \cite{GM15} and \cite{GM16} are quite complicated (the Turing machine there is 
encoded in the arities of the queries), and again hardly resemble anything practical. 

On the positive side, \cite{NSV10} shows that the problem is decidable for conjunctive queries 
if each  query from $\mathcal{Q}$ has only one free variable. 

Then, in 
\cite{A11} decidability was shown for $\mathcal{Q}$ and $Q_0$ being respectively a set of conjunctive path queries and a path query.
(see Section \ref{lancuszki} for the definition).  This is an important result from the point of view of the current paper, and the proof in \cite{A11}, 
while not too difficult, is very nice -- it gives
the impression of deep insight into the real reasons why a set of conjunctive path queries determines another conjunctive path query.

The result from \cite{A11} begs for generalisations, and indeed it was generalised in \cite{P11}  to the scenario where $\mathcal{Q}$ is a set of conjunctive path queries but $Q_0$ is any conjunctive
query.\\

\paragraph*{QDP for Regular Path Queries.}
A natural extension of QDP to the graph database scenario is considered here. In this scenario, the underlying data is  modelled as graphs, in which nodes are objects, and edge labels define relationships between those objects. Querying  such graph-structured data has received much attention recently, due to numerous applications, especially for social networks.

 There are many more or less expressive query languages for  such databases (see \cite{B13}). The core of all of them (the SQL of graph databases) is
  RPQ -- the language of Regular Path Queries. RPQ queries
ask for all pairs of objects in the database that are connected by a specified
path, where the natural choice of the path specification language, as \cite{V16} elegantly explains, is the language of regular expressions.
This idea is at least 30 years old (see for example \cite{CMW87}, \cite{CM90}) and considerable effort was put to
create tools for  reasoning about regular path queries, analogous to the ones we have in the 
 traditional relational databases  context.
For example \cite{AV97} and  \cite{BFW98} investigate  decidability of the
implication problem for path constraints, which are integrity constraints used for  RPQ optimisation. Containment of
conjunctions of regular path queries has been proved decidable in
\cite{CGL98} and  \cite{FLS98}, and then, in more general setting, in \cite{JV09} and \cite{RRV15}.

Naturally, query determinacy problem has also been stated, and studied,
for Regular Path Queries. This line of
research was initiated in 
\cite{CGLV00}, \cite{CGLV00a}, \cite{CGLV02} and \cite{CGLV02a}, 
and it was in \cite{CGLV02} where the central problem of this area -- decidability of QDP for RPQ -- was first stated (called there ``losslessness for exact semantics'').

On the positive side, the previously mentioned result of Afrati \cite{A11} can be seen as a special case, where each of the regular 
languages defining the queries only consists of one word (conjunctive path queries considered in \cite{A11} constitute in fact the intersection of CQ and RPQ). 
Another positive result is presented in \cite{F17}, where ``approximate determinacy'' is shown to be decidable 
if the query $Q_0$ is (defined by) a single-word regular language (a conjunctive path query), and the languages defining the queries in $Q_0$ and  $\mathcal Q$ are  
over a single-letter alphabet.
See how difficult the analysis is here -- despite a lot of effort (the proof of the result in \cite{F17} invokes ideas from \cite{A11} but is incomparably harder) even a subcase (for a  single-word regular language) of a subcase (unary alphabet) was only understood ``approximately''.

On the negative side, in \cite{GMO18},   we showed (solving the problem from \cite{CGLV02}),  that QDP is undecidable for full RPQ.

\subsection{ Our contribution} 
The main result of this paper, and -- we think -- quite an unexpected one, is the following strong generalisation of the main result from \cite{GMO18}:

\begin{theorem}\label{i1}
Answering Determinacy question for Finite Regular Path Queries is undecidable in unrestricted and finite case (for necessary definitions see Section~\ref{preliminaries}).
\end{theorem}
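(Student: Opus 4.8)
The plan is to prove Theorem~\ref{i1} by a reduction, in the spirit of \cite{GMO18}, from a suitable machine‑halting or tiling‑type problem, but carried out so that every regular language occurring in the produced QDP instance is \emph{finite}. The first step is to dissect the \cite{GMO18} construction: there an undecidable problem is translated into a constant‑size set $\mathcal{Q}$ of RPQs together with a single RPQ $Q_0$, with $\mathcal{Q}$ determining $Q_0$ exactly when the instance is negative, and, roughly, each infinite regular language in $\mathcal{Q}\cup\{Q_0\}$ does one of two jobs — it either lets a single query ``scan'' an unbounded fragment of an encoded computation in one shot, or, in concert with $Q_0$, forces the chase of any database that agrees with the views to collapse into the controlled, layered shape on which $Q_0$ evaluates uniformly. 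I would classify the languages accordingly, since the two jobs call for different finitisation ideas.

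The technical core will be to replace ``scan an unbounded path'' by ``inspect only length‑bounded \emph{windows}''. One cannot, of course, bound the lengths of paths in a database; instead, no single query is ever allowed to look at a long path. Every query in $\mathcal{Q}$ is made a finite union of very short conjunctive path queries — one word per \emph{locally legal window} of an encoded computation — so that $Q_i(\database)$ merely records which short windows occur and where. The global consistency of the encoded run, which one infinite regular expression used to certify for free, is then no longer asserted by any individual query; it has to be squeezed out, window by window, from the interplay between $\mathcal{Q}$ and $Q_0$ via the chase, just as consistency of a long run was in \cite{GMO18}. This is also why each query must be a genuine union: by Afrati's decidability result \cite{A11}, QDP for single‑word path queries (against an arbitrary conjunctive target) is decidable, so the disjunction inside each $Q_i$ — ``this window is one of the allowed shapes'' — is doing essential work. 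The delicate part is that a hostile database can decorate windows so that each is locally legal yet they fail to chain into a real run, or can sprinkle in spurious edges; the windows must be chosen so that every such deviation is \emph{either} already witnessed by some query in $\mathcal{Q}$ — so the two databases differ on a view and there is nothing to prove — \emph{or} provably harmless for the value of $Q_0$.

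As always the reduction has two directions, and the second is the obstacle. ``Completeness'' — if the instance is positive then $\mathcal{Q}$ fails to determine $Q_0$ — should be the easy direction: a witnessing run is turned into two databases $\database_1,\database_2$ carrying the same windows, hence indistinguishable by $\mathcal{Q}$, yet wired so that $Q_0$ holds in one and not in the other. ``Soundness'' — if the instance is negative then any $\database_1,\database_2$ with $Q(\database_1)=Q(\database_2)$ for all $Q\in\mathcal{Q}$ satisfy $Q_0(\database_1)=Q_0(\database_2)$ — is where essentially all the work lies. In \cite{GMO18} this rested on a chase that, because of the regular languages, was self‑limiting: it halted in a layered structure with a bounded stock of layer types on which $Q_0$ could simply be read off. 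With only finite languages that self‑limiting behaviour is gone, so I would instead mount an explicit back‑and‑forth / strategy argument: attach to every database that agrees with the views a canonical witness structure, show it maps homomorphically into the chase of both $\database_1$ and $\database_2$, and finish with the fact that a finite union of conjunctive path queries is preserved under homomorphisms, whence $Q_0(\database_1)=Q_0(\database_2)$. Here finiteness of the queries must be \emph{used}, not fought: the length bound on the queries bounds how far a single chase step can propagate any inconsistency, and that locality is exactly what lets the windows carry out, piecemeal, the bookkeeping one infinite regular language performed in a single stroke. Finally, the same argument has to survive restriction to finite structures — which, as in \cite{GMO18}, it should, because when the instance is negative all the witness structures in play can be taken finite.
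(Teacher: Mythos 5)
Your high-level architecture matches the paper's: a reduction from a tiling-type problem, a chase/game argument in the red--green style of \cite{GMO18}, and the homomorphism-preservation of unions of conjunctive path queries to close the soundness direction. But the proposal stops exactly where the actual difficulty begins, and the two key inventions the paper identifies as the whole point of the result are absent. First, you never explain how unbounded structures are forced into existence when every language is finite. In \cite{GMO18} the infinite language $G(Q_0)$ lets the initial move commit to an arbitrarily long path; with finite languages that is impossible, and your ``windows'' do not supply a growth mechanism --- they only restate that finite queries are local. The paper's solution is a specific system of languages (the $x$/$y$ and warm/cold machinery, languages $Q_{good}^{10}$--$Q_{good}^{15}$) whose requests feed one another so that Crocodile can force Fugitive to extend a diagonal path $\mathbb{P}_m$ and then fill it out into ever larger grids $\mathbb{G}_m$; nothing in your plan plays this role. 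Second, you do not address how a \emph{finite} $Q_0$ can detect a violation sitting deep inside an arbitrarily large structure: a forbidden window far from the endpoints is invisible to any fixed-length path query from $a$ to $b$. The paper solves this by forcing every vertex to lie within distance $4$ of both $a'$ and $b'$ (the ``P2-ready'' shape maintained by the $x$ and $y$ edges), so every local violation completes to a short $a$--$b$ path in a word of $R(Q_{bad})\subseteq R(Q_0)$. Your dichotomy ``either witnessed by some view or provably harmless'' presupposes this but gives no mechanism for it.

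There is also a concrete error in your treatment of the finite case. You claim both variants follow because ``when the instance is negative all the witness structures in play can be taken finite,'' but that is the wrong direction: the danger is that a negative instance (say, a non-halting machine) still admits an \emph{infinite} counterexample to determinacy, which would break soundness for the unrestricted variant while leaving the finite variant intact (or vice versa). The paper circumvents this by reducing not from a decision problem but from a pair of \emph{recursively inseparable} sets $\mathcal{A}$ (some finite grid has a proper shading, yielding a finite counterexample) and $\mathcal{B}$ (no grid, finite or infinite, has one, yielding determinacy outright); instances in neither set are simply never promised anything, yet any decider for either variant of determinacy would separate $\mathcal{A}$ from $\mathcal{B}$. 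Without this device, or an argument that your source problem has the needed ``finite witness on yes-instances, no witness at all on no-instances'' structure, your reduction does not establish undecidability of both the finite and the unrestricted problem.
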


To be more precise, we show that the problem, both in the ``finite'' and 
the ``unrestricted'' versions, is undecidable.

It is, we believe, interesting to see that this negative result falls into both lines of research outlined above. Finite Regular Path Queries
are of course a subset of RPQ, where star is not allowed in the regular expressions (only concatenation and plus are), 
but on the other hand they are also Unions of Conjunctive Path Queries, so unlike general RPQs are first order queries and they also fall into the SQL category. 

Our result shows that the room for generalising the positive result from \cite{A11} is quite limited. 
What we however find most surprising is the discovery that 
it was possible to give a negative answer to the question from \cite{CGLV02}, which had been open for 15 years, without talking about RPQs at all -- undecidability 
is already in the intersection of RPQs and (positive) SQL.

\vspace{2mm}
\noindent
{\bf Remark.} \cite{B13} makes a distinction between ``simple paths semantics'' for Recursive Path Queries and ``all paths semantics''. As all the graphs we produce in this paper are acyclic (DAGs), 
all our results hold for both semantics.

\paragraph*{Organization of the paper.}
In short Section \ref{preliminaries} we introduce (very few) notions and some notations we need to use.  Sections \ref{sygnatura}--\ref{ulga} of this paper are devoted to the proof of Theorem \ref{i1}.  

In Section \ref{sygnatura} we first follow the ideas from \cite{GMO18} defining the red-green signature. Then we 
define the game of Escape and state a crucial lemma (Lemma \ref{bridge}), asserting that this game really fully characterises determinacy for Regular Path Queries. In Section \ref{uniwersalnosc} we prove this Lemma. This part follows in the footsteps of \cite{GMO18}, but with some changes: in \cite{GMO18} Escape is a solitary game, and here we prefer to see it as a two-player one.

At this point we will have the tools ready for proving Theorem \ref{i1}. In Section \ref{source} we explain what is the undecidable problem we use for our reduction, and in Section~\ref{redukcja} we present the reduction.
In Sections \ref{przewodnik} -- \ref{ulga} we use the characterisation provided by Lemma \ref{bridge} to prove correctness of this reduction. 


\section{How this paper relates to \cite{GMO18}}

This paper builds on top of the technique developed in \cite{GMO18} to prove undecidability of QDP-RPQ for 
any languages, including infinite. 

From the point of view of the high-level architecture the two papers do not differ much. In both cases,
in order to prove that if some computational device rejects its input then the respective instance of
QDP-RPQ (or QDP-FRPQ) is positive (there is determinacy) we use a game argument. In \cite{GMO18} this game is 
solitary. The player, called {\em Fugitive}, constructs a structure/graph database (a DAG, with source $a$ and sink $b$).
He begins the game by choosing a path $\database_0$ from $a$ to $b$, which represents a word from some regular language $G(Q_0)$.
Then, in each step he must ``satisfy requests''-- if there is a path from some $v$ to $w$ in the current structure, representing a
word from some (*) regular language $Q$ then he must add a path representing a word from another 
language $Q'$ connecting  these $v$ and $w$. He loses when, in this process, a path from $a$ to $b$ from yet another language $R(Q_0)$
is created. 
 In this paper this game is replaced by a two-player game. But this is a minor difference. 
There are however two reasons why the possibility of using infinite languages is crucial in \cite{GMO18}. Due to these reasons, while, as we said, the general architecture of the proof of the negative result in this paper is the same as in \cite{GMO18},
the implementation of this architecture is almost completely different here. 

The first reason is as follows.
Because of the symmetric nature of the constraints, the language $Q$ (in (*) above) 
is always almost the same as language $Q'$ (they only have
different ``colors'', but otherwise are equal). For this reason it is not at all clear how to force {\em Fugitive}
to build longer and longer paths. This is a problem for us, as to be able to encode something undecidable we need  to 
produce structures of unbounded size. One can think that paths of unbounded length translate to potentially unbounded 
length of Turing machine tape. 

In order to solve this problem we use -- in \cite{GMO18} -- a language $G(Q_0)$. It is an infinite language and -- in his initial move --
{\em Fugitive} could choose/commit to a path of any length he wished so that the length of the path did not need to increase in the game.
But now we only have finite languages, so also $G(Q_0)$ must be finite and we needed to invent something completely different.

The second reason is in $R(Q_0)$. This -- one can think -- is the language of ``forbidden patterns'' -- paths from 
$a$ to $b$ that {\em Fugitive} must not construct. If he does, it means that he ``cheats''. But now again, $R(Q_0)$ is finite.
So how can we use it to detect {\em Fugitive's} cheating on paths no longer than the longest one in $R(Q_0)$? This at first seemed to 
us to be an impossible task. 

But it wasn't impossible. The solution to both aforementioned problems is in the complicated machinery of languages producing edges labelled with $x$ and $y$ 


\section{Preliminaries and notations}\label{preliminaries}\label{lancuszki}

\paragraph* {Determination.} For a set of queries $\mathcal{Q}=\set{Q_1, Q_2, \ddd, Q_k}$ and another query $Q_0$, we say that $\mathcal{Q}$ {\em determines} $Q_0$
if and only if: $ \forall_{\database_1,\database_2}\;  \mathcal{Q}(\database_1) = \mathcal{Q}(\database_2) \rightarrow Q_0(\database_1) = Q_0(\database_2),$
where $\mathcal{Q}(\database_1) = \mathcal{Q}(\database_2)$ is defined as $\forall_{Q \in \mathcal{Q}}\; Q(\database_1) = Q(\database_2)$. {\em Query determinacy} comes in two versions: {\em unrestricted} and {\em finite}, depending on whether we allow or disallow infinite structures $\database$ to be considered. When we speak about {\em determinacy } without specifying explicitly it's version,  we assume the {\em unrestricted} case of the problem.

\paragraph* {Structures.}
When we say ``structure'' we always mean a directed graph with edges labelled with letters from some signature/alphabet $\Sigma$. In other words every structure we consider is a relational structure $\database$ over some signature $\Sigma$ consisting of binary predicate names. Letters $\mathbb{D}$, $\mathbb{M}$, $\mathbb{G}$ and $\mathbb{H}$ are used to denote structures. $\Omega$ is used for a set of structures.
Every structure we consider will contain two  distinguished constants $a$ and $b$.
 For two structures ${\mathbb G}$ and ${\mathbb G'}$ over $\Sigma$, with sets of vertices $V$ and $V'$, a function $h:V \rightarrow V'$ is called a 
homomorphism if for each two vertices $\pair{u,v}$ connected by an edge with label $e \in \Sigma$ in $\mathbb{G}$ there is an edge connecting $\pair{h(u),h(v)}$, with the same label $e$, in $\mathbb{G'}$.

\paragraph* {Conjunctive path queries.}
Given a set of binary predicate names $\Sigma$ and a word $w = a_1a_2\ddd a_n$ over $\Sigma^*$ we 
define a (conjunctive) path query $ w(v_0, v_n)$ as a conjunctive query:

\begin{center}
$\exists_{v_1,\ddd,v_{n-1}} a_1(v_0, v_1) \wedge  a_2(v_1, v_2) \wedge \ddd a_n(v_{n-1}, v_n).$
\end{center}

We use the notation $w[v_0,v_n]$ to denote the canonical structure (``frozen body'') of query $w(v_0, v_n)$ --
the structure consisting of elements $v_0,v_1,\ldots v_n$ and atoms $a_1(v_0, v_1),$ $a_2(v_1, v_2),$ $\ldots$ $a_n(v_{n-1}, v_n)$.

\paragraph*{Regular path queries.}
For a regular language $Q$ over $\Sigma$ we define a query, which is also denoted by $Q$, as $Q(u, v) = \exists_{w \in Q} w(u, v)$

In other words such a query $Q$ looks for a path in the given graph labelled with any word from $Q$ and returns the endpoints of that path. Clearly, if $Q$ is a finite regular language (finite regular path query), then $Q(u,v)$ is a union of conjunctive queries.

We use letters $Q$ and $L$ to denote regular languages and $\mathcal{Q}$ and $\mathcal{L}$ to denote sets of regular languages. 
The notation $Q(\database)$ has the natural meaning: $Q(\database) = \set{\pair{u,v}\, |\, \database \models Q(u,v)}$.

\section{Red-Green Structures and Escape}\label{sygnatura}
In this section we will provide crucial tool for our proof. First we will introduce red-green structures. Such a structure will consist of two structures each with distinct colour. One can think that we take two databases and then colour one green and another red and then look at them as a whole. This notion is very useful for two coloured {\em Chase} technique from \cite{GM15} and \cite{GM16} that has evolved into  Game of Escape in \cite{GMO18} and is also present in this paper.

\subsection{Red-green signature and Regular Constraints}
\label{red-green-section}
For a given alphabet (signature) $\Sigma$ let $\Sigma_G$ and $\Sigma_R$ be two copies of $\Sigma$ one written with ``green ink'' and another with ``red ink''. Let $\bar\Sigma = \Sigma_G \cup \Sigma_R$.

For any word $w$ from $\Sigma^*$ let $G(w)$ and $R(w)$ be copies of this word written in green and red respectively. For a regular language $L$ over $\Sigma$ let $G(L)$ and $R(L)$ be copies of this same regular language but over $\Sigma_G$ and $\Sigma_R$ respectively.
Also for any structure $\database$ over $\Sigma$ let $G(\database)$ and $R(\database)$ be copies of this same structure $\database$ but with labels of edges recolored to green and red respectively.
 For a pair of regular languages $L$ over $\Sigma$ and $L'$ over $\Sigma'$ we define the \textit{Regular Constraint (RC)} $\rtgd{L}{L'}$ as a formula
$\forall_{u,v} L(u,v) \Rightarrow L'(u,v).$

We use the notation $\database \models t$ to say that an RC $t$ is satisfied in $\database$. Also, we write $\database \models T$ for a set $T$ of RCs when for each $t\in T$ it is true that $\database \models t$.

For a graph $\database$ and an RC $t=\rtgd{L}{L'}$ let $\requests(t, \database)$ (as ``requests'') be the set of all triples $\pair{u,v,\rtgd{L}{L'}}$ such that $\database \models L(u,v)$ and $\database \not\models L'(u,v)$. For a set $T$ of RCs by 
$\requests(T, \database)$ we mean the union of all sets $\requests(t, \database)$ such that 
$t \in T$. Requests are there in order to be satisfied:

\begin{algorithm}[H]
\begin{algorithmic}[1]
\Statex \textbf{function} \textit{Add}
\Statex \textbf{arguments}:
\Statex \begin{itemize}
			\item Structure $\database$
            \item RC $\rtgd{L}{L'}$
			\item pair $\pair{u,v}$ such that $\pair{u,v,\rtgd{L}{L'}} \in \requests(\rtgd{L}{L'}, \database)$
		\end{itemize}
\Statex \textbf{body}:
\State Take a word $w = a_0 a_1 \ddd a_n$ from $L'$ and create a new path \mbox{$w[u,v]=a_0(u, s_1),a_1(s_1,s_2),\ddd,a_n(s_{n-1},v)$} where $s_1,s_2,\ddd,s_{n-1}$ are {\bf new} vertices
\State \Return $\database \cup w[u,v]$.
\end{algorithmic}
\label{alg:local_improvements}
\end{algorithm}

Notice that the result $Add(D, \rtgd{L}{L'}, \pair{u,v})$ depends on the choice of $w\in L'$. So the procedure is non-deterministic.

For a  regular language ${L}$ we define 
${L}^\rightarrow = \rcgr{L}$ and ${L}^\leftarrow = \rcrg{L}$. 
All regular constraints we are going to consider are either ${L}^\rightarrow$ or ${L}^\leftarrow$.
For a  regular language $L$ we define
$L^\leftrightarrow=\{L^\rightarrow, L^\leftarrow\}$ and for a set $\mathcal{L}$ of regular languages we define:
${\mathcal L}^\leftrightarrow = \bigcup_{L \in \mathcal{L}} L^\leftrightarrow.$

Requests of the form $\langle u,v,t\rangle$ for some RC $t$ of the form $L^\rightarrow$ ($L^\leftarrow$) are {\em generated by} $G(L)$ (resp. {\em by}  $R(L))$. Requests that are generated by $G(L)$ or $R(L)$ are said to be {\em generated by} $L$. 

The following lemma is straightforward to prove and characterises determinacy in terms of regular constraints:

\begin{lemma}
\label{lm-det-struct}
A set $\mathcal{Q}$ of regular path queries over $\Sigma$ does not determine (does not finitely determine) a regular path query  $Q_0$, over the same alphabet,  if and only if there exists a structure $\mathbb M$ (resp. a finite structure) and a pair of vertices $u,v \in {\mathbb M}$ such that ${\mathbb M} \models \mathcal{Q}^\leftrightarrow$ and ${\mathbb M} \models {(G(Q_0))}(u,v)$ but   ${\mathbb M}\not\models {(R(Q_0))}(u,v-)$.
\end{lemma}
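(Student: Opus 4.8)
The plan is to prove Lemma~\ref{lm-det-struct} by translating the official definition of (non-)determinacy, which speaks about two separate structures $\database_1,\database_2$ agreeing on all queries in $\mathcal Q$, into a single two-colored structure $\mathbb M$ satisfying $\mathcal Q^\leftrightarrow$. The two directions will be handled separately, and the bridge between them is the observation that a structure satisfying a Regular Constraint $L^\rightarrow=\rtgd{G(L)}{R(L)}$ is precisely one in which every green $L$-path is ``matched'' by a red $L$-path between the same endpoints, and symmetrically for $L^\leftarrow$.

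First I would do the ``only if'' direction. Assume $\mathcal Q$ does not determine $Q_0$ (the finite case is identical with ``finite'' inserted everywhere), so there are structures $\database_1,\database_2$ over $\Sigma$ with $Q(\database_1)=Q(\database_2)$ for every $Q\in\mathcal Q$, yet without loss of generality $\langle u,v\rangle\in Q_0(\database_1)\setminus Q_0(\database_2)$ for some vertices that, after taking a disjoint union with identification of the relevant pair, can be treated as common to both. I would then set $\mathbb M = G(\database_1)\cup R(\database_2)$, gluing the two recolored copies along the two distinguished vertices $u,v$ (and along $a,b$; in fact one can just take the disjoint union and then use the witnessing pair, since homomorphic/path properties are local to the colored part). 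Because $\database_1$ and $\database_2$ agree on each $Q\in\mathcal Q$, every green $Q$-path in $G(\database_1)$ has its endpoints in $Q(\database_1)=Q(\database_2)$, hence there is a red $Q$-path in $R(\database_2)$ between the images of those endpoints — that is exactly $\mathbb M\models Q^\rightarrow$, and symmetrically $\mathbb M\models Q^\leftarrow$; so $\mathbb M\models\mathcal Q^\leftrightarrow$. Finally $\langle u,v\rangle\in Q_0(\database_1)$ gives $\mathbb M\models G(Q_0)(u,v)$, and $\langle u,v\rangle\notin Q_0(\database_2)$ gives $\mathbb M\not\models R(Q_0)(u,v)$, since red paths in $\mathbb M$ all live in $R(\database_2)$.

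For the ``if'' direction, suppose such an $\mathbb M$ exists. Let $\mathbb M_G$ be the substructure of $\mathbb M$ induced by the green edges, and $\mathbb M_R$ the one induced by the red edges (both retaining all vertices, or at least the vertices $u,v$); let $\database_1$ be $\mathbb M_G$ with the green ink erased and $\database_2$ be $\mathbb M_R$ with the red ink erased, both over $\Sigma$. A path labelled by a word of $Q\in\mathcal Q$ exists between two vertices in $\database_1$ iff the corresponding green path exists in $\mathbb M$; by $\mathbb M\models Q^\rightarrow$ this forces the red (hence $\database_2$-) path, and by $\mathbb M\models Q^\leftarrow$ the converse, so $Q(\database_1)=Q(\database_2)$ for all $Q\in\mathcal Q$. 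But $\mathbb M\models G(Q_0)(u,v)$ means $\langle u,v\rangle\in Q_0(\database_1)$ while $\mathbb M\not\models R(Q_0)(u,v)$ means $\langle u,v\rangle\notin Q_0(\database_2)$, so $\mathcal Q$ does not determine $Q_0$; and if $\mathbb M$ is finite then so are $\database_1,\database_2$, giving the finite case.

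The only mildly delicate point — the ``main obstacle'', such as it is — is the bookkeeping about the distinguished vertices and the fact that $\mathbb M$ need not be a disjoint union of a purely green and a purely red part: a priori $\mathbb M$ could have both green and red edges incident to the same vertex, and one must be careful that ``the green part of $\mathbb M$'' and ``the red part of $\mathbb M$'' share enough vertices (in particular $u$ and $v$) for the argument to make sense, while paths of a single color never secretly use edges of the other color. This is why the statement quantifies over a pair $u,v\in\mathbb M$ rather than over the constants $a,b$, and why erasing ink rather than extracting disjoint pieces is the right move; once this is said, everything else is the routine unfolding of the definitions of $L^\rightarrow$, $L^\leftarrow$ and of a path query, which is why the lemma is, as claimed, straightforward.
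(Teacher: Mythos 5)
Your main line is the intended one: the paper offers no written proof (it calls the lemma straightforward), but the remark immediately after it -- that $\mathbb M$ can be read as a pair of structures, its green part and its red part, which agree on $\mathcal Q$ but not on $Q_0$ -- confirms that splitting and merging along colours is exactly the argument meant. Your ``if'' direction is correct as written, and your ``only if'' direction is correct under one specific reading of the construction: $\mathbb M$ must be the ordinary union of $G(\database_1)$ and $R(\database_2)$ in which \emph{every} element common to the two domains is identified (equivalently, take domain $dom(\database_1)\cup dom(\database_2)$, green edges from $\database_1$, red edges from $\database_2$). This is forced, and it is also automatically enough: the hypothesis $Q(\database_1)=Q(\database_2)$ is an equality of sets of pairs of actual elements, so every endpoint of a tuple in some $Q(\database_i)$ already lies in both domains.

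The genuine problem is with the two alternative constructions you offer in passing, both of which fail. The constraint $Q^\rightarrow=\rtgd{G(Q)}{R(Q)}$ demands, for every pair $x,y$ joined by a green $Q$-path, a red $Q$-path between \emph{the same} $x$ and $y$. In a disjoint union of $G(\database_1)$ and $R(\database_2)$ no red edge touches the green component at all, so $Q^\rightarrow$ fails as soon as $Q(\database_1)\neq\emptyset$; ``homomorphic/path properties are local to the colored part'' is precisely why this does \emph{not} work, not why it does. Likewise, gluing only along the distinguished pair $u,v$ (and $a,b$) leaves any other pair $\langle x,y\rangle\in Q(\database_1)$ with a green witness but no red one, again violating $Q^\rightarrow$. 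So the parenthetical should be deleted and replaced by the observation above that the shared elements are exactly those occurring in the answer sets; with that fix the proof is complete, and the finite case is, as you say, verbatim.
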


Any structure ${\mathbb M}$, as above, will be called a \textit{counterexample}. One can think of $\mathbb{M}$ as a pair of structures, being green and red parts of $\mathbb{M}$. Note that those two structures both agree on $\mathcal{Q}$ but don't on $Q_0$, thus proving that $\mathcal{Q}$ does not determine $Q_0$.

\subsection{The game of Escape}\label{game}
Here we present the essential tool for our proof. One can note that the game of Escape is very simmilar to the well known {\em Chase} technique. This is indeed the case, as one can think about RCs as of {\em Tuple Generating Dependencies} (TGDs) from the {\em Chase}. Divergence from standard {\em Chase} comes from ``nondeterminism'' that is inherent part of RCs (request can be satisfied by any word from a language) phenomenon not present in TGDs.

An instance Escape($Q_0$, $\mathcal{Q}$) of a game called \textit{Escape}, played by two players called  \textit{Fugitive} and  \textit{Crocodile}, is:
\begin{itemize}
\item A finite regular language $Q_0$ of {\em forbidden paths} over $\Sigma$.
\item A set $\mathcal{Q}$ of finite regular languages over $\Sigma$,
\end{itemize}

The rules of the game are:
\begin{itemize}
	\item First {\em Fugitive} picks the \textit{initial position} of the game as $\database_0 = (G(w))[a,b]$ for some $w \in Q_0$.

\item Suppose $\database_{\beta}$ is the current  position of some play before move $\beta+1$ and let $S_\beta=\requests(\mathcal{Q}^\leftrightarrow, \database_\beta)$. Then, in move $\beta+1$, {\em Crocodile} picks one request $\pair{u,v,t} \in S_\beta$ and then {\em Fugitive} can move to any position of the form:
$$\database_{\beta+1} := Add(\database_\beta, t, \pair{u,v})$$

\item For a limit ordinal $\lambda$ the position  $\database_\lambda $ is defined as 
$\bigcup\limits_{\beta < \lambda} \database_\beta$.

\item  If $\requests(\mathcal{Q}^\leftrightarrow, \database_i)$ is empty then for each $j > i$ the structures $\database_{j}$ and $\database_i$ are equal.

\item {\em Fugitive} loses when for a \textit{final position} $\history = \bigcup\limits_{\beta<\omega^2} \database_{\beta}$ it is true that $\history \models (R(Q_0))(a,b)$, otherwise he wins. Obviously if there is some $\beta < \omega^2$ such that $\database_{\beta} \models (R(Q_0))(a,b)$ then the result of the game is already known ({\em Fugitive} loses), but technically the game still proceeds.

\end{itemize}

Notice that we want the game to last $\omega^2$ steps. This is not really crucial
(if we were careful $\omega$ steps would be enough)
but costs nothing and will simplify presentation in Section~\ref{straszna}.

Obviously, different strategies of both players may lead to different final positions. 

Now we can state the crucial Lemma, that connects the game of Escape and QDP-RPQ:

\begin{lemma}\label{bridge}
For an instance of QDP-RPQ consisting of regular language $Q_0$ over $\Sigma$ and a set of regular languages $\mathcal{Q}$ over $\Sigma$ the two conditions are equivalent:
\begin{enumerate}[(i)]
\item $\mathcal{Q}$ does not determine $Q_0$,
\item {\em Fugitive} has a winning strategy in Escape($Q_0$, $\mathcal{Q}$).
\end{enumerate}
\end{lemma}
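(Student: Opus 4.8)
The plan is to prove the two implications of Lemma \ref{bridge} separately, using Lemma \ref{lm-det-struct} as the bridge to turn ``determinacy'' into ``existence of a counterexample structure.''

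\paragraph*{From a winning strategy to non-determinacy (ii) $\Rightarrow$ (i).}
First I would assume \emph{Fugitive} has a winning strategy $\sigma$ in Escape($Q_0,\mathcal{Q}$). The idea is to run a play in which \emph{Crocodile} is maximally greedy — a ``fair'' play in which every request that ever appears is eventually served. Concretely, fix a bookkeeping enumeration of potential requests so that across the $\omega^2$ moves every request present in some $\database_\beta$ is picked by \emph{Crocodile} at some later move (this is why $\omega^2$ steps, rather than $\omega$, are convenient: one can use the first $\omega$ block to handle the initial requests, then re-sweep). Let $\mathbb{M}=\history$ be the final position of this play against $\sigma$. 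Since at a fair play no request survives forever, $\requests(\mathcal{Q}^\leftrightarrow,\mathbb{M})=\emptyset$, i.e. $\mathbb{M}\models \mathcal{Q}^\leftrightarrow$. By construction $\mathbb{M}\models (G(Q_0))(a,b)$ because $\database_0=(G(w))[a,b]$ for the word $w\in Q_0$ that \emph{Fugitive} chose, and this green path is never removed. Finally, because $\sigma$ is winning, $\mathbb{M}\not\models (R(Q_0))(a,b)$. So $\mathbb{M}$ is exactly a counterexample in the sense of Lemma \ref{lm-det-struct}, and hence $\mathcal{Q}$ does not determine $Q_0$. The only subtlety is verifying fairness is achievable while still letting \emph{Fugitive} follow $\sigma$; but \emph{Crocodile}'s choices do not constrain $\sigma$ beyond telling it which request to serve, so any scheduling is compatible.

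\paragraph*{From non-determinacy to a winning strategy (i) $\Rightarrow$ (ii).}
Conversely, assume $\mathcal{Q}$ does not determine $Q_0$, so by Lemma \ref{lm-det-struct} there is a structure $\mathbb{M}$ with $\mathbb{M}\models\mathcal{Q}^\leftrightarrow$, a pair $u,v$ with $\mathbb{M}\models (G(Q_0))(u,v)$ and $\mathbb{M}\not\models (R(Q_0))(u,v)$. After identifying $u,v$ with the constants $a,b$, I would have \emph{Fugitive} use $\mathbb{M}$ as an oracle. For the opening move, pick $w\in Q_0$ witnessing $\mathbb{M}\models (G(Q_0))(a,b)$ — this gives a homomorphism $h_0:\database_0\to\mathbb{M}$ fixing $a,b$. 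The strategy invariant maintained throughout the play is: there is a homomorphism $h_\beta:\database_\beta\to\mathbb{M}$ fixing $a$ and $b$. Given such an $h_\beta$ and a request $\pair{x,y,t}$ chosen by \emph{Crocodile} with $t=L^\rightarrow$ (the $L^\leftarrow$ case is symmetric), we have $\database_\beta\models G(L)(x,y)$, hence $\mathbb{M}\models G(L)(h_\beta(x),h_\beta(y))$; since $\mathbb{M}\models L^\rightarrow$, also $\mathbb{M}\models R(L)(h_\beta(x),h_\beta(y))$, witnessed by some word $w'\in L$. \emph{Fugitive} serves the request by adding the red path for exactly this $w'$, and extends $h_\beta$ to $h_{\beta+1}$ by sending the new internal vertices along the witnessing path in $\mathbb{M}$; old vertices keep their images, so $h_{\beta+1}$ is a homomorphism fixing $a,b$. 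At limit stages the union of the $h_\beta$'s is again such a homomorphism. Thus the invariant survives to $\history$, giving $h:\history\to\mathbb{M}$ fixing $a,b$. If we had $\history\models (R(Q_0))(a,b)$, composing with $h$ would give $\mathbb{M}\models (R(Q_0))(a,b)$, contradicting the choice of $\mathbb{M}$. Hence \emph{Fugitive} wins, so this is a winning strategy.

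\paragraph*{Main obstacle.}
The routine parts are the homomorphism bookkeeping and the limit-stage arguments; the genuinely delicate point is the fairness construction in (ii) $\Rightarrow$ (i) — making sure that against a \emph{fixed} \emph{Fugitive} strategy one can still schedule \emph{Crocodile}'s picks so that \emph{every} request created along the way (including those created as a consequence of \emph{Fugitive}'s own later moves) gets served, so that the final position genuinely models $\mathcal{Q}^\leftrightarrow$ rather than merely ``most'' of it. This is where the $\omega^2$ length of the game earns its keep: organize the play into $\omega$ consecutive blocks each of length $\omega$, and in block $n$ serve (the currently-oldest un-served copy of) the $n$-th request under a fixed enumeration, re-enumerating after each block, a standard dovetailing that guarantees no request is postponed forever. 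Everything else is straightforward, which is presumably why the paper calls the lemma ``straightforward,'' modulo this scheduling hygiene.
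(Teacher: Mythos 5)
Your proof is correct and follows essentially the same route as the paper: the (i)$\Rightarrow$(ii) direction is exactly the paper's Lemma~\ref{universal}/\ref{lm-universal} argument (maintain a homomorphism $h_\beta:\database_\beta\to\mathbb{M}$ into the counterexample, extend it at successor steps via the witnessing path in $\mathbb{M}$ and take unions at limits), and the (ii)$\Rightarrow$(i) direction is the paper's one-line ``use the final position as the counterexample,'' which you merely flesh out with an explicit fair-scheduling argument for \emph{Crocodile} that the paper leaves implicit.
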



We should mention here that all the notions of Section~\ref{sygnatura} are similar to those of \cite{GMO18} but are not identical. The most notable difference is in the definition of the game of Escape, as it is no longer a solitary game, as it was in \cite{GMO18}.

This makes the analysis slightly harder here, but pays off in Sections~\ref{przewodnik} -- \ref{ulga}.
\subsection{Universality of Escape (Proof of Lemma~\ref{bridge}}\label{uniwersalnosc})
\noindent
It is clear that $(i)\Leftarrow(ii)$ is true. All we need is to use the final position of a play won by {\em Fugitive} as the counterexample for determinacy as in Lemma~\ref{lm-det-struct}. 
But the other direction is not at all obvious. Notice that it could {\em a priori} happen that, while
some counterexample exists, it is some terribly complicated structure which {\em Fugitive}
can not force {\em Crocodile} to reach as a final position in a play of the game of Escape.

We will denote the set of all final positions reachable (by any sequence of moves of both players) from an initial position $\database_0$, for a set of regular languages $\mathcal{L}$, as $\histories(\mathcal{L}^\leftrightarrow, \database_0)$.

\begin{lemma} \label{universal}
Suppose 
structures $\database_0$ and $\mathbb M$ over $\bar\Sigma$ are such that there exists 
a homomorphism $h_0:\database_0\rightarrow\mathbb M$. Let $T$ be a set of RCs and suppose  ${\mathbb M} \models T$.
Then (regardless of {\em Crocodile's} moves) {\em Fugitive} can reach some final position $\history\in\histories(T, \database_0)$ such that there exists a homomorphism $h$ from $\history$ to $\mathbb{M}$.
\end{lemma}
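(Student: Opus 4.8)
The plan is to build the homomorphism $h$ together with the play, processing requests one at a time and maintaining the invariant that the current position $\database_\beta$ admits a homomorphism $h_\beta : \database_\beta \to \mathbb{M}$, starting from $h_0$. The key observation is that $\mathbb{M} \models T$ gives us, for every request, a witnessing path in $\mathbb{M}$ that we can pull back into $\database_\beta$ by adding fresh vertices; we then extend the homomorphism along that fresh path so that it lands on the witnessing path in $\mathbb{M}$.

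First I would set up the transfinite construction. At stage $\beta+1$, \emph{Crocodile} picks a request $\pair{u,v,t} \in S_\beta = \requests(T, \database_\beta)$ with $t = \rtgd{L}{L'}$ (here $T = \mathcal{L}^\leftrightarrow$ in the intended application, but the argument only uses $\mathbb{M} \models T$). By definition of a request, $\database_\beta \models L(u,v)$, so there is a word $w' \in L$ and a path labelled $w'$ from $u$ to $v$ in $\database_\beta$. Applying $h_\beta$ to that path, we get a path labelled $w'$ from $h_\beta(u)$ to $h_\beta(v)$ in $\mathbb{M}$, so $\mathbb{M} \models L(h_\beta(u), h_\beta(v))$. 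Since $\mathbb{M} \models t$, we get $\mathbb{M} \models L'(h_\beta(u), h_\beta(v))$, i.e. there is a word $w = a_0 a_1 \cdots a_n \in L'$ and vertices $m_0 = h_\beta(u), m_1, \ldots, m_n = h_\beta(v)$ in $\mathbb{M}$ with $a_i(m_i, m_{i+1})$ an edge of $\mathbb{M}$. \emph{Fugitive} now responds with $\database_{\beta+1} = Add(\database_\beta, t, \pair{u,v})$ using exactly this $w \in L'$: the new path is $a_0(u,s_1), a_1(s_1,s_2), \ldots, a_n(s_{n-1},v)$ with fresh $s_1,\ldots,s_{n-1}$. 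Define $h_{\beta+1}$ to agree with $h_\beta$ on old vertices and send $s_i \mapsto m_i$. Then $h_{\beta+1}$ is a homomorphism, since the only new edges $a_i(s_i, s_{i+1})$ map to the edges $a_i(m_i, m_{i+1})$ of $\mathbb{M}$ (with the convention $s_0 = u$, $s_n = v$), and it extends $h_\beta$.

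At a limit ordinal $\lambda$, set $\database_\lambda = \bigcup_{\beta<\lambda}\database_\beta$ and $h_\lambda = \bigcup_{\beta<\lambda} h_\beta$; these are well-defined and consistent because the $h_\beta$ form an increasing chain of functions (each extends the previous), and a homomorphism out of a directed union of structures is determined by its restrictions, so $h_\lambda$ is again a homomorphism. Carrying this through all $\beta < \omega^2$ yields a final position $\history = \bigcup_{\beta<\omega^2}\database_\beta \in \histories(T, \database_0)$ together with a homomorphism $h = h_{\omega^2} : \history \to \mathbb{M}$, as required. (If at some stage $\requests(T,\database_i)$ becomes empty the construction stabilises, consistently with the rules of the game, and the already-built $h_i$ works.)

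The only subtlety — and the one point that needs care rather than being purely routine — is the bookkeeping of fresh vertices versus \emph{Crocodile}'s adversarial choice: the construction must work \emph{regardless} of which request \emph{Crocodile} picks at each step, so the statement is really ``for every \emph{Crocodile} strategy there is a \emph{Fugitive} strategy'', and the above describes \emph{Fugitive}'s strategy as a function of the history (namely: always answer a request using a witness word extracted from $\mathbb{M}$ via the current homomorphism). One should note that this choice of $w$ may depend on $\mathbb{M}$ and on the path chosen in $\database_\beta$, but any consistent choice works; there is no interaction between successive stages beyond the monotone growth of $h_\beta$, because the added vertices are always fresh and hence never constrain earlier choices. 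I do not expect any genuine obstacle here — this is the ``easy'' direction of universality — the real work of Lemma~\ref{bridge} is the converse, handled separately.
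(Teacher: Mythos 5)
Your proposal is correct and follows essentially the same route as the paper's own proof: the inductive step of pulling a witnessing path for $L'$ back from $\mathbb{M}$ via the current homomorphism and extending it over the fresh vertices is exactly the paper's Lemma~\ref{lm-universal}, and the limit stages are handled identically by taking unions of the increasing chain of homomorphisms.
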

\begin{proof}

Next lemma provides the induction step for the proof of  Lemma~\ref{universal}.

Let us define $\step$ as an arity four relation such that $\pair{\database,\database', T, r} \in \step$  when $\database'$ can be the result of one move of {\em Fugitive}, in position  $\database$, in the game of Escape with set of RCs $T$ and a particular request 
$r \in \requests(T, \database)$
picked by {\em Crocodile}.

\begin{lemma}
\label{lm-universal}
Let $\database_\beta$, $\mathbb M$ be structures over $\bar\Sigma$ and $h_\beta :\database_\beta\rightarrow\mathbb{M}$ be a homomorphism. Suppose that for a set $T$ of RCs it is true that ${\mathbb M} \models T$. Then for every $r\in \requests(T, \database_\beta)$ there exists some structure $\database_{\beta+1}$ such that $\step(\database_\beta,\database_{\beta+1}, T, r)$ and such that there exists a homomorphism $h_{\beta+1}:\database_{\beta+1}\rightarrow\mathbb{M}$ such that $h_\beta \subseteq h_{\beta+1}$.
\end{lemma}

\begin{proof}
Let $r = \pair{u,v,\rtgd{X}{Y}}$ for some $u,v\in \database_\beta$ and let $u' = h_\beta(u)$ and $v' = h_\beta(v)$. Note note that $X$ is either $G(L)$ and $Y$ is $R(L)$ or the converse.
Since  $\database_\beta\models X(u,v)$ and since $h_\beta$ is a homomorphism we know that 
${\mathbb M} \models X(u',v')$.
But ${\mathbb M} \models T$ so there is also ${\mathbb M} \models Y(u',v')$ and thus for some $a_1a_2\ddd a_n \in Y$ there is a path $p' = a_1(u',s'_1), \\ a_2(s'_1,s'_2)\ddd  a_n(s'_{n-1}, v')$ in ${\mathbb M}$. Let $\database_\beta'$ be a structure created by adding to $\database_\beta$ the new path $p = a_1(u,s_1),\\a_2(s_1,s_2),\ddd a_n(s_{n-1}, y)$ (with $s_i$ being new vertices). Let $h_\beta' = h_\beta \cup \set{\pair{s_i, s_{i}'} | i \in [n-1]}$. It is easy to see that $\database_\beta'$ and $h_\beta'$ are requested $\database_{\beta+1}$ and $h_{\beta+1}$.
\end{proof}

Now we consider the limit case. Let $\lambda$ be a limit ordinal such that $\lambda \leq \omega^2$. By definition we know that $\database_\lambda = \bigcup_{\beta < \lambda} \database_\beta$. Now we need to construct a homomorphism $h_\lambda$. Let $h_\lambda := \bigcup_{\beta < \lambda}h_\beta$. Observe that such $h_\lambda$ is a valid homomorphism from $\database_\lambda$ to $\mathbb{M}$. 

This along with Lemma~\ref{lm-universal} proves that $\database_{\omega^2}$ and $h_{\omega^2}$ are as required by Lemma~\ref{universal}.\end{proof}

Now we will prove the (i)$\Rightarrow$(ii) part of Lemma~\ref{bridge}.

Assume (i). Let ${\mathbb M}$ be a counterexample as in
Lemma~\ref{lm-det-struct}. Let  $a,b$ and $w\in Q_0$ be such that ${\mathbb M} \models (G(w))(a,b)$ and ${\mathbb M} \not\models (R(Q_0))(a,b)$. Applying Lemma~\ref{universal} to $\database_0 = G(w)[a,b]$ and 
to $\mathbb M$ we know that {\em Fugitive} (regardless of {\em Crocodile's} moves) can reach some winning final position $\history$ such that there is homomorphism from $\history$ to ${\mathbb M}$. It is clear that $\history \not\models (R(Q_0))(a,b)$ as we know that ${\mathbb M} \not\models (R(Q_0))(a,b)$. This shows that $\history$ is indeed a winning final position.

This concludes the proof of the Lemma \ref{bridge}.


\section{Source of undecidability}\label{source}
In this section we will define tiling problem that we will reduce to QDP-FRPQ. In order to prove undecidability for both finite and unrestricted case we will build our tiling problem upon notion of {\em recursively inseparable sets}. 

\begin{definition}[\textbf{Recursively inseparable sets}]
Sets $A$ and $B$ are called \textit{recursively inseparable} when each set $C$, called a \textit{separator}, such that $A \subseteq C$ and $B \cap C = \emptyset$, is undecidable \cite{R87}.
\end{definition}
It is well known that:
\begin{lemma}
\label{turing-inseparable}
Let $T$ be the set of all Turing Machines. Then sets $T_{acc} =  \set{\phi \in T | \phi(\varepsilon) = 1}$ and $T_{rej} = \set{\phi \in T | \phi(\varepsilon) = 0}$ are recursively inseparable. By $\phi(\varepsilon)$ we mean the returned value of the Turing Machine $\phi$ that was run on an empty tape.
\end{lemma}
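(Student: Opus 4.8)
The plan is to establish recursive inseparability of $T_{acc}$ and $T_{rej}$ by a standard diagonalization argument, essentially mimicking the proof that the halting problem is undecidable, but carried out against an arbitrary putative separator rather than against a decider for a single language.

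First I would fix an effective enumeration of Turing machines and an effective (computable) map that, given (a code for) a machine $\phi$, produces a machine $\phi'$ which simulates $\phi(\varepsilon)$ and, if that simulation halts, outputs $1$ whenever $\phi(\varepsilon)$ halted with output $0$ and outputs $0$ whenever $\phi(\varepsilon)$ halted with output $1$ — in short, $\phi'$ flips the accept/reject verdict while preserving non-halting. This "verdict-flipping" transformation is clearly computable on codes, and it has the key property: if $\phi \in T_{acc}$ then $\phi' \in T_{rej}$, and if $\phi \in T_{rej}$ then $\phi' \in T_{acc}$ (while if $\phi(\varepsilon)$ diverges, so does $\phi'(\varepsilon)$).

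Next, suppose for contradiction that $C$ is a decidable separator, i.e. $T_{acc} \subseteq C$, $T_{rej} \cap C = \emptyset$, and membership in $C$ is decidable by some machine $M_C$. I would then build a machine $\psi$ as follows: on empty input, $\psi$ computes its own code (by the recursion theorem / Kleene's fixed-point theorem), forms the flipped machine $\psi'$ from that code, runs $M_C$ to decide whether $\psi' \in C$, and then halts outputting $1$ if $M_C$ says $\psi' \notin C$ and outputting $0$ if $M_C$ says $\psi' \in C$. Since $M_C$ always halts, $\psi(\varepsilon)$ always halts, so $\psi \in T_{acc} \cup T_{rej}$. Now trace the cases: if $\psi \in T_{acc}$, then by construction $M_C$ reported $\psi' \notin C$; but $\psi \in T_{acc}$ forces $\psi' \in T_{rej}$, which is consistent; yet $\psi \in T_{acc} \subseteq C$ means $\psi \in C$, and we need to relate $\psi$ to $\psi'$ — so it is cleaner to have $\psi$ query $M_C$ about \emph{its own} code and flip based on that. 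Let me restate the construction accordingly: $\psi(\varepsilon)$ obtains its own code $e$, runs $M_C(e)$, and outputs $0$ if $M_C$ says $e \in C$, outputs $1$ if $M_C$ says $e \notin C$. Then $\psi$ halts, so it lies in $T_{acc} \cup T_{rej}$. If $\psi \in C$: then $\psi$ outputs $0$, so $\psi \in T_{rej}$, but $T_{rej} \cap C = \emptyset$, contradiction. If $\psi \notin C$: then $\psi$ outputs $1$, so $\psi \in T_{acc} \subseteq C$, so $\psi \in C$, contradiction. Either way we reach a contradiction, so no decidable separator exists, which is exactly recursive inseparability.

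The only subtle point — the main obstacle, such as it is — is the legitimate use of self-reference: the machine $\psi$ must have access to its own code in order to query $M_C$ about itself. This is handled cleanly by Kleene's recursion theorem, which guarantees that such a self-referential machine can be effectively constructed; with the flipping transformation the argument goes through even more directly. Everything else (effectiveness of the enumeration, of the flipping map, of "run $M_C$ and branch on its answer") is routine, so this is genuinely a folklore result, as the statement of Lemma~\ref{turing-inseparable} already asserts; I would present it at the level of detail above and cite \cite{R87} for the underlying recursion-theoretic facts.
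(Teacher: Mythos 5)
Your final argument is correct, and it is worth noting that the paper itself gives \emph{no} proof of this lemma at all: it is introduced with ``It is well known that'' and a citation to \cite{R87}, so there is no in-paper argument to compare against. Your construction --- assume a decidable separator $C$ with $T_{acc} \subseteq C$ and $T_{rej} \cap C = \emptyset$, use Kleene's recursion theorem to build a machine $\psi$ that obtains its own code, queries the decider $M_C$ on it, and outputs the verdict that places it on the wrong side of $C$ in either case --- is the standard textbook proof of this fact, and the case analysis you give ($\psi \in C$ forces $\psi \in T_{rej}$, contradicting disjointness; $\psi \notin C$ forces $\psi \in T_{acc} \subseteq C$) is airtight. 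One presentational remark: your first paragraph sets up a ``verdict-flipping'' transformation $\phi \mapsto \phi'$ that your final argument never uses; you notice this yourself mid-proof and correctly switch to having $\psi$ query $M_C$ about its own code directly. In a polished write-up you should simply delete the flipping construction, since the self-referential version is self-contained and the detour only obscures it. With that cleanup, what you have is a complete proof of a statement the paper leaves as folklore.
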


\begin{definition}[\textbf{Square Grids}]
For a $k\in \naturals$ let $[k]$ be the set $\set{i \in\naturals | 0 \leq i \leq k}$.
A square grid is a directed graph $\pair{V,E}$ where $V = [k] \times [k]$ for some natural $k > 0$ or $V = \naturals \times \naturals$. $E$ is defined as $E(\pair{i,j},\pair{i+1, j})$ and $E(\pair{i,j},\pair{i,j+1})$ for each relevant $i,j \in \naturals $.
\end{definition}

\begin{definition}[\textbf{Our Grid Tiling Problem (OGTP)}]\label{ogtp}
An instance of this problem is a set of {\bf shades}  $\mathcal{S}$ having at least two elements (\textbf{gray},\textbf{black} $\in \mathcal{S}$) and a set $\mathcal{F} \subseteq \{V,H\} \times \mathcal{S} \times \{V,H\} \times \mathcal{S} $ of forbidden pairs $\pair{c,d}$ where $c,d \in \{V,H\} \times \mathcal{S}$. Let the set of all these instances be called $\mathcal{I}$.
\end{definition}

\begin{definition}
\label{shading}
A \textit{proper shading\footnote{We would prefer to use the term ``coloring'' instead, but we already have colors, red and green, and they shouldn't be confused with shades.}} is an assignment of shades to edges of some square grid $\mathbb{G}$ (see Figure~\ref{fig:fig0}) such that:
\begin{itemize}
\item[(a1)] each horizontal edge of $\mathbb{G}$ has a label from $\{H\} \times \mathcal{S}$.
\item[(a2)] each vertical edge of $\mathbb{G}$ has a label from $\{V\} \times \mathcal{S}$.
\item[(b1)] bottom-left horizontal edge is shaded \textbf{gray}\footnote{We think of $(0,0)$ as the bottom-left corner of a square grid. By `right' we mean a direction of the increase of the first coordinate and by `up' we mean a direction of increase of the second coordinate.}.
\item[(b2)] upper-right vertical edge (if it exists) is shaded \textbf{black}.
\item[(b3)] \textbf{G} contains no forbidden paths of length $2$ labelled by $\langle c,d \rangle \in \mathcal{F}$.
\end{itemize}

\begin{figure}
\centering
\includegraphics[width=0.4\linewidth]{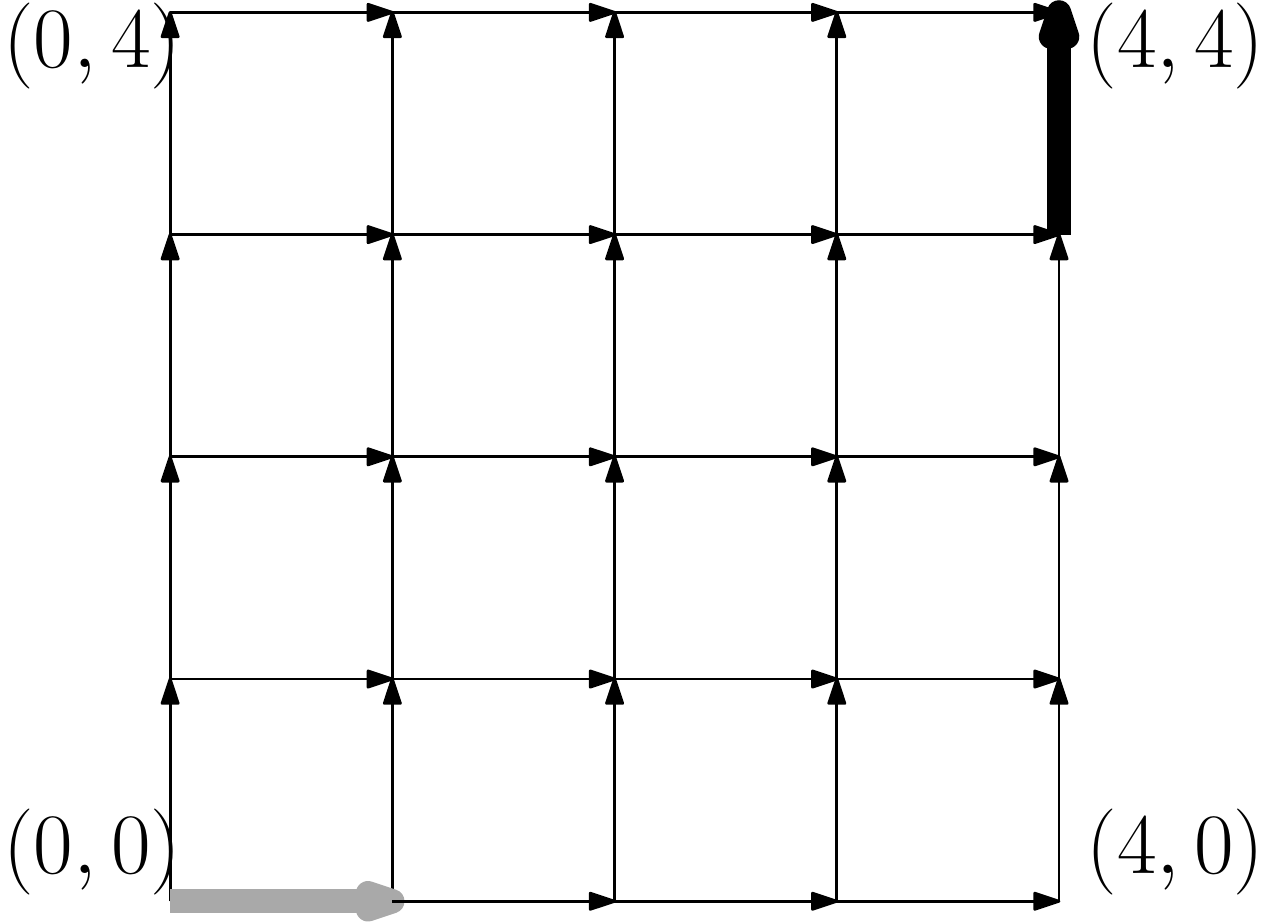}
\caption{\label{fig:fig0} Finite square grid.}
\end{figure}

We define two subsets of instances of OGTP:

\begin{itemize}
    \item[$\mathcal{A} =$] $\{ I \in \mathcal{I} | $there exists a \textit{proper shading} of some finite square grid $\}$.
    \item[$\mathcal{B} =$] $\{ I \in \mathcal{I} | $there is no \textit{proper shading} of any square grid $\}$.
\end{itemize}
\end{definition}

By a standard argument, using Lemma~\ref{turing-inseparable}, one can show that:

\begin{lemma}
\label{turing-grid-lemma}
Sets $\mathcal{A}$ and $\mathcal{B}$ of instances of OGTP are recursively inseparable. 
\end{lemma}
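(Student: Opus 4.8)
The plan is to reduce from the recursive inseparability of $T_{acc}$ and $T_{rej}$ (Lemma~\ref{turing-inseparable}) via the standard passage to tilings. First I would fix a computable map $\phi \mapsto I_\phi$ sending a Turing machine to an OGTP instance, where the shade set $\mathcal{S}$ encodes the tape alphabet together with a marker for the head position and current state (so a shade records, for one cell, its symbol and possibly that the head sits there in state $q$), and the forbidden set $\mathcal{F}$ is built so that the only legal ways to place shades on two consecutive horizontal/vertical edges are exactly the local consistency windows of a valid computation: a horizontal edge labelled by a cell's content at time $t$ may be followed (going right) only by a shade consistent with the tape content of the next cell at time $t$, and a vertical edge labelled by content at time $t$ may be followed (going up) only by the content of the same cell at time $t+1$ as dictated by $\phi$'s transition function. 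The \textbf{gray} shade on the bottom-left horizontal edge (b1) forces the initial configuration to be the empty tape in the start state, and the \textbf{black} shade on the upper-right vertical edge (b2) forces the terminal row/column to witness the accepting state.

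The key steps, in order: (1) Define $\phi \mapsto I_\phi$ as above and argue it is computable — this is routine bookkeeping of transition windows into $\mathcal{F}$. (2) Show $\phi \in T_{acc} \Rightarrow I_\phi \in \mathcal{A}$: if $\phi$ accepts $\varepsilon$ then its computation halts after finitely many steps using finitely much tape, so padding to a square $[k]\times[k]$ and shading edge $(i,j)\to(i+1,j)$ / $(i,j)\to(i,j+1)$ by the content of cell $i$ at time $j$ (with the head/state marker where appropriate) yields a proper shading of a \emph{finite} grid — one checks (a1), (a2), (b1), (b2), (b3) hold by construction, using the accepting state to satisfy (b2) on the upper-right edge. (3) Show $\phi \in T_{rej} \Rightarrow I_\phi \in \mathcal{B}$: suppose some square grid (finite \emph{or} the infinite $\naturals\times\naturals$) admits a proper shading; reading off row by row the shades recovers a sequence of configurations which, by (b1) and the local constraints in $\mathcal{F}$, must be the genuine run of $\phi$ on $\varepsilon$; since $\phi$ rejects, the run reaches the rejecting state and never the accepting one, and one checks this makes (b2) unsatisfiable on any sufficiently large grid and simultaneously makes the grid ``get stuck'' once the halting configuration is reached (a halted configuration repeats, which the vertical constraints either forbid or force the columns to stay halted forever, contradicting the requirement of an accepting upper-right edge) — hence no proper shading of any square grid exists, so $I_\phi \in \mathcal{B}$. (4) Conclude: if $C$ were a decidable separator of $\mathcal{A}$ and $\mathcal{B}$, then $\{\phi : I_\phi \in C\}$ would be a decidable separator of $T_{acc}$ and $T_{rej}$, contradicting Lemma~\ref{turing-inseparable}.

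The main obstacle I expect is step (3), and specifically handling \emph{halting}: one must make the forbidden-pair encoding robust enough that a halting (accepting or rejecting) configuration propagates upward unchanged, so that a proper shading of a large enough grid genuinely pins down whether $\phi(\varepsilon)=1$ versus $\phi(\varepsilon)=0$; in particular the asymmetry between (b2) being required (forcing eventual acceptance) and the possibility of the machine looping forever must be reconciled — this is why the problem is phrased with $\mathcal{B}$ as ``no proper shading of \emph{any} square grid'', i.e.\ a machine that loops should also fail to produce a shading, which one engineers by also forbidding any row from being a non-halting configuration that never leads to a halting one is \emph{not} needed; rather one only needs: a rejecting run cannot be completed to a grid with a black upper-right edge, and on the infinite grid a rejecting or looping run never exhibits \textbf{black}. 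A minor subtlety is making sure the finite-grid case in step (2) can always be squared off (pad with a trivial ``halted/blank'' shade that is self-consistent under both $H$ and $V$ constraints), and that the upper-right edge then legitimately carries \textbf{black} because the accepting configuration has been reached by that corner.
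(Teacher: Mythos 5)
The paper itself gives no proof of Lemma~\ref{turing-grid-lemma} --- it only asserts that the claim follows ``by a standard argument'' from Lemma~\ref{turing-inseparable} --- and the architecture you propose (a computable map $\phi\mapsto I_\phi$ encoding configurations as rows of the grid, with $\mathcal{F}$ enforcing the local transition windows, \textbf{gray} anchoring the initial configuration, \textbf{black} certifying acceptance, and a decidable separator of $\mathcal{A}$ and $\mathcal{B}$ pulling back to a decidable separator of $T_{acc}$ and $T_{rej}$) is exactly the argument the authors have in mind. Your steps (1), (2) and (4) are fine.

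There is, however, a genuine gap in step (3), at precisely the point you flag as the main obstacle. Condition (b2) reads ``upper-right vertical edge \emph{(if it exists)} is shaded \textbf{black}'', so on the infinite grid $\naturals\times\naturals$ condition (b2) is vacuously satisfied. Hence your closing justification --- ``on the infinite grid a rejecting or looping run never exhibits \textbf{black}'' --- does not rule out a proper shading of the infinite grid. If the vertical constraints merely \emph{force the columns to stay halted forever} (one of the two options you leave open), then for $\phi\in T_{rej}$ the infinite grid \emph{does} admit a shading satisfying (a1), (a2), (b1), (b3) and, vacuously, (b2); this yields $I_\phi\notin\mathcal{B}$ and the reduction collapses. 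You must commit to the other option: make the rejecting configuration a dead end, i.e.\ put into $\mathcal{F}$ every vertical pair whose lower edge carries the rejecting-state marker, so that once $\phi$ halts rejecting at time $t$ no row above $t$ can be legally shaded. With that choice no shading of the infinite grid, nor of any finite grid of side at least $t$, exists, while smaller finite grids fail (b2) because \textbf{black} can only be produced from the accepting marker. (The looping case needs no treatment at all, as you correctly note: recursive inseparability places no requirement on $I_\phi$ when $\phi$ neither accepts nor rejects.) The rest of your sketch is routine bookkeeping.
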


In Section~\ref{redukcja} we will construct a function $\reduction$ ($\reduction$ like $\reduction$eduction) from $\mathcal{I}$ (instances of OGTP) to instances of QPD-FRPQ that will satisfy the following:

\begin{restatable}[]{lemma}{wazny}\label{main-lemma}\label{poprawnoscredukcji}
For any instance $I = \pair{\mathcal{S},\mathcal{F}}$ of OGTP and for $\pair{\mathcal{Q},Q_{0}} = \reduction(I)$:
\begin{enumerate}[(i)]
\item \label{black} If $I\in\mathcal{A}$ then $\mathcal{Q}$ does not finitely determine $Q_{0}$.
\item \label{white} If $I\in\mathcal{B}$ then $\mathcal{Q}$ determines $Q_{0}$.
\end{enumerate}
\end{restatable}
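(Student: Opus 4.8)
\textbf{Proof plan for Lemma~\ref{poprawnoscredukcji}.}

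The plan is to use Lemma~\ref{bridge} throughout, so that both parts reduce to statements about the game $\mathrm{Escape}(Q_0,\mathcal{Q})$ where $\pair{\mathcal{Q},Q_0}=\reduction(I)$. Thus part~(\ref{black}) becomes: if $I\in\mathcal{A}$ then \emph{Fugitive} has a winning strategy (and, moreover, one whose final position is finite, so that finite non-determinacy follows via the finite half of Lemma~\ref{lm-det-struct}); part~(\ref{white}) becomes: if $I\in\mathcal{B}$ then \emph{Fugitive} has no winning strategy, i.e. \emph{Crocodile} can always force a red $Q_0$-path from $a$ to $b$. The reduction $\reduction$ will have to be designed so that the structures \emph{Fugitive} builds in a play of Escape encode, level by level, a shading of a square grid: intuitively, the unbounded-length paths that \emph{Fugitive} is forced to extend (using the $x,y$-machinery the authors advertise at the end of Section~2) play the role of successive rows of the grid, the finite language $Q_0$ and the red/green constraints encode the local tiling rules (a1)--(b3), and the ``forbidden'' red $Q_0$-path from $a$ to $b$ is exactly the certificate that some forbidden pattern from $\mathcal{F}$ occurred, or that the grid was never properly closed off with a \textbf{black} top-right edge.

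For part~(\ref{black}), I would take a finite proper shading of a $k\times k$ grid, guaranteed by $I\in\mathcal{A}$, and read off from it a concrete finite structure $\mathbb{M}$ over $\bar\Sigma$: a green copy encoding the grid together with its shading, plus whatever red edges are forced by the constraints $\mathcal{Q}^\leftrightarrow$, arranged so that $\mathbb{M}\models\mathcal{Q}^\leftrightarrow$, $\mathbb{M}\models (G(Q_0))(a,b)$ and $\mathbb{M}\not\models(R(Q_0))(a,b)$ — the last point being where properness of the shading (conditions (b1)--(b3), closure with \textbf{black}) is used to rule out every forbidden red $a$--$b$ word. Then \emph{Fugitive}'s strategy is supplied by Lemma~\ref{universal}: starting from $\database_0=G(w)[a,b]$ for the appropriate $w\in Q_0$ that maps into $\mathbb{M}$, he can answer all of \emph{Crocodile}'s requests while maintaining a homomorphism into $\mathbb{M}$, so the final position also maps into $\mathbb{M}$ and hence is not a loss. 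Since $\mathbb{M}$ is finite, by Lemma~\ref{lm-det-struct} this witnesses finite non-determinacy. The routine-but-careful part here is verifying $\mathbb{M}\models\mathcal{Q}^\leftrightarrow$, i.e. that the grid-plus-shading structure really does satisfy every regular constraint without creating a forbidden red path.

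For part~(\ref{white}), the direction is the genuinely hard one and I expect it to be the main obstacle. Here one must show that for \emph{every} strategy of \emph{Fugitive} there is a response of \emph{Crocodile} leading to $\history\models(R(Q_0))(a,b)$. The strategy is the converse of the encoding: from an arbitrary play of Escape I would extract a (possibly infinite) square grid together with a candidate shading, by following \emph{Fugitive}'s forced path-building and showing that \emph{Crocodile} can always ``push'' the construction one more row/column forward (this is where the $x,y$-machinery forcing longer and longer paths matters, replacing the infinite $G(Q_0)$ of \cite{GMO18}). One then argues that if this extracted shading ever violated one of (a1)--(b3) — or if \emph{Fugitive} tried to stop the grid prematurely without the \textbf{black} closing edge — \emph{Crocodile} has a sequence of requests completing a red $Q_0$-path, so \emph{Fugitive} loses; but since $I\in\mathcal{B}$ \emph{no} proper shading of \emph{any} grid exists, some violation is unavoidable, so \emph{Crocodile} wins every play. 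Organising this as a single \emph{Crocodile} strategy — in particular handling limit stages (the game runs to $\omega^2$), and ruling out \emph{Fugitive} ``escaping'' by building spurious extra structure that blocks the intended red path — is the delicate core, and I would devote the bulk of Sections~\ref{przewodnik}--\ref{ulga} to it, presumably by defining a potential/progress measure on positions and showing \emph{Crocodile} can always increase it until a red $a$--$b$ path appears.
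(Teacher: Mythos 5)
Your overall architecture coincides with the paper's: both parts are routed through the game (Lemma~\ref{bridge}), part~(\ref{black}) is witnessed by a finite structure read off from a finite proper shading, and part~(\ref{white}) is proved by giving \emph{Crocodile} a strategy that forces \emph{Fugitive} to build ever larger shaded grids until a violation is converted into a red $Q_0$-path from $a$ to $b$. One small remark on part~(\ref{black}): your detour through Lemmas~\ref{bridge} and~\ref{universal} is unnecessary. The paper simply takes $\GG{m}{\$}$, copies onto it the shades of the given proper shading, checks (Exercise~\ref{noreqG}) that the result satisfies $\mathcal{Q}^{\leftrightarrow}$ while $(G(Q_0))(a,b)$ holds and $(R(Q_0))(a,b)$ fails, and invokes Lemma~\ref{lm-det-struct} directly; no game argument is needed in that direction, and finiteness of the counterexample is immediate.

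For part~(\ref{white}) your plan is compatible with the paper's proof but is missing one concrete idea: a compactness step. Membership in $\mathcal{B}$ only says that no square grid, finite or infinite, admits a proper shading; from this alone ``some violation is unavoidable'' does not yet give \emph{Crocodile} a strategy, because condition (b2) is only testable at the moment \emph{Fugitive} closes off the grid, and he may postpone closing forever while shading locally without forbidden pairs. The paper resolves this with Lemma~\ref{konig} (K\"onig's lemma): there is a single finite $m$, computable from nothing but the nonexistence of proper shadings, such that every shading of every grid of side at least $m$ already violates one of (a1), (a2), (b1), (b3). \emph{Crocodile's} strategy is then fixed in advance: force either $\GG{m+1}{\empty}$ (where (b3) must fail) or some $\GG{k}{\$}$ with $k\le m$ (where (b2) or (b3) must fail), and in either case a path labelled by a word of $R(\mathcal{Q}_{bad})\subseteq R(Q_0)$ already connects $a$ to $b$. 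Everything you defer as ``the delicate core'' is indeed where the paper spends Sections~\ref{przewodnik}--\ref{ulga}: the three Principles (enforced by placing $\mathcal{Q}_{bad}\cup\mathcal{Q}_{ugly}$ both in $\mathcal{Q}$ and in $Q_0$, which also disposes of \emph{Fugitive} opening with a word outside $Q_{start}$ --- an issue your plan does not address), and the explicit strategy sequences forcing the intermediate structures $\PP{m}{\empty}$, $\LL{m}{k}{\empty}$ and $\GG{m}{\empty}$, which is the concrete realisation of your proposed ``progress measure.''
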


Now the need for notion of recursive inseparability should be clear. Imagine, for the sake of contradiction, that we have an algorithm $ALG$ deciding determinacy in either finite or unrestricted case. Then, in both cases, algorithm $ALG \circ \reduction$ would separate $\mathcal{A}$ and $\mathcal{B}$, which contradicts the recursive inseparability of $\mathcal{A}$ and $\mathcal{B}$ (Lemma~\ref{turing-grid-lemma}). That will be enough to prove Theorem~\ref{i1}.

\section{The function $\reduction$}\label{redukcja}

Now we define a function $\reduction$, as specified in Section~\ref{source}, from the set of instances of OGTP to the set of instances of QDP-FRPQ. Suppose an instance $\langle \mathcal{S},\mathcal{F}\rangle$ of OGTP is given. We will construct an instance $\langle \mathcal{Q}, Q_0\rangle = \reduction(\pair{\mathcal{S},\mathcal{F}})$ of QDP-FRPQ.  The edge alphabet (signature) will be: $\Sigma = \{\alpha^{C}, \alpha^{W}, x^{C}, x^{W}, y^{C}, y^{W}, \$^{C}, \$^{W}, \omega\} \cup \Sigma_{0}$
where $\Sigma_{0} = \{\textbf{A},\textbf{B}\} \times \{\hor, \ver\} \times \{W,C\} \times \mathcal{S}$. We think of
$H$ and $V$ as {\bf orientations} -- \textit{Horizontal} and \textit{Vertical}. $W$ and $C$ stand for \textit{warm} and \textit{cold}.
It is worth reminding at this point that relations from $\bar\Sigma$ will -- apart from shade, orientation and temperature -- have also a {\bf color}, red or green.

\begin{notation}
\label{notation-cc}
We will denote $(\mathbf{l},o,t,s) \in \Sigma_{0}$ as $\CC{l}{o}{t}{s}{\empty}$.

Symbol $\bullet$ and empty space are to be understood as wildcards. 
This means, for example,  that $\CC{A}{\hor}{}{s}{\empty}$ denotes the set $\{\CC{A}{\hor}{W}{s}{\empty}, \CC{A}{\hor}{C}{s}{\empty} \}$ and $\CC{\bullet}{\hor}{W}{s}{\empty}$ denotes $\{\CC{A}{\hor}{W}{s}{\empty}, \CC{B}{\hor}{W}{s}{\empty} \}$. 

Symbols from $\CC{\bullet}{\empty}{W}{\empty}{\empty}$ and $\{\alpha^{W}, x^{W}, y^{W}, \$^{W}\}$ will be called \textbf{warm} and symbols from $\CC{\bullet}{\empty}{C}{\empty}{\empty}$ and $\{\alpha^{C}, x^{C}, y^{C}, \$^{C}\}$ will be called \textbf{cold}.
\end{notation}

\noindent
Now we  define $\mathcal{Q}$ and $Q_{0}$. 
Let $\mathcal{Q}_{good}$ be a set of 15 languages:
\begin{multicols}{2}
\begin{enumerate}
\item $\omega$
\item $\alpha^{C} + \alpha^{W}$
\item $x^{C} + x^{W}$
\item $y^{C} + y^{W}$
\item $\$^{C} + \$^{W}$
\item $\CC{B}{\ver}{C}{\empty}{\empty} + \CC{B}{\ver}{W}{\empty}{\empty}$
\item $\CC{B}{\hor}{W}{\empty}{\empty} + \CC{B}{\hor}{C}{\empty}{\empty}$
\item $\CC{A}{\ver}{W}{\empty}{\empty} + \CC{A}{\ver}{C}{\empty}{\empty}$
\item $\CC{A}{\hor}{C}{\empty}{\empty} + \CC{A}{\hor}{W}{\empty}{\empty}$
\item $\CC{B}{\hor}{W}{\empty}{\empty}\CC{A}{\ver}{W}{\empty}{\empty} + \CC{B}{\ver}{C}{\empty}{\empty}\CC{A}{\hor}{C}{\empty}{\empty}$
\item $\CC{A}{\hor}{C}{\empty}{\empty}\CC{B}{\ver}{C}{\empty}{\empty} + \CC{A}{\ver}{W}{\empty}{\empty}\CC{B}{\hor}{W}{\empty}{\empty}$
\item $x^{C} \hspace{-1mm}\left(\CC{A}{\hor}{C}{\empty}{\empty}\hspace{-1mm} + \hspace{-1mm}\CC{B}{\hor}{C}{\empty}{\empty} \hspace{-1mm}+ \hspace{-1mm}\CC{A}{\ver}{C}{\empty}{\empty} \hspace{-1mm}+ \hspace{-1mm}\CC{B}{\ver}{C}{\empty}{\empty} \right) \hspace{-1mm}+\hspace{-1mm}
	   x^{C} +
       x^{W}$
\item $\left(\CC{A}{\hor}{C}{\empty}{\empty} \hspace{-1mm}+ \hspace{-1mm}\CC{B}{\hor}{C}{\empty}{\empty} \hspace{-1mm}+\hspace{-1mm} \CC{A}{\ver}{C}{\empty}{\empty} \hspace{-1mm}+\hspace{-1mm} \CC{B}{\ver}{C}{\empty}{\empty} \right)y^{C} \hspace{-1mm}+\hspace{-1mm}
	   y^{C} \hspace{-1mm}+\hspace{-1mm}
       y^{W}$
\item $x^{W} + x^{C} + x^{C}\CC{A}{\hor}{C}{\empty}{\empty}\CC{B}{\ver}{C}{\empty}{\empty}$
\item $y^{W} + \$^{C} + \CC{A}{\hor}{C}{\empty}{\empty}\CC{B}{\ver}{C}{\empty}{\empty}y^{C} + \CC{B}{\ver}{C}{\empty}{\empty}y^{C}$
\end{enumerate}
\end{multicols}

\noindent
\hspace{-1mm}
Let $\mathcal{Q}_{bad}$ be a set of languages:
\begin{enumerate}
\item $\alpha^{W} x^{W} \CC{\bullet}{d}{W}{s}{\empty} \CC{\bullet}{d'}{W}{s'}{\empty} y^{W} \omega$ for each forbidden pair $\langle(d,s), (d',s')\rangle \in \mathcal{F}$.
\item $\alpha^{W} x^{W} \CC{B}{\ver}{W}{shade}{\empty} \$^{W} \omega$ for each $shade \in \mathcal{S} \setminus \{black\}$.
\end{enumerate}

\noindent
Finally, let $\mathcal{Q}_{ugly}$ be a set of languages:  1. $\alpha^{C}\Sigma^{\leq 4}\CC{\bullet}{\empty}{W}{\empty}{\empty}\Sigma^{\leq 4}\omega$, 2. $\alpha^{W}\Sigma^{\leq 4}\CC{\bullet}{\empty}{C}{\empty}{\empty}\Sigma^{\leq 4}\omega$ and 3.~$\alpha^C x^C \CC{B}{\ver}{C}{\empty}{\empty}\CC{B}{\ver}{C}{\empty}{\empty}y^C\omega$. Where $\Sigma^{\leq 4}$ is language over $\Sigma$ of words shorter than $5$.

\noindent

We write ${Q}_{good}^{i}, {Q}_{bad}^{i}, {Q}_{ugly}^{i}$ to denote the i-th language of the corresponding group. Now we can define: $\mathcal{Q} := \mathcal{Q}_{good} \cup \mathcal{Q}_{bad} \cup \mathcal{Q}_{ugly}$

The sense of the construction will (hopefully) become clear later, but we can already say something about the structure of the defined languages.

Languages from $Q_{good}$ serve as a building blocks during the game of Escape, they are used solely to build a shaded square grid structure, that will correspond to some tiling of square grid. On the other hand the languages from $Q_{bad}$ and $Q_{ugly}$ serve as a ``stick'' during play. It will become clear later that whenever {\em Fugitive} has to answer a request generated by one of languages in $Q_{bad}$ or $Q_{ugly}$ it is already too late for him to win. More precisely every request generated by $Q_{bad}$ is due to { \em Fugitive} assigning shades to grid in a forbidden manner and leads to {\em Fugitive's} swift demise. Similarly requests generated by $Q_{ugly}$ make {\em Fugitive} to use ``right'' building blocks from $Q_{good}$, by forcing that edge is red if and only if it is {\em warm} (this will be thoroughly explained in Section~\ref{principles-section}), or else he will loose. Third language from $Q_{ugly}$ ensures that there won't be two consecutive $\CC{B}{\ver}{C}{\empty}{\empty}$ symbols in the structure, it is a feature used exclusively in proof of Lemma~\ref{Pk}.

Finally, define  $Q_{start} := \alpha^{C} x^{C} \CC{A}{\hor}{C}{gray}{\empty}\CC{B}{\ver}{C}{\empty}{\empty}  y^{C} \omega$, and let: 

\hspace{-2mm}
$$Q_{0} := Q_{start} + \bigoplus_{L \in \mathcal{Q}_{ugly}} {L} + \bigoplus_{L \in \mathcal{Q}_{bad}} {L}$$
\hspace{-2mm}

Intuitively, we need to put the languages $Q_{bad}$ and $Q_{ugly}$ into $Q_{0}$ so that they can serve as aforementioned ``sticks'' such that whenever {\em Fugitive} makes a mistake he loses. Recall that {\em Fugitive} loses when he writes a red word from $Q_0$ connecting the beginning and the end of the starting structure. On the other side, this creates a problem regarding the starting structure (now {\em Fugitive} can pick a word from $Q_{bad}$ or $Q_{ugly}$ to start from instead of a word from $Q_{start}$), but fortunately it can be dealt with and will be resolved in Section~\ref{principles-section}.

\section{Structure  of the proof of Lemma~5.7}\label{przewodnik}

The rest of the paper will be devoted to the proof of Lemma~\ref{main-lemma} (restated):

\wazny*

Proof of claim (\ref{black}) -- which will be presented in the end of Section \ref{ulga} -- will  be straightforward once the reader grasps the (slightly complicated) constructions that will 
emerge in the proof of claim (\ref{white}).

For the proof of claim (\ref{white}) we will employ Lemma~\ref{bridge}, showing that if the instance $\langle {\mathcal{S}},{\mathcal{F}} \rangle$ has no proper shading then 
{\em Crocodile} {\bf does have} a winning strategy in Escape($\mathcal{Q}$, $Q_{0}$)  (where $\pair{\mathcal{Q},Q_{0}} = \reduction(\pair{\mathcal{S},\mathcal{F}})$). 
As we remember from Section \ref{game}, in such a game {\em Fugitive} will first choose, as the initial position of the game,  a structure $\database_0= w[a,b]$ 
for some $w\in G(Q_{0})$. Then, in each step, {\em Crocodile} will pick a request in the current structure (current position of the game) $\database$ and {\em Fugitive} will satisfy this request, creating a new (slightly bigger) current $\database$. {\em Fugitive} will win if he will be able to play forever (by which, formally speaking, we mean $\omega^2$ steps, for more details see Section~\ref{game}), or until all requests are satisfied, without satisfying (in the constructed structure) the query $(R(Q_{0}))(a,b)$. 
While talking about the strategy of {\em Fugitive} we will use the words ``must not'' and ``must'' as shorthands for
``or otherwise he will quickly lose the game''. The expression  ``{\em Fugitive} is forced to'' will also have this meaning.

Analysing a two-player game (proving that certain player has a winning strategy) sounds like a complicated task: 
there is this (infinite) alternating tree of positions,  whose structure  somehow needs to be translated into a system of lemmas.
In order to  prune this game tree our plan is first to notice that in his strategy {\em Fugitive} must obey the following principles:\smallskip\\
(I)~The structure $\database_0$ resulting from his initial move must be $(G(w))[a,b]$ for some $w\in Q_{start}$.\\
(II)~He must not allow any green edge with  warm label and any red edge with cold label to appear in $\database$. \\
(III)~He must never allow any path labelled by a word from the language $G(\mathcal{Q}_{bad}) \cup R(\mathcal{Q}_{bad})$ to occur between vertices $a$ and $b$.

Then we will assume that {\em Fugitive's} play indeed follows the three principles and we will present a strategy for 
{\em Crocodile} which will be winning against {\em Fugitive}. From the point of view of {\em Crocodile's} operational objectives this strategy comprises three stages.

In each of these stages {\em Crocodile's} operational goal will be to force {\em Fugitive} to build some specified structure
(where, of course all the specified structures will be superstructures of $\database_0$).
In the first stage {\em Fugitive} will be forced to build a structure called $\PP{1}{\empty}$ (defined in Section~\ref{pm}). In the second stage the specified structures will be called $\PP{m}{\empty}$ and $\PP{m}{\$}$
(each defined in Section~\ref{pm})
and in the third stage  {\em Fugitive} 
will be forced to construct one of the structures $\GG{m}{\empty}$ or $\LL{m}{k}{\empty}$ (defined in Section~\ref{mgrid})

During the three stages of his play {\em Crocodile} will only pick requests from the languages in $\mathcal{Q}_{good}$. These languages, as we said before,
are shade-insensitive, so we can imagine {\em Crocodile} playing in a sort of shade filtering glasses. Of course  {\em Fugitive}, when responding to {\em Crocodile's} requests, will need to commit on the shades of the symbols he will use, but {\em Crocodile's} actions will not depend on these shades. 

The shades will however play their part after the end of the third stage.
Assuming that the original instance of OGTP has no proper shading, we will get that, at this moment, $R(\mathcal{Q}_{bad})(a,b)$ already  holds true in the 
 structure  {\em Fugitive} was forced to construct.  This will end the proof of (\ref{white}).

\section{Principles of the Game} \label{principles-section}

The rules of the game of Escape are such that {\em Fugitive} loses when
he builds a path (from $a$ to $b$) labelled with $w\in R(Q_0)$. So -- when
trying to encode something --  one can think of words in $Q_0$ as of some sort of
forbidden patterns. And thus one can think of  $Q_0$ as of
a tool detecting that {\em Fugitive} is cheating and not really building a
valid computation of the computing device we encode. Having this in
mind the reader can imagine why the
words from languages from the sets 
 ${\mathcal Q}_{bad}$  and  ${\mathcal Q}_{ugly}$,  which clearly are all about suspiciously looking patterns,
 are all in $Q_0$.

But  another  rule of the game is that at the beginning  {\em Fugitive}
picks his initial position ${\mathbb D}_0$ as a path  (from $a$ to $b$)
labelled with some $w\in G(Q_0)$, so it would be nice to think of $Q_0$
as of initial configurations of this computing device. The fact that the same object 
is playing the set of forbidden patterns and, at the same time, the set of initial configurations
is a problem. We solved it by having languages from $\mathcal{Q}_{ugly} \cup \mathcal{Q}_{bad}$ both 
in $\mathcal{Q}$ and in $Q_0$:

\begin{lemma}[\textbf{Principle I}]\label{principle-i}
{\em Fugitive} must choose to start the \textit{Escape} game\\ from \mbox{$\database_{0} = G(w)[a,b]$ for some $w \in Q_{start}$.}
\end{lemma}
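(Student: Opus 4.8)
The plan is to show that every word $w \in Q_0 \setminus Q_{start}$ is a "bad" pattern in the sense that, if {\em Fugitive} were to start from $\database_0 = G(w)[a,b]$, then already in $\database_0$ (or after very few forced moves) a red copy of some word from $Q_0$ would be forced between $a$ and $b$, so {\em Fugitive} loses immediately. Recall that $Q_0 = Q_{start} + \bigoplus_{L\in\mathcal{Q}_{ugly}} L + \bigoplus_{L\in\mathcal{Q}_{bad}} L$, so a word $w\in Q_0$ that is not in $Q_{start}$ must belong to some $\mathcal{Q}_{ugly}^i$ or $\mathcal{Q}_{bad}^i$. The key observation is that each such language $L$ is itself a member of $\mathcal{Q}$, so the constraint $L^\rightarrow = \rtgd{G(L)}{R(L)}$ is active in the game. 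Thus if $\database_0 = G(w)[a,b]$ with $w\in L$, then $\database_0 \models G(L)(a,b)$, so {\em Crocodile} can pick the request $\pair{a,b,L^\rightarrow}$, forcing {\em Fugitive} to add a path from $a$ to $b$ labelled by some word of $R(L) \subseteq R(Q_0)$. The moment that path is added we have $\history \models (R(Q_0))(a,b)$, and {\em Fugitive} has lost.

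The first step is therefore to verify the purely syntactic fact that $Q_{start}$, $\mathcal{Q}_{ugly}^1, \mathcal{Q}_{ugly}^2, \mathcal{Q}_{ugly}^3$ and each $\mathcal{Q}_{bad}^i$ are pairwise "incompatible" in the following weak sense: any $w\in Q_0$ either lies in $Q_{start}$, or lies in some $L\in\mathcal{Q}_{ugly}\cup\mathcal{Q}_{bad}$ (these are not mutually exclusive, but that is fine — one membership in $\mathcal{Q}_{ugly}\cup\mathcal{Q}_{bad}$ suffices). This is immediate from the definition of $Q_0$ as a disjoint union of these languages, and from the fact that $\mathcal{Q}_{ugly}\cup\mathcal{Q}_{bad}\subseteq\mathcal{Q}$. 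The second step is the game-theoretic argument just sketched: given $w\in L$ for some $L\in\mathcal{Q}_{ugly}\cup\mathcal{Q}_{bad}$, {\em Crocodile} plays the request $\pair{a,b,L^\rightarrow}$ at move $1$; whatever word of $R(L)$ {\em Fugitive} uses to answer, it is a word of $R(Q_0)$ between $a$ and $b$, so the final position satisfies $(R(Q_0))(a,b)$ and {\em Fugitive} loses. Hence any start outside $Q_{start}$ is losing, which is exactly the statement. One should also note the degenerate possibility that $w\in Q_{start}\cap L$ for some bad $L$; here one checks directly from the explicit forms ($Q_{start}$ begins $\alpha^C x^C \CC{A}{\hor}{C}{gray}{\empty}\cdots$, whereas $\mathcal{Q}_{bad}$ words begin $\alpha^W\cdots$, $\mathcal{Q}_{ugly}^1$ begins $\alpha^C\Sigma^{\le4}\CC{\bullet}{}{W}{}{\empty}\cdots$, $\mathcal{Q}_{ugly}^2$ begins $\alpha^W\cdots$, $\mathcal{Q}_{ugly}^3$ begins $\alpha^C x^C\CC{B}{\ver}{C}{\empty}{\empty}\cdots$) that this intersection is empty, so the case does not arise; or, more simply, one observes that even if it were nonempty, starting from such a $w$ would still be losing by the same argument, so {\em Fugitive} gains nothing.

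The main obstacle — really the only subtlety — is bookkeeping: one must be sure that for {\em every} word $w$ that {\em Fugitive} could legally pick outside $Q_{start}$, there is a corresponding language in $\mathcal{Q}$ witnessing the forced red path. This is guaranteed structurally because $Q_0$ was defined precisely as $Q_{start}$ plus the $\mathcal{Q}_{ugly}$ and $\mathcal{Q}_{bad}$ languages, all of which were placed into $\mathcal{Q}$ for exactly this purpose (as the informal discussion after the definition of $\reduction$ already anticipates). So the proof reduces to invoking the definition of $Q_0$, the inclusion $\mathcal{Q}_{ugly}\cup\mathcal{Q}_{bad}\subseteq\mathcal{Q}$, and the losing condition of Escape. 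I would write it as: let $w\in Q_0$ be {\em Fugitive}'s chosen initial word; if $w\notin Q_{start}$ then $w\in L$ for some $L\in\mathcal{Q}_{ugly}\cup\mathcal{Q}_{bad}\subseteq\mathcal{Q}$; then $\database_0=G(w)[a,b]\models G(L)(a,b)$, so $\pair{a,b,L^\rightarrow}\in\requests(\mathcal{Q}^\leftrightarrow,\database_0)$ and {\em Crocodile} picks it, forcing a path in $R(L)\subseteq R(Q_0)$ from $a$ to $b$; hence $\history\models(R(Q_0))(a,b)$ and {\em Fugitive} loses. Therefore a winning {\em Fugitive} must start from $G(w)[a,b]$ with $w\in Q_{start}$.
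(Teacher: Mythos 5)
Your proposal is correct and follows essentially the same route as the paper: any initial word outside $Q_{start}$ lies in some $L\in\mathcal{Q}_{ugly}\cup\mathcal{Q}_{bad}\subseteq\mathcal{Q}$, so \emph{Crocodile} immediately picks the request $\pair{a,b,\rtgd{G(L)}{R(L)}}$ and the forced answer produces a path in $R(L)\subseteq R(Q_0)$ from $a$ to $b$. Your extra check that $Q_{start}$ is disjoint from the bad and ugly languages (and the remark that it would not matter anyway) is a harmless refinement of what the paper leaves implicit.
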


Notice that, from the point of view of the  shades-blind {\em Crocodile}  the words in $Q_{start}$ are indistinguishable and thus 
{\em Fugitive} only has one possible choice of $\database_{0}$. 
\begin{proof}
If $\database_{0} = G(q)[a,b]$ for $q \in Q_{0} \setminus Q_{start}$ then $\database_{0} \models \Path{G(L)}{a,b}$ for some $L \in \mathcal{Q}_{ugly} \cup \mathcal{Q}_{bad}$. Then in the next step {\em Crocodile} can pick request $\pair{a,b,\rtgd{G(L)}{R(L)}}$. After {\em Fugitive} satisfies this request, a structure $\database_{1}$ is created such that $\database_{1} \models \Path{R(L)}{a,b}$ and {\em Crocodile} wins.
\end{proof}

From now on we assume that {\em Fugitive} obeys Principle I. This implies that for some $w\in Q_{start}$ structure $G(w)[a,b]$ is contained in all subsequent structures $\database$ at each step of the game.

Now we will formalise the intuition about languages from $\mathcal{Q}_{ugly}$ as forbidden patterns. We start with an observation that simplifies reasoning in the proof of Principle II.

\begin{observation}
\label{obs-princ-2}
For some vertices $u,v$ in the current structure $ \database$ if there is a green (red) edge between them then {\em Crocodile} can force {\em Fugitive} to draw a red (green) edge between $u$ and $v$.
\end{observation}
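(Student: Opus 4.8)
\textbf{Plan for the proof of Observation~\ref{obs-princ-2}.}

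The idea is that the symmetric structure of the regular constraints in $\mathcal{Q}^\leftrightarrow$ always provides a ``mirror'' constraint that forces the opposite colour. First I would recall that for each of the single-letter (or rather single-symbol) languages in $\mathcal{Q}_{good}$ -- items~1--9 -- the language is of the form $\sigma_1 + \sigma_2$ where $\sigma_1,\sigma_2$ differ only in temperature (e.g.\ $\alpha^C+\alpha^W$, or $\CC{A}{\hor}{C}{\empty}{\empty}+\CC{A}{\hor}{W}{\empty}{\empty}$), and in any case each bare symbol of $\Sigma$ appears as a one-letter word in some language $L\in\mathcal{Q}_{good}$ that consists only of one-letter words. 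Hence, given a green edge $G(e)(u,v)$ in the current structure $\database$, pick the language $L$ containing $e$ as a one-letter word; then $\database\models G(L)(u,v)$. If $\database\not\models R(L)(u,v)$, then $\pair{u,v,L^\rightarrow}=\pair{u,v,\rtgd{G(L)}{R(L)}}\in\requests(\mathcal{Q}^\leftrightarrow,\database)$, so \emph{Crocodile} may pick this request; whatever word $w'\in R(L)$ \emph{Fugitive} uses to answer it, $w'$ is a one-letter word (as $L$ has only one-letter words), so a single red edge with some label gets added between $u$ and $v$. The argument for the red-to-green direction is symmetric, using $L^\leftarrow=\rtgd{R(L)}{G(L)}$.

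The one subtlety is that the added red edge need not carry the \emph{same} label as the original green edge $e$ -- \emph{Fugitive} may choose any of the (at most two) one-letter words in $R(L)$, which differ in temperature. But the statement of the Observation only asks for \emph{some} red edge between $u$ and $v$, so this is harmless; moreover, once Principle~II is available one knows the temperatures are in fact pinned down by the colour, but that is not needed here. I would also note the trivial case: if $\database$ already contains a red edge between $u$ and $v$ there is nothing to force, and if $\database\models R(L)(u,v)$ for the chosen $L$ then again a red edge is already present, so in every case a red edge between $u$ and $v$ is guaranteed after \emph{Crocodile}'s move.

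I do not expect any real obstacle here; the only thing to be careful about is to verify the purely syntactic fact that every symbol of $\Sigma$ occurs as a one-letter word of some all-one-letter-word language in $\mathcal{Q}_{good}$. Inspecting the list: $\omega$ is language~1; $\alpha^C,\alpha^W$ are in~2; $x^C,x^W$ in~3; $y^C,y^W$ in~4; $\$^C,\$^W$ in~5; and all eight symbols of $\Sigma_0$ of the form $\CC{B}{\ver}{\bullet}{\empty}{\empty},\CC{B}{\hor}{\bullet}{\empty}{\empty},\CC{A}{\ver}{\bullet}{\empty}{\empty},\CC{A}{\hor}{\bullet}{\empty}{\empty}$ occur in~6--9. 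So every symbol is covered by a language all of whose words have length one, which is exactly what makes the ``answer must be a single edge'' step go through. With that check in place the Observation follows immediately.
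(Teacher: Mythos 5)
Your proof is correct and takes essentially the same route as the paper, whose entire argument for this Observation is the one-line remark that it follows from languages $1$--$9$ of $\mathcal{Q}_{good}$; you have simply spelled out the syntactic check (every symbol of $\Sigma$ occurs as a one-letter word of one of these all-one-letter-word languages) and the resulting request that \emph{Crocodile} picks. Nothing further is needed.
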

\begin{proof}
It is possible  due to languages $1-9$ in $\mathcal{Q}_{good}$.
\end{proof}

\begin{definition}
A \textit{P2-ready\footnote{Meaning ``ready for Principle II''.}} structure $\database$ is a structure satisfying the following:
\begin{itemize}
    \item $\database_{0}$ is a substructure of $\database$;
    \item there are only two edges incident to $a$: $\pair{a,a'}$ with label $G(\alpha^{C})$ and $\pair{a,a'}$ with label $R(\alpha^{W})$;
    \item all edges labeled with $\alpha^C$ and $\alpha^W$ are between $a$ and $a'$;
    \item there are only two edges incident to $b$: $\pair{b',b}$ with label $G(\omega)$ and  $\pair{b',b}$ with label $R(\omega)$;
    \item all edges labeled with $\omega$ are between $b'$ and $b$;
    \item for each $v \in \database \setminus \{a,b\}$ there is a directed path in $\database$, of length at most $4$, from $a'$ to $v$ and  there is a directed path in $\database$, of length at most $4$, from $v$ to $b'$.
\end{itemize}
\end{definition}

\begin{lemma}[\textbf{Principle II}]\label{principle-ii}
Suppose that, after {\em Fugitive's} move, the current structure $\database$ is a P2-ready structure. Then neither a green edge with label from $\CC{\bullet}{\empty}{W}{\empty}{\empty}$ nor a red edge with label from $\CC{\bullet}{\empty}{C}{\empty}{\empty}$ may appear in $\database$.
\end{lemma}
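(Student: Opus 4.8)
\textbf{Proof plan for Principle II (Lemma~\ref{principle-ii}).}

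The plan is to argue by contradiction: suppose that after some move of \emph{Fugitive} the current structure $\database$ is P2-ready, yet it contains a green edge $e$ with label from $\CC{\bullet}{\empty}{W}{\empty}{\empty}$ (the red-cold case is entirely symmetric, swapping the roles of green/red and warm/cold, and using the mirror-image languages of $\mathcal{Q}_{ugly}$, so I would only write out one case in full). Say $e$ goes from $u$ to $v$. The idea is to show that \emph{Crocodile} can then force a red path from $a$ to $b$ labelled by a word in $R(Q_{ugly}^{1})$ or $R(Q_{ugly}^{2})$, hence (since these languages lie in $Q_0$) win the game, contradicting the standing assumption that \emph{Fugitive} plays a winning strategy.

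First I would use the P2-readiness hypothesis to locate $e$ inside a short ``sandwich'' between $a$ and $b$: there is a directed path of length at most $4$ from $a'$ to $u$ and one of length at most $4$ from $v$ to $b'$, and by the structure of the edges at $a$ and $b$ we can prepend the edge $a\to a'$ and append $b'\to b$. This exhibits a directed path from $a$ to $b$ of the form $\alpha\,(\text{$\le 4$ edges})\,e\,(\text{$\le 4$ edges})\,\omega$, i.e. a path matching the pattern $\alpha^{?}\Sigma^{\leq 4}\CC{\bullet}{\empty}{W}{\empty}{\empty}\Sigma^{\leq 4}\omega$ — except that the colours of the various edges along it are not yet controlled. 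Now I invoke Observation~\ref{obs-princ-2}: since the edge $e$ is present, \emph{Crocodile} can force the ``opposite-colour'' copy of every edge on this path; in particular, \emph{Crocodile} forces a red copy of the $\alpha$-edge and the $\omega$-edge, and forces the colour of the middle edge $e$ to be whatever is needed. The key point is that $e$ carries a \emph{warm} label, so its red copy is a red warm edge, which is exactly what $Q_{ugly}^{1} = \alpha^{C}\Sigma^{\leq 4}\CC{\bullet}{\empty}{W}{\empty}{\empty}\Sigma^{\leq 4}\omega$ forbids once recoloured — wait, here I must be careful about which of the three $\mathcal{Q}_{ugly}$ languages applies and with which colour pattern.

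Let me restate the decisive step more precisely. The language $Q_{ugly}^{1}=\alpha^{C}\Sigma^{\leq 4}\CC{\bullet}{\empty}{W}{\empty}{\empty}\Sigma^{\leq 4}\omega$ is in $\mathcal{Q}$, so the RC $R(Q_{ugly}^{1})\to G(Q_{ugly}^{1})$ and its mirror $G(Q_{ugly}^{1})\to R(Q_{ugly}^{1})$ are both active. The path I built above, after \emph{Crocodile} uses Observation~\ref{obs-princ-2} on each of its $\le 10$ edges to pull them all into green, is a green path from $a$ to $b$ labelled by a word of $G(Q_{ugly}^{1})$; but $G(\alpha^{C})$ is a green \emph{cold} $\alpha$-edge, which is fine since $\alpha^C$ is exactly the cold letter — so the green word genuinely lies in $G(Q_{ugly}^{1})$. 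Hence \emph{Crocodile} next picks the request $\langle a,b,G(Q_{ugly}^{1})\to R(Q_{ugly}^{1})\rangle$, and after \emph{Fugitive} answers, a red path from $a$ to $b$ labelled by a word in $R(Q_{ugly}^{1})\subseteq R(Q_0)$ exists, so \emph{Fugitive} has lost. This contradiction proves the lemma. The main obstacle I anticipate is bookkeeping: verifying that the short path guaranteed by P2-readiness, together with the two distinguished edges at $a$ and $b$, really does match $\alpha^{?}\Sigma^{\leq 4}\CC{\bullet}{\empty}{W}{\empty}{\empty}\Sigma^{\leq 4}\omega$ with the $\CC{\bullet}{\empty}{W}{\empty}{\empty}$ landing on the offending warm edge (in particular that the two length-$\le 4$ segments on either side of $e$ fit within the $\Sigma^{\leq 4}$ windows, which they do since the path from $a'$ to $u$ has length $\le 4$ and from $v$ to $b'$ has length $\le 4$), and checking that forcing all the auxiliary edges to one colour via Observation~\ref{obs-princ-2} does not itself create some other losing red pattern prematurely — but since those forced edges only duplicate existing edges in the other colour, no genuinely new path shape is introduced, so this is safe. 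I would also note explicitly that the argument never inspects shades, consistent with \emph{Crocodile} wearing ``shade-filtering glasses''.
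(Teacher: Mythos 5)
Your proposal is correct and follows essentially the same route as the paper's proof: use P2-readiness to sandwich the offending warm green edge inside an $\alpha^{C}\,\Sigma^{\leq 4}\,\CC{\bullet}{\empty}{W}{\empty}{\empty}\,\Sigma^{\leq 4}\,\omega$-shaped path from $a$ to $b$, apply Observation~\ref{obs-princ-2} to make the flanking edges green, and then fire the request generated by $G(Q_{ugly}^{1})$ (resp.\ $Q_{ugly}^{2}$ for the red-cold case) to force a red word of $R(Q_0)$ from $a$ to $b$. The self-correction in your middle paragraph lands exactly on the paper's argument, and your bookkeeping remarks (the $\Sigma^{\leq 4}$ windows absorbing the uncontrolled temperatures of the auxiliary edges) are the right things to check.
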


\begin{proof}
First suppose that there is such a green edge $e = \pair{u,v}$ with label $\CC{\bullet}{\empty}{W}{\empty}{\empty}$ in structure $\database$. Let us denote by $P$ a path from $a'$ to $b'$ through $e$. Observe that if some of the edges of $P$ are red then from Observation~\ref{obs-princ-2} in at most $8$ moves {\em Crocodile} can force {\em Fugitive} to create path $P'$ which goes through the same vertices as $P$ (and also through $e$) but consists only of green edges. Because of this path there is a request generated by $Q_{ugly}^1$ between $a$ and $b$ so in the next step {\em  Crocodile} can force {\em Fugitive} to create a red path connecting $a$ and $b$ labelled with a word from ${Q}_{ugly}^1$, which results in {\em Crocodile's} victory.

The second case is simmilar but uses $Q_{ugly}^2$ instead. 
\end{proof}

\begin{lemma}[\textbf{Principle III}]\label{principle-iii}
{\em Fugitive} must not allow any path labelled with a word from $R(\mathcal{Q}_{bad}) \cup G(\mathcal{Q}_{bad})$ to occur in the current structure $\database$ between vertices $a$ and $b$.
\end{lemma}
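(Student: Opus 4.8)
\textbf{Proof plan for Principle III (Lemma~\ref{principle-iii}).}

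The plan is to mimic the argument used for Principles~I and~II: show that the appearance of a ``bad'' path between $a$ and $b$ — in either colour — lets \emph{Crocodile} force a red word from $Q_0$ between $a$ and $b$ within a bounded number of moves, so that \emph{Fugitive} loses. First suppose a path labelled with some $w \in R(\mathcal{Q}_{bad})$ appears between $a$ and $b$ in the current structure $\database$. Since every language in $\mathcal{Q}_{bad}$ is, by construction, one of the summands of $Q_0$, we have $\database \models (R(L))(a,b)$ for the corresponding $L \in \mathcal{Q}_{bad}$, hence $\database \models (R(Q_0))(a,b)$, and \emph{Fugitive} has already lost outright — nothing further is needed in this case.

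The substantive case is a \emph{green} bad path, i.e. a path from $a$ to $b$ labelled by some $w \in G(L)$ for $L \in \mathcal{Q}_{bad}$. Here the idea is to use Observation~\ref{obs-princ-2} to recolour this path edge by edge: since each edge of the green path lies between two vertices already joined by a green edge, in at most (length of $w$) many moves \emph{Crocodile} can force \emph{Fugitive} to lay down a parallel red edge for each of them, producing a path from $a$ to $b$ labelled by the red copy $w' \in R(L)$. (One should note that the languages $L \in \mathcal{Q}_{bad}$ have a fixed bounded length — each is of the form $\alpha^{W} x^{W} \CC{\bullet}{d}{W}{s}{\empty}\CC{\bullet}{d'}{W}{s'}{\empty} y^{W}\omega$ or $\alpha^{W} x^{W} \CC{B}{\ver}{W}{shade}{\empty}\$^{W}\omega$, so the number of forced moves is a constant independent of the play.) Once this red bad path is present, $\database \models (R(L))(a,b)$, so again $\database \models (R(Q_0))(a,b)$ and \emph{Crocodile} wins. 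Thus in neither case can \emph{Fugitive} survive, which is exactly the assertion that he ``must not'' allow such a path.

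The one point that needs a little care — and the only place where this proof is not completely routine — is the applicability of Observation~\ref{obs-princ-2}: it requires, for each edge $\pair{u,v}$ of the green bad path, that there already be a green (or red) edge between $u$ and $v$ so that \emph{Crocodile} can force the opposite-colour edge via languages $1$--$9$ of $\mathcal{Q}_{good}$. For the green bad path this is immediate, each of its own edges being green. One also has to observe that adding these parallel red edges does not disturb the hypotheses needed elsewhere (it does not destroy P2-readiness, since the new edges are incident to interior vertices already reachable within the required distances); but for the purpose of this lemma we only need the game to reach a position witnessing $(R(Q_0))(a,b)$, which the recolouring achieves directly. Hence the lemma follows.
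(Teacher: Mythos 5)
Your first case (a red bad path means \emph{Fugitive} has already lost, since $\mathcal{Q}_{bad}\subset Q_0$) matches the paper exactly. The green case, however, has a genuine gap. You recolour the green bad path edge by edge via Observation~\ref{obs-princ-2}, and claim this yields a path from $a$ to $b$ labelled by ``the red copy $w'\in R(L)$''. But Observation~\ref{obs-princ-2} only guarantees a parallel red \emph{edge}, not a red edge with the \emph{same label}: the languages $Q_{good}^{1\text{--}9}$ that realise it are unions such as $\alpha^{C}+\alpha^{W}$, $x^{C}+x^{W}$ and $\CC{B}{\ver}{C}{\empty}{\empty}+\CC{B}{\ver}{W}{\empty}{\empty}$, so when satisfying each such request \emph{Fugitive} is free to choose the temperature and, for the $\Sigma_0$ letters, the shade of the new red edge (only the letter $\mathbf{l}$ and the orientation are pinned down). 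In particular he can pick shades that do not form a forbidden pair, and temperatures avoiding $Q_{start}$, $\mathcal{Q}_{ugly}$ and $\mathcal{Q}_{bad}$, so the resulting red word need not lie in $R(Q_0)$ at all and your recolouring does not force a loss. (Even invoking Principle~II to fix the temperatures would not fix the shades, which is exactly what the patterns in $Q_{bad}^{1}$ and $Q_{bad}^{2}$ depend on.)

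The paper closes this case in one move without any recolouring, using the fact that $\mathcal{Q}_{bad}\subseteq\mathcal{Q}$: the green bad path itself generates the request $\langle a,b,Q_{bad}^{i\rightarrow}\rangle$ for the regular constraint $\rtgd{G(Q_{bad}^{i})}{R(Q_{bad}^{i})}$, \emph{Crocodile} picks it, and \emph{every} way of satisfying it adds a red path from $a$ to $b$ labelled by some word of $R(Q_{bad}^{i})\subseteq R(Q_0)$ — here the choice left to \emph{Fugitive} is harmless because any word of the bad language is losing. You should replace your edge-by-edge recolouring with this single-request argument; this is also precisely why the languages of $\mathcal{Q}_{bad}$ were placed both in $\mathcal{Q}$ and in $Q_0$.
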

\begin{proof}
First consider a case where $\database \models R(\mathcal{Q}_{bad})(a,b)$. Then {\em Fugitive} has already lost as $\mathcal{Q}_{bad} \subset Q_{0}$.

The second case is when $\database \models G(\mathcal{Q}_{bad})(a,b)$ and $\database \not\models R(\mathcal{Q}_{bad})(a,b)$. Then {\em Crocodile} can pick request $\langle a,b,Q_{bad}^{i \rightarrow} \rangle$ (for some $i$) for {\em Fugitive} to satisfy. In both cases after at most one move {\em Fugitive} loses.
\end{proof}


\section{The paths $\PP{m}{\empty}$ and $\PP{m}{\$}$}
\label{pm}

\begin{definition}
(See  Figure~\ref{fig:fig1}, please use a color printer if you can) ${\mathbb P}_{m}$, for $m \in \mathbb{N}_{+}$, is a directed graph $(V,E)$ where $V = \{a, a', b', b\} \cup \{v_{i} : i \in [0,2m] \}$ and the edges $E$ are labelled with symbols from $\Sigma \setminus \Sigma_{0}$ or  with symbols of the form $\CC{l}{o}{t}{\empty}{\empty}$, where -- like before -- $\textbf{l}\in \{A,B\}$, $o\in \{\hor, \ver\}$  and $t\in \{W,C\}$. Each label has to also be either red or green. Notice that there is no $s\in \mathcal S$ here: the labels we now use are sets of symbols from $\bar\Sigma$ like in Notation~\ref{notation-cc}:  we watch the play in {\em Crocodile's} shade filtering glasses.

The edges of $\PP{m}{\empty}$ are as follows:
\begin{itemize}
\item Vertex $a'$ is a successor of $a$ and vertex $b$ is a successor of $b'$.
For each $i \in [2m]$ the successors of $v_i$ are $v_{i+1}$ (if it exists) and $b'$ and the predecessors of $v_i$ are $v_{i-1}$  (if it exists) and $a'$. From each node there are two edges to each of its successors, one red and one green, and there are no other edges.
\item Each \textit{Cold} edge (labelled with a symbol in $\CC{\bullet}{\empty}{C}{\empty}{\empty}$) is green.
\item Each \textit{Warm} edge (labelled with a symbol in $ \CC{\bullet}{\empty}{W}{\empty}{\empty}$) is red.
\item Each edge  $\langle v_{2i}, v_{2i+1}\rangle$ is from $ \CC{A}{\hor}{\empty}{\empty}{\empty}$.
\item Each edge $\langle v_{2i+1},v_{2i+2}\rangle$ is from $\CC{B}{\ver}{\empty}{\empty}{\empty}$.
\item Each edge $\langle a',v_{i} \rangle$ is labelled by either $x^{C}$ or $x^{W}$.
\item Each edge $\langle v_{i},b' \rangle$ is labelled by either $y^{C}$ or $y^{W}$.
\item Edges $\pair{a,a'}$ with label $G(\alpha^{C})$ and $\pair{a,a'}$  with label $R(\alpha^{W})$ are in $E$.
\item Edges $\pair{b',b}$ with label $G(\omega)$ and $\pair{b',b}$ with label $R(\omega)$ are in $E$.
\end{itemize}
\end{definition}

\begin{definition}
$\PP{m}{\$}$ for $m \in \mathbb{N}_{+}$ is $\PP{m}{\empty}$ with two additional edges:
 $\pair{v_{2m},b'} \in E$, with label $G(\$^{C})$,
and  $\pair{v_{2m},b'} \in E$, with label $R(\$^{W})$.
\end{definition}

One may notice\footnote{Not all anonymous reviewers equally appreciated our decision to use the ``exercise'' environment in this paper. In our opinion proving some simple facts themselves, rather than skipping the proofs, can help the readers to develop intuitions needed to understand what is to come in the paper. We discussed this issue with several colleagues and none of them felt that using this environment is arrogant.} that 
$\database_0 $ is a substructure of 
both $\PP{m}{\empty}$ and $\PP{m}{\$}$, and that:

\begin{exercise}\label{noreq}
For any $m$, the only requests generated by $\mathcal{Q}_{good}$ in $\PP{m}{\$}$ are those generated by $Q_{good}^{10}$ and $Q_{good}^{11}$. 
\end{exercise}

\begin{exercise}
Each $\PP{m}{\empty}$ and each $\PP{m}{\$}$ is a \textit{P2-ready} structure.
\end{exercise}

\begin{figure*}
\centering
\includegraphics[width=1.0\textwidth]{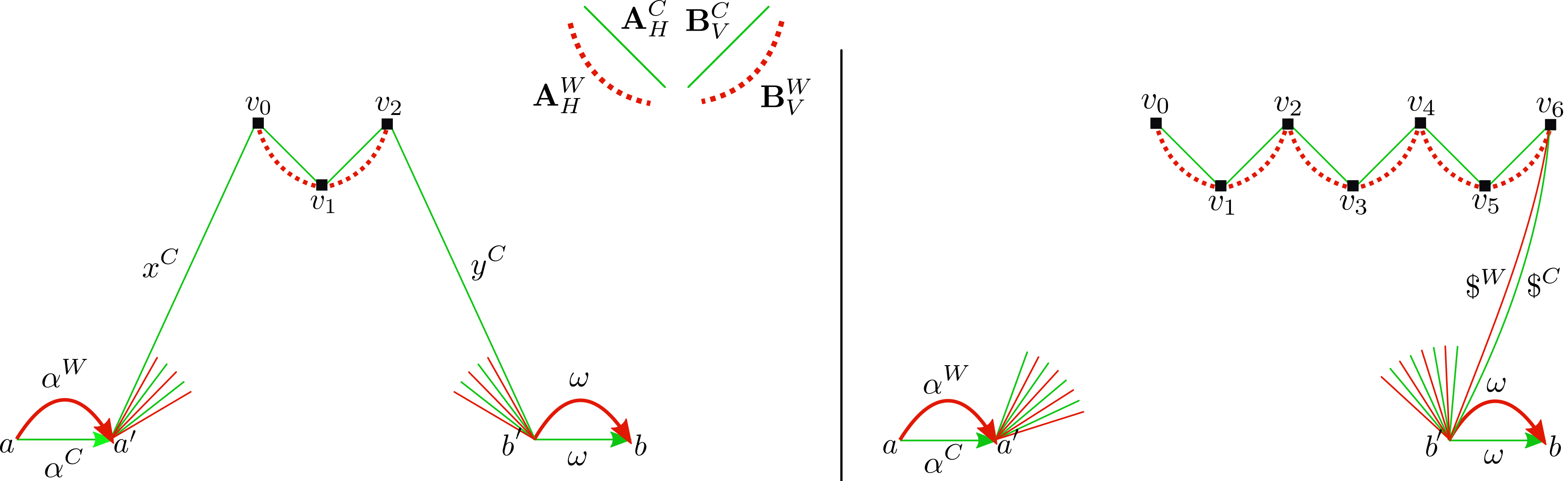}
\caption{\label{fig:fig1} $\PP{1}{\empty}$ (left) and $\PP{3}{\$}$ (right).}
\end{figure*}




\section{Stage I}\label{straszna}
Recall that untill the end of Section~\ref{phase2} we watch, analyse {\em Fugitive's} and \textit{Crocodile's} play in shade filtering glasses. And we assume that {\em Fugitive} obeys Principle I and III.

\begin{definition}[\textbf{Crocodile's strategy}]
The sequence of languages $S = (l_{1}, l_{2}, \dots, l_{n})$, for some $n \in \naturals$, defines a strategy for \textit{Crocodile} as follows:
If $S = (l) \concat S'$ (where \concat denotes sequence concatenation) then \textit{Crocodile} demands \textit{Fugitive} to satisfy requests generated by $l$ one by one (in any order) until (it can take infinitely many steps) there are no more requests generated by $l$ in the current structure. Then\footnote{For this ``then'' to make sense we need the total number of moves of the game to be $\omega^2$ rather than $\omega$.}  \textit{Crocodile} proceeds with strategy $S'$.
\end{definition}

Now we define a set of strategies for  {\em Crocodile}. All languages that will appear in these strategies are from $\mathcal{Q}_{good}$ so instead of writing ${Q}_{good}^{i}$ we will just write $i$. Let:
\begin{itemize}
	\item $S_{color} \coloneqq (3,4,5,6,7,8,9)$,
    \item $S_{cycle} \coloneqq (15,14) \concat S_{color} \concat (12,13) \concat S_{color}$,
    \item $S_{start} \coloneqq (1,2) \concat S_{cycle}$.
\end{itemize}

Recall that $\database_{0}$ is \textit{Fugitive's} initial structure (consisting of green edges only), as demanded by Principle I.

\begin{lemma}\label{alpha}
{\em Crocodile's} strategy $(1,2)$ applied to the current structure $\database_{0}$ forces {\em Fugitive} to add $R(\alpha^{W})[a,a']$ and $R(\omega)[b',b]$.
\end{lemma}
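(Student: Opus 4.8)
The statement of Lemma~\ref{alpha} is really just an unpacking of Principle~I together with the definition of the strategy~$(1,2)$, so the plan is to trace through what the first two components of the strategy do. Recall that by Principle~I (Lemma~\ref{principle-i}) the initial structure is $\database_0 = G(w)[a,b]$ for some $w \in Q_{start}$, and that $Q_{start} = \alpha^{C} x^{C} \CC{A}{\hor}{C}{gray}{\empty}\CC{B}{\ver}{C}{\empty}{\empty} y^{C} \omega$. In particular $\database_0$ contains the edge $G(\alpha^C)[a,a']$ (the first letter of the word) and the edge $G(\omega)[b',b]$ (the last letter).

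First I would consider the component $l_1 = Q_{good}^{2} = \alpha^{C} + \alpha^{W}$. Since $\database_0 \models G(\alpha^C)(a,a')$, there is a request $\pair{a,a',\rtgd{G(Q_{good}^2)}{R(Q_{good}^2)}}$ in $\requests(\mathcal{Q}^\leftrightarrow,\database_0)$ (this is exactly the content of Observation~\ref{obs-princ-2} for this particular green edge, and it is why languages $2$--$9$ are in $\mathcal{Q}_{good}$). When \textit{Crocodile} picks this request, \textit{Fugitive} must satisfy it by adding a path from $a$ to $a'$ labelled by a word of $R(\alpha^C + \alpha^W)$, i.e. either $R(\alpha^C)[a,a']$ or $R(\alpha^W)[a,a']$. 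But $\alpha^C$ is a \emph{cold} symbol, and at this point $\database$ is a \textit{P2-ready} structure (it is $\database_0$, which satisfies the P2-ready conditions), so by Principle~II (Lemma~\ref{principle-ii}) a red edge with a cold label may not appear; hence \textit{Fugitive} is forced to add $R(\alpha^W)[a,a']$. Symmetrically, the component $l_2$ — wait, I need the component handling $\omega$: that is $Q_{good}^{1} = \omega$. Since $\database_0 \models G(\omega)(b',b)$, \textit{Crocodile} picks the request generated by $G(Q_{good}^1)$, and \textit{Fugitive} must add a path from $b'$ to $b$ labelled by the only word of $R(\omega)$, namely $R(\omega)[b',b]$ — here there is no shade choice at all, so the addition is forced outright without even invoking Principle~II.

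The main (and only mild) obstacle is bookkeeping: I must check that after adding $R(\alpha^W)[a,a']$ the structure is still P2-ready so that Principle~II genuinely applies when reasoning about the $\alpha$ component (the P2-ready conditions explicitly allow exactly the two edges $G(\alpha^C)[a,a']$ and $R(\alpha^W)[a,a']$ incident to $a$ and with label $\alpha^C$ or $\alpha^W$, so this is immediate), and that the order in which \textit{Crocodile} fires the two requests is irrelevant since neither addition creates new requests for $Q_{good}^1$ or $Q_{good}^2$. I should also note that strategy $(1,2)$ does indeed instruct \textit{Crocodile} to exhaust all requests generated by $Q_{good}^1$ and then all those generated by $Q_{good}^2$, and the only such requests present in (any structure extending) $\database_0$ at this stage are the two described above, so after this sub-strategy terminates the structure contains both $R(\alpha^W)[a,a']$ and $R(\omega)[b',b]$, as claimed. $\square$
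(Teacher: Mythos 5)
There is a genuine gap, and it is precisely the one the paper warns about immediately after its own proof of this lemma. For the $\alpha$-component you argue that \textit{Fugitive} cannot answer the request $\pair{a,a',\rtgd{G(Q_{good}^2)}{R(Q_{good}^2)}}$ with $R(\alpha^{C})[a,a']$ because ``$\database_0$ is a \textit{P2-ready} structure'' and Principle~II forbids red cold edges. But $\database_0$ is \emph{not} P2-ready: the definition of P2-ready explicitly requires that $a$ already has exactly two incident edges, $G(\alpha^{C})[a,a']$ \emph{and} $R(\alpha^{W})[a,a']$ (and likewise $G(\omega)[b',b]$ and $R(\omega)[b',b]$ at $b$), whereas $\database_0$ consists of green edges only. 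Since the presence of $R(\alpha^{W})[a,a']$ is exactly what this lemma is supposed to establish, your argument is circular. The circularity is not cosmetic: the proof of Principle~II punishes a red cold edge by assembling a red path from $a$ to $b$ matching $R(Q_{ugly}^2)=R(\alpha^{W}\Sigma^{\leq 4}\CC{\bullet}{\empty}{C}{\empty}{\empty}\Sigma^{\leq 4}\omega)$, and that path must \emph{begin} with the red edge $R(\alpha^{W})[a,a']$ — which does not yet exist. So Principle~II simply has no force at this point of the game, and Stage~I exists precisely to bootstrap it.

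The paper closes this gap with a direct case analysis that you are missing: suppose \textit{Fugitive} answers with $R(\alpha^{C})[a,a']$. Then \textit{Crocodile} fires the single-letter requests from $Q_{good}^{3},Q_{good}^{4},Q_{good}^{6},Q_{good}^{9}$ along the initial path, forcing a complete red path from $a'$ to $b'$; if any of the red edges so created is \emph{warm}, the resulting red $a$-to-$b$ path matches $R(Q_{ugly}^1)$ (which starts with $\alpha^{C}$ — this is where the hypothesis $R(\alpha^{C})[a,a']$ is used), and if all of them are \emph{cold}, it matches $R(Q_{start})$; either way the path lies in $R(Q_0)$ and \textit{Fugitive} loses. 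Your treatment of the $\omega$-component is fine and agrees with the paper (the language is a single word, so the answer is forced outright), but the $\alpha$-component needs the above ad hoc argument rather than an appeal to Principle~II.
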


\begin{proof}
Consider these languages one by one:

$Q_{good}^1 = \omega$: This language generates only one request $\langle b',b,Q_{good}^{1 \rightarrow} \rangle$ (one because edge $\pair{b',b}$ with label $G(\omega)$ is the only one in  $\database_{0}$ labelled with $\omega$), which has to be satisfied with $R(\omega)[b',b]$ as language $Q_{good}^{1}$ consists of only one word.

$Q_{good}^2 = \alpha^{C} + \alpha^{W}$: There is a green edge labelled with $\alpha^{C}$ in   $\database_{0}$ and thus
this language generates a request $\langle a,a',Q_{good}^{2 \rightarrow} \rangle$ (and no other requests). 
This request can be satisfied by {\em Fugitive} either by adding the edge $R(\alpha^{C})[a,a']$ or by adding the edge $R(\alpha^{W})[a,a']$. Suppose that {\em Fugitive} satisfies this request with $R(\alpha^{C})[a,a']$. Notice that {\em Crocodile} can now require {\em Fugitive} to satisfy requests $Reqs = \{$ $\langle a',v_{0},Q_{good}^{3 \rightarrow} \rangle$, $\langle v_{2},b',Q_{good}^{4 \rightarrow} \rangle$, $\langle v_{0},v_{1},Q_{good}^{9 \rightarrow} \rangle$, $\langle v_{1},v_{2},Q_{good}^{6 \rightarrow} \rangle \}$ which will force {\em Fugitive} to build a red path from $a'$ to $b'$. Each of these request has to be satisfied with a red edge with some label {\em warm} (with the upper index $W$)  or {\em cold} (with $C$).

 Consider what happens if one of these requests is satisfied with a {\em warm} letter. Then we have that $\database \models R(Q_{ugly}^1)(a,b)$ and {\em Fugitive} loses. It means that each request from $Reqs$ must be satisfied with a red edge labelled with a {\em cold} letter. But then notice that $\database \models R(Q_{start})(a,b)$ and {\em Fugitive} also loses.
\end{proof}

A careful reader could ask here: ``Why did we need to work so hard to prove that the newly added red edge must be 
{\em warm}. Don't we have Principle II which says that red edges must always be {\em warm} and green must be 
{\em cold}?''. But  we cannot use Principle II here: the structure is not \textit{P2-ready} yet. Read
the proof of Principle II again to notice that this red $\alpha^{W}$ between $a$ and $a'$ is crucial there.  And this is what Stage I is all about: it is here where {\em  Crocodile} forces {\em Fugitive} to construct a  structure which is \textit{P2-ready}. From now on all the current structures will be 
\textit{P2-ready} and  {\em Fugitive} will indeed be a slave of Principle II.

The following Lemma explains the role of $S_{color}$ and is a first cousin of Observation~\ref{obs-princ-2}:

\begin{lemma}[$\mathbf{S_{color}}$]\label{color}
Strategy $S_{color}$ applied to a P2-ready $\database$ forces {\em Fugitive} to create a P2-ready $\database'$ such that:
\begin{itemize}
\item Sets of vertices of $\database$ and $\database'$ are equal.
\item There are no requests generated by $Q_{good}^{1-9}$ in $\database'$, which means that each edge has its counterpart (incident to the same vertices) of the opposite color and temperature.  
\end{itemize}
\end{lemma}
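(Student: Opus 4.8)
\textbf{Proof plan for Lemma~\ref{color} ($S_{color}$).}

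The plan is to unfold the definition of the strategy $S_{color} = (3,4,5,6,7,8,9)$ and track what each language forces, exactly as in the proof of Lemma~\ref{alpha} but in the simpler regime where the current structure is already \textit{P2-ready}. First I would note that languages $Q_{good}^{3},\ldots,Q_{good}^{9}$ are all of the form $\sigma^{C} + \sigma^{W}$ (a single cold symbol plus its warm twin, collapsed across the shade in \textit{Crocodile's} glasses), so a request generated by such a language between $u$ and $v$ records the existence of an edge of that symbol-type and one color, and can be satisfied only by an edge of the same symbol-type; \textit{Fugitive} gets to pick the temperature (and, invisibly to \textit{Crocodile}, the shade), but Principle~II (Lemma~\ref{principle-ii}) — which applies because $\database$ is \textit{P2-ready} — pins the temperature down: a green edge must be cold and a red edge must be warm. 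Hence each such request forces precisely the opposite-color, opposite-temperature twin of an existing edge, with the same symbol-type, and adds no new vertices.

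Next I would argue that the strategy terminates with no residual $Q_{good}^{1\text{--}9}$ requests. The key point is that processing language $i$ only ever adds edges whose symbol-type is the one $i$ is concerned with; in particular, satisfying a request of $Q_{good}^{j}$ for $j\in\{3,\ldots,9\}$ never creates a new edge carrying a symbol handled by an $\{\alpha,\omega\}$-language or by a different one of $3,\ldots,9$. So once \textit{Crocodile} has exhausted language $i$, no later language in the sequence can resurrect a request for $i$: the languages $3$--$9$ partition the relevant symbol-types ($x$, $y$, $\$$, and the four $\CC{\bullet}{o}{}{\empty}{\empty}$ families for $o\in\{H,V\}$, $\mathbf{l}\in\{A,B\}$), so their request-sets are independent. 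After one left-to-right pass each original edge has acquired its opposite-color/opposite-temperature counterpart, and no edge of type $\omega$ or $\alpha$ was touched, so the six structural bullets defining \textit{P2-ready} are preserved: $\database_0$ is still a substructure, the edges at $a$ and $b$ are unchanged, all $\alpha$- and $\omega$-edges remain where they were, and the ``length-$\leq 4$ path from $a'$'' / ``to $b'$'' condition still holds since no vertices were added. This gives the \textit{P2-ready} $\database'$ with the two asserted properties.

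I would also record the small wrinkle that each language-$i$ phase may itself take infinitely many (but fewer than $\omega$) steps — when \textit{Fugitive} keeps creating new edges — which is exactly why the game is run for $\omega^2$ steps and why the strategy-concatenation operator $\concat$ is well defined; but here, because no phase adds vertices and the number of potential edges between a fixed finite vertex set is finite, in fact each phase terminates in finitely many steps, and the whole of $S_{color}$ in finitely many. The only genuine obstacle is making the ``independence of the request-sets'' claim airtight: one must check that satisfying a $Q_{good}^{j}$-request for $j\in\{3,\ldots,9\}$ really introduces only edges of a single symbol-type — which is immediate from the fact that these languages are single-letter unions — and then that Principle~II forbids the stray temperature choices that would otherwise let \textit{Fugitive} wander outside the intended edge set. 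Both are short, so the lemma reduces to a careful bookkeeping of which symbol each phase can produce.
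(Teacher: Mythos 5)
Your proposal is correct and follows essentially the same route as the paper, whose proof is a one-sentence appeal to exactly the three ingredients you expand on: Principle~II fixes the temperature of each new edge given its color, the words of $Q_{good}^{1-9}$ all have length one so no new vertices appear and each request is answered by a same-symbol-type edge, and these languages jointly cover all symbols so every edge acquires its opposite-color, opposite-temperature counterpart. Your additional bookkeeping (the phases $3$--$9$ act on disjoint symbol families and hence do not resurrect one another's requests, and P2-readiness is preserved since the $\alpha$- and $\omega$-edges and the vertex set are untouched) is a faithful elaboration of what the paper leaves implicit.
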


\begin{proof}
The proof is an easy consequence of Principle II and the fact that all words from $Q_{good}^{1-9}$ have length one (which means that when satisfying  the requests {\em Fugitive} only creates new edges, but no new vertices are added) and that these languages contain all symbols from $\Sigma$. 
\end{proof}

\begin{lemma}\label{P1}
Strategy $S_{start}$ applied to $\database_0$ forces {\em Fugitive} to build $\PP{1}{\empty}$.
\end{lemma}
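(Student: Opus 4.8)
\textbf{Plan of the proof of Lemma~\ref{P1}.}
The plan is to unwind the definition of $S_{start}=(1,2)\concat S_{cycle}$ step by step and track which structure {\em Fugitive} is forced into after each block of requests, using Principles I--III together with Lemmas~\ref{alpha} and~\ref{color}. We start from $\database_0=G(w)[a,b]$ with $w\in Q_{start}$ (Principle~I), which is the green path $G(\alpha^C)G(x^C)G(\CC{A}{\hor}{C}{gray}{\empty})G(\CC{B}{\ver}{C}{\empty}{\empty})G(y^C)G(\omega)$ from $a$ to $b$; naming its internal vertices $a',v_0,v_1,v_2,b'$ we see this is exactly the ``green skeleton'' of $\PP{1}{\empty}$. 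First I would apply Lemma~\ref{alpha}: the prefix $(1,2)$ of $S_{start}$ forces {\em Fugitive} to add the red edges $R(\omega)[b',b]$ and $R(\alpha^W)[a,a']$ (and, as the proof of Lemma~\ref{alpha} shows, nothing warm/cold goes wrong only if he picks these). After this move the current structure is $\database_0$ together with these two red edges, and one checks directly that it is now \emph{P2-ready} (the two required edges at $a$ and at $b$ are present with the right labels, $\database_0$ is a substructure, and every internal vertex is within distance $4$ of $a'$ and of $b'$ along the original green path). From here on Principle~II is in force.

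Next I would run the body $S_{cycle}=(15,14)\concat S_{color}\concat(12,13)\concat S_{color}$ and argue that it has no effect beyond ``color completion''. The key point is that in the current P2-ready structure the only green edge incident to $a'$ going ``forward'' is $G(x^C)[a',v_0]$ and the only relevant red edge incident to $a'$ is $R(\alpha^W)[a,a']$, and similarly near $b'$; so the length-$\geq 2$ languages $Q_{good}^{12},Q_{good}^{13},Q_{good}^{14},Q_{good}^{15}$ generate requests whose right-hand sides are single words which, by Principle~II, {\em Fugitive} is forced to realize with the ``cold is green, warm is red'' convention. Concretely: $Q_{good}^{14}=x^W+x^C+x^C\CC{A}{\hor}{C}{\empty}{\empty}\CC{B}{\ver}{C}{\empty}{\empty}$ — the green edge $G(x^C)[a',v_0]$ triggers a request $\langle a',v_0,Q_{good}^{14\rightarrow}\rangle$ which {\em Fugitive}, constrained by Principle~II (no red cold edge), must satisfy by $R(x^W)[a',v_0]$, adding no new vertex; symmetrically $Q_{good}^{15}$ forces $R(y^W)$ (resp.\ the $\$$-variant, vacuous in $\PP{1}{\empty}$ since there is no $\$$ edge). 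Likewise $Q_{good}^{12}$ and $Q_{good}^{13}$ at this point only regenerate the single-symbol red/green $x$- and $y$-edges already being added, since the longer disjuncts $x^C(\CC{\cdot}{\cdot}{C}{\cdot}{\empty}+\cdots)$ would need an outgoing cold edge from $v_0$ of the listed kinds, and the only such edge $G(\CC{A}{\hor}{C}{gray}{\empty})[v_0,v_1]$ makes $x^C\CC{A}{\hor}{C}{\cdot}{\empty}$ a word of the language, so the request it generates is already satisfied by the existing $G(x^C)[a',v_1']$-type edge or is satisfied by a single red $x^W$ edge — in every case no new vertex appears. Finally, the two occurrences of $S_{color}$ in $S_{cycle}$ apply Lemma~\ref{color}: each edge gets its opposite-color, opposite-temperature twin incident to the same two vertices, and no new vertex is created. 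Collecting all edges present at the end — the green and red $\alpha,\omega$ edges at the ends, the green/red $x^C/x^W$ pair $a'\!\to v_0$, the green/red $\CC{A}{\hor}{C}{\empty}{\empty}/\CC{A}{\hor}{W}{\empty}{\empty}$ pair $v_0\!\to v_1$, the $\CC{B}{\ver}{\empty}{\empty}{\empty}$ pair $v_1\!\to v_2$, and the $y^C/y^W$ pair $v_2\!\to b'$ — we get exactly the edge set of $\PP{1}{\empty}$.

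To finish I would verify two directions: (a) {\em Fugitive} \emph{can} reach $\PP{1}{\empty}$ by always making the ``obvious'' Principle-II-compliant choice — i.e.\ $\PP{1}{\empty}\models\mathcal{Q}_{good}^{1\text{--}15,\leftrightarrow}$ restricted to the requests $S_{start}$ asks about, which is the content of Exercise~\ref{noreq} (only $Q_{good}^{10},Q_{good}^{11}$ requests survive, and those are never picked by $S_{start}$); and (b) {\em Fugitive} has \emph{no other} choice, which is exactly what Lemmas~\ref{alpha}, \ref{color} and Principles~I--III give, since at each forced step the right-hand language is a single word after the temperature is pinned down. The main obstacle, and the place deserving the most care, is step (b) for the length-two languages $Q_{good}^{12},Q_{good}^{13},Q_{good}^{14},Q_{good}^{15}$: one must check that none of their longer disjuncts lets {\em Fugitive} sneak in a fresh vertex or a ``wrong'' symbol while still obeying Principles~II and~III — this is where the precise alternation $\CC{A}{\hor}{C}{\empty}{\empty}\CC{B}{\ver}{C}{\empty}{\empty}$ in $\database_0$ and the fact that $\database_0$ came from $Q_{start}$ (not from $Q_{bad}\cup Q_{ugly}$, by Principle~I) are used. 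Everything else is bookkeeping over the fifteen single-symbol and two-symbol languages.
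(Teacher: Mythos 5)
There is a genuine gap: your proof arrives at the wrong final structure. You claim that the body $S_{cycle}$ ``has no effect beyond color completion'' and you list the final edge set as consisting only of the edges of $\database_0$ together with their opposite-color twins. But $\PP{1}{\empty}$ is, by definition, a \emph{fan}: every $v_i$ has $a'$ as a predecessor and $b'$ as a successor, so $\PP{1}{\empty}$ contains the edges $a'\to v_1$, $a'\to v_2$ (labelled with $x$) and $v_0\to b'$, $v_1\to b'$ (labelled with $y$), in both colors, connecting pairs of vertices that are \emph{not} adjacent in $\database_0$. Your edge inventory omits all four of these, so the structure you end up with is a proper substructure of $\PP{1}{\empty}$, and the lemma is not proved.

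The missing idea is precisely the role of the length-$\geq 2$ disjuncts of $Q_{good}^{12}$--$Q_{good}^{15}$, which you dismiss as unable to ``sneak in'' anything. In fact they are what creates the fan. The green path $a'\to v_0\to v_1\to v_2$ labelled $x^{C}\CC{A}{\hor}{C}{\empty}{\empty}\CC{B}{\ver}{C}{\empty}{\empty}$ is a word of $Q_{good}^{14}$, so it generates the request $\langle a',v_2,Q_{good}^{14\rightarrow}\rangle$; since the only warm word of $Q_{good}^{14}$ is $x^{W}$, Principle~II forces {\em Fugitive} to satisfy it with the new edge $R(x^{W})[a',v_2]$ -- a fresh edge between previously non-adjacent vertices (though with no fresh vertex). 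Likewise $Q_{good}^{15}$ generates $\langle v_0,b',\cdot\rangle$ and $\langle v_1,b',\cdot\rangle$ (from the green paths labelled $\CC{A}{\hor}{C}{\empty}{\empty}\CC{B}{\ver}{C}{\empty}{\empty}y^{C}$ and $\CC{B}{\ver}{C}{\empty}{\empty}y^{C}$), forced to $R(y^{W})$; $Q_{good}^{12}$ generates $\langle a',v_1,\cdot\rangle$, forced to $R(x^{W})$; $Q_{good}^{13}$ then generates nothing because the $R(y^{W})$ edges from step $15$ already satisfy it; and the final $S_{color}$ adds the green cold twins of these new red edges. Only with these fan edges is the result $\PP{1}{\empty}$ (and only then is the structure P2-ready at every later stage, since every vertex must be within distance $4$ of $a'$ and of $b'$). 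Your treatment of the $(1,2)$ prefix via Lemma~\ref{alpha} and of $S_{color}$ via Lemma~\ref{color} matches the paper; the request-by-request accounting for languages $12$--$15$ is the part that must be redone.
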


\begin{proof}
Consider languages from $S_{start}$ one by one:
\begin{itemize}
\item $1 = \omega$: By Lemma~\ref{alpha} this language forces {\em Fugitive} to add $R(\omega)[b',b]$.
\item $2 = \alpha^{C} + \alpha^{W}$: By Lemma~\ref{alpha} this language forces {\em Fugitive} to add $R(\alpha^{W})[a,a']$.
\item $15 = y^{W} + \$^{C} + \CC{A}{\hor}{C}{\empty}{\empty}\CC{B}{\ver}{C}{\empty}{\empty}y^{C} + \CC{B}{\ver}{C}{\empty}{\empty}y^{C}$: This language generates two requests: $\langle v_{0},b',Q_{good}^{15 \rightarrow} \rangle$ and $\langle v_{1},b',Q_{good}^{15 \rightarrow} \rangle$ since neither $y^W$ nor $\$^C$ occurs in the current structure. The first request has to be satisfied with $R(y^{W})[v_{0},b']$ by Principle II and the second request has to be satisfied $R(y^{W})[v_{1},b']$ by Principle II. We can use here Principle II since after strategy $(1,2)$ was applied the structure was \textit{P2-ready}.
\item $14 = x^{W} + x^{C} + x^{C}\CC{A}{\hor}{C}{\empty}{\empty}\CC{B}{\ver}{C}{\empty}{\empty}$: This language generates only two requests $\langle a',v_{2},Q_{good}^{14 \rightarrow} \rangle$ and $\langle a',v_{0},Q_{good}^{14 \rightarrow} \rangle$. The first request has to be satisfied with $R(x^{W})[a',v_{0}]$ and the second with $R(x^{W})[a',v_{2}]$, both due to Principle II.
\end{itemize}

Now {\em Crocodile} uses strategy $S_{color}$ to add missing edges of opposite colors (and, by Principle II, of opposite temperatures).

\begin{itemize}
\item $12 = x^{C} \left(\CC{A}{\hor}{C}{\empty}{\empty} + \CC{B}{\hor}{C}{\empty}{\empty} + \CC{A}{\ver}{C}{\empty}{\empty} + \CC{B}{\ver}{C}{\empty}{\empty} \right) + x^{C} + x^{W}$: This language generates only one request: $\langle a',v_{1},Q_{good}^{12 \rightarrow} \rangle$. It is because there are no requests generated by neither $x^C$ nor $x^W$ in $Q_{good}^{12}$ by Lemma~\ref{color}.  There are also no other requests generated by $x^{C} \left(\CC{A}{\hor}{C}{\empty}{\empty} + \CC{B}{\hor}{C}{\empty}{\empty} + \CC{A}{\ver}{C}{\empty}{\empty} + \CC{B}{\ver}{C}{\empty}{\empty} \right)$ in $Q_{good}^{12}$ as the only path labeled with a word from this language is $a' \rightarrow v_0 \rightarrow v_1$. $\langle a',v_{1},Q_{good}^{12 \rightarrow} \rangle$ has to be satisfied with $R(x^{W})[a',v_{1}]$ by Principle II.

\item $13 = \left(\CC{A}{\hor}{C}{\empty}{\empty} + \CC{B}{\hor}{C}{\empty}{\empty} + \CC{A}{\ver}{C}{\empty}{\empty} + \CC{B}{\ver}{C}{\empty}{\empty} \right)y^{C} + y^{C} + y^{W}$: This language doesn't generate any requests. 
\end{itemize}

Finally {\em Crocodile} uses strategy $S_{color}$ to add one missing edge $\pair{a',v_{1}}$ with label $G(x^{C})$ to build $\PP{1}{\empty}$
\end{proof}

\section{Stage II}
\label{stage2}
Note that from now on {\em Fugitive} must obey all Principles.

Now we imagine that $\PP{1}{\empty}$ has already been created and we proceed with the analysis to the later stage of the Escape game where either $\PP{m+1}{\empty}$ or $\PP{k}{\$}$ for some $k \leq m$ will be created.

Let us define $\set{S_k}$ inductively for $k \in \mathbb{N}_{+}$ in the following fashion:
\begin{itemize}
\item $S_{1} \coloneqq S_{start}$,
\item $S_{k} \coloneqq S_{k-1} \concat S_{cycle}$ for $k > 1$.
\end{itemize}

\begin{lemma}\label{Pk}
For all $m \in \mathbb{N}_{+}$ strategy $S_{m}$ applied to $\database_0$ forces {\em Fugitive} to build a structure isomorphic, depending on his choice, either $\PP{m+1}{\empty}$ or $\PP{k}{\$}$ for some $k \leq m$.
\end{lemma}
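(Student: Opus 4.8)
\textbf{Proof plan for Lemma~\ref{Pk}.}
The plan is to prove the statement by induction on $m$, using Lemma~\ref{P1} as the base case ($m=1$: strategy $S_1 = S_{start}$ forces $\PP{1}{\empty}$, and since $\PP{1}{\empty}$ has no $\$$ edge the ``$\PP{k}{\$}$ for some $k\le 1$'' alternative is vacuous at this stage). For the inductive step I would assume that $S_m$ forces {\em Fugitive} to build either $\PP{m+1}{\empty}$ or some $\PP{k}{\$}$ with $k\le m$. In the latter case, by Exercise~\ref{noreq} the only requests generated by $\mathcal{Q}_{good}$ in $\PP{k}{\$}$ come from $Q_{good}^{10}$ and $Q_{good}^{11}$, which are not among the languages occurring in $S_{cycle}$; so appending $S_{cycle}$ changes nothing and {\em Fugitive} stays at $\PP{k}{\$}$, which is still of the required form. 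The real work is the case where $S_m$ led to $\PP{m+1}{\empty}$: here I must show that applying the extra block $S_{cycle}=(15,14)\concat S_{color}\concat(12,13)\concat S_{color}$ to $\PP{m+1}{\empty}$ forces {\em Fugitive} to extend it to either $\PP{m+2}{\empty}$ or to ``close it off'' into $\PP{k}{\$}$ for some $k\le m+1$.

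For that core case I would trace the languages of $S_{cycle}$ one at a time, exactly as in the proof of Lemma~\ref{P1}, but now inspecting which requests language $15$ and language $14$ generate in $\PP{m+1}{\empty}$ rather than in $\PP{1}{\empty}$. The key point is that language $15 = y^{W} + \$^{C} + \CC{A}{\hor}{C}{\empty}{\empty}\CC{B}{\ver}{C}{\empty}{\empty}y^{C} + \CC{B}{\ver}{C}{\empty}{\empty}y^{C}$ generates a request at the ``tip'' $v_{2m+2}$ (the vertex reached by the last $\CC{B}{\ver}{}{}{}$ edge) because of the subword $\CC{A}{\hor}{C}{\empty}{\empty}\CC{B}{\ver}{C}{\empty}{\empty}y^{C}$ together with the $y^C$ edge into $b'$, and this request can be satisfied either by $R(y^{W})$ (by Principle~II, since the structure is P2-ready) or, if {\em Fugitive} prefers, he has already been allowed to satisfy it by adding a $\$$-edge pair $\pair{v_{2m+2},b'}$ — this is the branch that produces $\PP{m+1}{\$}$. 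I would need to argue carefully that these are the only two options: any {\em warm}-labelled edge among the newly forced path triggers $R(Q_{ugly}^1)(a,b)$, and the third language of $\mathcal{Q}_{ugly}$, $\alpha^C x^C \CC{B}{\ver}{C}{\empty}{\empty}\CC{B}{\ver}{C}{\empty}{\empty}y^C\omega$, rules out {\em Fugitive} trying to place two consecutive $\CC{B}{\ver}{C}{\empty}{\empty}$ edges to escape the pattern (this is exactly the ``feature used exclusively in proof of Lemma~\ref{Pk}'' mentioned after the definition of $\mathcal{Q}_{ugly}$). Then language $14 = x^{W} + x^{C} + x^{C}\CC{A}{\hor}{C}{\empty}{\empty}\CC{B}{\ver}{C}{\empty}{\empty}$ forces the new $x$-spokes $R(x^{W})[a',v_{2m+1}]$ and $R(x^{W})[a',v_{2m+3}]$ to the two freshly created vertices, again by Principle~II; then $S_{color}$ fills in the opposite-color/opposite-temperature counterparts; then $12$ and $13$ force the remaining middle spoke $x$-edge and generate nothing new respectively; and a final $S_{color}$ adds the last missing green $x^C$ edge, completing $\PP{m+2}{\empty}$ — or, in the $\$$ branch, after the $\$$-edges are placed Exercise~\ref{noreq} tells us the remaining strategy generates no further requests so {\em Fugitive} sits at $\PP{m+1}{\$}$.

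The book-keeping obstacle — and the step I expect to be the main difficulty — is verifying \emph{completeness of the request analysis}: showing that at each language of $S_{cycle}$ the requests generated in $\PP{m+1}{\empty}$ are \emph{exactly} the ones listed, with no stray requests elsewhere in the already-built part of the structure. This requires knowing precisely which labelled paths exist in $\PP{m+1}{\empty}$ (e.g. that the only path labelled by a word of $x^{C}\left(\CC{A}{\hor}{C}{\empty}{\empty}+\CC{B}{\hor}{C}{\empty}{\empty}+\CC{A}{\ver}{C}{\empty}{\empty}+\CC{B}{\ver}{C}{\empty}{\empty}\right)$ is a single $a'\to v_{2i}\to v_{2i+1}$ hop, so that $12$ only fires at the new tip), and leaning on Lemma~\ref{color} to know that after each $S_{color}$ phase every edge already has its opposite-color twin, so old edges generate no $Q_{good}^{1-9}$ requests. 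A secondary subtlety is making the induction statement strong enough: I would phrase it so that the hypothesis records not just ``$\PP{m+1}{\empty}$ or $\PP{k}{\$}$'' but also that the structure is P2-ready and $\mathcal{Q}_{good}$-request-reduced except at the tip, so that Principle~II is available throughout the step and the argument of Lemma~\ref{color} applies verbatim.
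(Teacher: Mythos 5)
Your overall architecture matches the paper's: induction on $m$ with Lemma~\ref{P1} as the base, the observation via Exercise~\ref{noreq} that a further $S_{cycle}$ is inert once some $\PP{k}{\$}$ has been built, and a language-by-language trace of $S_{cycle}$ on $\PP{m+1}{\empty}$ relying on Principle~II, on $Q_{ugly}^3$ to exclude two consecutive $\CC{B}{\ver}{C}{\empty}{\empty}$ edges, and on the completeness-of-requests book-keeping you rightly flag as the main labour. However, there is a genuine error in your account of the key step, the one that makes the structure grow at all. You describe the language-$15$ request at the tip as arising from a green path (the subword $\CC{A}{\hor}{C}{\empty}{\empty}\CC{B}{\ver}{C}{\empty}{\empty}y^{C}$) and as satisfiable by $R(y^{W})$. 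That is backwards. In $\PP{m+1}{\empty}$ every green path from some $v_{2i}$ to $b'$ labelled $\CC{A}{\hor}{C}{\empty}{\empty}\CC{B}{\ver}{C}{\empty}{\empty}y^{C}$ already has a red witness in $Q_{good}^{15}$, namely the red $y^{W}$ edge from $v_{2i}$ to $b'$, so no $Q_{good}^{15\rightarrow}$ request exists anywhere. The only request is $\langle v_{2m+2},b',Q_{good}^{15\leftarrow}\rangle$, generated by the \emph{red} $y^{W}$ edge at the tip, which has no green counterpart in $Q_{good}^{15}$ (the lone green edge there is $y^{C}$, which is not a word of $Q_{good}^{15}$). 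A $\leftarrow$ request must be answered by a \emph{green} path: $G(y^{W})$ is barred by Principle~II, $G(\CC{B}{\ver}{C}{\empty}{\empty}y^{C})$ by $Q_{ugly}^{3}$, so {\em Fugitive} must play either $G(\CC{A}{\hor}{C}{\empty}{\empty}\CC{B}{\ver}{C}{\empty}{\empty}y^{C})$ — creating the two new vertices — or $G(\$^{C})$, the terminating branch. Under your reading, {\em Fugitive} would discharge the request with a single red edge and no new vertex would ever be created, so the induction would stall and never reach $\PP{m+2}{\empty}$.

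A secondary, smaller slip follows from the same confusion: language $14$ generates exactly one request, at the farther of the two new vertices (reached by the green path labelled $x^{C}\CC{A}{\hor}{C}{\empty}{\empty}\CC{B}{\ver}{C}{\empty}{\empty}$), not two; the $x$-spoke to the nearer new vertex is forced only later by language $12$, and the $y$-spoke by language $13$. Also, in the $\$$ branch {\em Fugitive} adds only the green $G(\$^{C})$ edge when answering the language-$15$ request; the red $R(\$^{W})$ twin is forced afterwards by language $5$ inside $S_{color}$, rather than being placed as a pair. Once the direction of the language-$15$ request is corrected, the rest of your plan lines up with the paper's proof.
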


\begin{proof}
Notice that by Lemma~\ref{P1}, this is already proved for $m=1$. Now assume that {\em Crocodile}, using strategy $S_{m-1}$, forced {\em Fugitive} to build $\PP{m}{\empty}$ or $\PP{k}{\$}$, for some $k \leq m-1$. If {\em Fugitive} already built $\PP{k}{\$}$ as the result of {\em Crocodile's} strategy $S_{m-1}$ then we are done, by noticing that
the last $S_{cycle}$ will not change the current structure any more -- this is because, due to Exercise \ref{noreq} there are 
no requests from languages $Q_{good}^{1-9}$ and $Q_{good}^{12-15}$ in the current structure at this point. 

So we only need to consider the case where $\PP{m}{\empty}$ was built. Now {\em Crocodile} uses strategy $S_{cycle}$ to force {\em Fugitive} to build $\PP{m+1}{\empty}$ or $\PP{m}{\$}$. 
Consider languages from $S_{cycle}$ one by one:
\begin{itemize}
\item $15 = y^{W} + \$^{C} + \CC{A}{\hor}{C}{\empty}{\empty}\CC{B}{\ver}{C}{\empty}{\empty}y^{C} + \CC{B}{\ver}{C}{\empty}{\empty}y^{C}$. The only request generated by this language is $\langle v_{2m},b',Q_{good}^{15 \leftarrow} \rangle$,
resulting from the red edge labelled with $y^{W}$ connecting $v_{2m}$ and $b'$.

This is since:
\begin{itemize}
    \item there is no $\$^{C}$ anywhere in the current structure,
    \item for each $k<m$
there are already both a red edge labelled with $y^{W}$ from $v_{2k}$ to $b'$ and 
a green path labelled with $\CC{A}{\hor}{C}{\empty}{\empty}\CC{B}{\ver}{C}{\empty}{\empty}y^{C}$ between these vertices,

    \item for each $k<m$
there are already both a red edge labelled with $y^{W}$ from $v_{2k+1}$ to $b'$ and 
a green path labelled with $\CC{B}{\ver}{C}{\empty}{\empty}y^{C}$ between these vertices.
\end{itemize}

This only request can possibly be satisfied in three different ways (it follows from Principle II): either by $G(\CC{A}{\hor}{C}{\empty}{\empty}\CC{B}{\ver}{C}{\empty}{\empty}y^{C})[v_{2m},b']$ or by $G(\CC{B}{\ver}{C}{\empty}{\empty}y^{C})[v_{2m},b']$ or by $G(\$^{C})[v_{2m},b']$. First notice that this request cannot be satisfied with $G(\CC{B}{\ver}{C}{\empty}{\empty}y^{C})[v_{2m},b']$ because that would result in creating a green path labeled by $\alpha^C x^C \CC{B}{\ver}{C}{\empty}{\empty}\CC{B}{\ver}{C}{\empty}{\empty}y^C\omega$ connecting $a$ and $b$. Then {\em Crocodile} could pick request $\langle a,b,Q_{ugly}^{3 \rightarrow} \rangle$ for {\em Fugitive} to satisfy. After {\em Fugitive} satisfies that he will lose. The case when this request is satisfied with $G(\$^{C})[v_{2m},b']$ will be considered in the last paragraph of the proof. So now we assume that this request is satisfied with $G(\CC{A}{\hor}{C}{\empty}{\empty}\CC{B}{\ver}{C}{\empty}{\empty}y^{C})[v_{2m},b']$. Let us name the two new vertices $v_{2m+1}$ and $v_{2m+2}$.

\item $14 = x^{W} + x^{C} + x^{C}\CC{A}{\hor}{C}{\empty}{\empty}\CC{B}{\ver}{C}{\empty}{\empty}$: the only request generated by this language is $\langle a',v_{2m+2},Q_{good}^{14 \rightarrow}\rangle$ resulting
from the (partially) newly created green path from $a'$ to $v_{2m+2}$, via $v_{2m}$ and $v_{2m+1}$,
labelled with $x^{C}\CC{A}{\hor}{C}{\empty}{\empty}\CC{B}{\ver}{C}{\empty}{\empty}y^{C}$.

This request has to be satisfied with $R(x^{W})[a',v_{2m+2}]$ due to Principle II.

\end{itemize}

Now  {\em Crocodile} uses strategy $S_{color}$ to add missing edges of opposite colors.

\begin{itemize}
\item $12 = x^{C} \left(\CC{A}{\hor}{C}{\empty}{\empty} + \CC{B}{\hor}{C}{\empty}{\empty} + \CC{A}{\ver}{C}{\empty}{\empty} + \CC{B}{\ver}{C}{\empty}{\empty} \right) + x^{C} + x^{W}$: This language generates one request: $\langle a',v_{2m+1},Q_{good}^{12 \rightarrow} \rangle$. It has to be satisfied with $R(x^{W})[a',2m+1]$ by Principle II.

\item $13 = \left(\CC{A}{\hor}{C}{\empty}{\empty} + \CC{B}{\hor}{C}{\empty}{\empty} + \CC{A}{\ver}{C}{\empty}{\empty} + \CC{B}{\ver}{C}{\empty}{\empty} \right)y^{C} + y^{C} + y^{W}$: This language generates one request $\langle v_{2m+1},b',Q_{good}^{13 \rightarrow} \rangle$. It has to be satisfied with $R(y^{W})[2m+1,b']$ by Principle II.
\end{itemize}
Now {\em Crocodile} uses strategy $S_{color}$ (as $S_{cycle} = (15,14) \concat S_{color} \concat (12,13) \concat S_{color}$). We apply Lemma~\ref{color} to conclude that {\em Fugitive} is forced to build $\PP{m+1}{\empty}$, as what is left to create $\PP{m+1}{\empty}$ is to only add some edges of opposite colors and temperatures. 

Notice that during play, after application of each language in {\em Crocodile's} strategy, each of the constructed structures is \textit{P2-ready}, as distances from $a'$ and to $b'$ are smaller than $4$.

Now we finally consider the case where {\em Fugitive} satisfied the request generated by language $15$ with $G(\$^{C})[v_{m},b']$. Notice that the only request generated by the remaining languages from $S_{cycle}$ is: $\langle v_{2m},b',Q_{good}^{5 \rightarrow} \rangle$, which will be satisfied by $R(\$^{W})[v_{2m},b']$ and the resulting structure will be isomorphic to $\PP{m}{\$}$. This ends the proof of Lemma~\ref{Pk}.
\end{proof}




\section{The grids $\GG{m}{\empty}$ and partial grids $\LL{m}{k}{\empty}$}\label{mgrid}

\begin{definition}
$\GG{m}{\empty}$, for $m \in \mathbb{N}_{+}$, is a directed graph $(V,E)$ where:

$V = \{a, a', b', b\} \cup \{v_{i,j} : i,j \in [0,m] \}$ and the edges $E$ are labelled (as in $\PP{m}{\empty}$) with $\Sigma \setminus \Sigma_{0}$ or one of the symbols of the form $\CC{l}{o}{t}{\empty}{\empty}$, which means that the shade filtering glasses are still on.

The edges of $\GG{m}{\empty}$ are as follows:
\begin{itemize}
\item Vertex $a'$ is a successor of $a$, $b$ is a successor of $b'$. 
All $v_{i,j}$ are successors of $a'$  and the successors of each $v_{i,j}$ are $v_{i+1,j}, v_{i,j+1}$ (when they exist) and $b'$. From each node there are two edges to each of its successors, one red and one green. There are no other edges.
\item Each {\em cold} edge, labelled with a symbol in $\CC{\bullet}{\empty}{C}{\empty}{\empty}$, is green.
\item Each {\em warm} edge, labelled with a symbol in $ \CC{\bullet}{\empty}{W}{\empty}{\empty}$, is red.
\item Each edge  $\langle v_{i,j}, v_{i+1,j}\rangle$ is horizontal -- its label is from $ \CC{\bullet}{\hor}{\empty}{\empty}{\empty}$.
\item Each edge $\langle v_{i,j},v_{i,j+1}\rangle$ is vertical -- its label is from $\CC{\bullet}{\ver}{\empty}{\empty}{\empty}$.
\item The label of each edge leaving $v_{i,j}$, with $i+j$ even, is from $ \CC{A}{\empty}{\empty}{\empty}{\empty}$, the label of each edge leaving $v_{i,j}$, with $i+j$ odd, is from $ \CC{B}{\empty}{\empty}{\empty}{\empty}$.
\item Each edge $\langle a',v_{i} \rangle$ is labeled by either $x^{C}$ or $x^{W}$.
\item Each edge $\langle v_{i},b' \rangle$ is labeled by either $y^{C}$ or $y^{W}$.
\item Edges $\pair{a,a'}$ with label $G(\alpha^{C})$ and $\pair{a,a'}$ with label $R(\alpha^{W})$ are in $E$.
\item Edges $\pair{b',b}$ with label $G(\omega)$ and $\pair{b',b}$ with label $R(\omega)$ are in $E$.
\end{itemize}
\end{definition}

\begin{definition}
Let $\LL{m}{k}{\empty} = (V',E')$, for $m, k \in \mathbb{N}_{+}$ where $ k \leq m$, is a subgraph of $\GG{m}{\empty} = (V,E)$ induced by the set $V'\hspace{-0mm} \subseteq \hspace{-0mm} V$ of vertices   defined as $V' \hspace{-0mm}= \{a,a',b',b\} \cup \{v_{i,j} : i,j \in [0,m]; \\ |~i~-~j~| \leq~k~\}$.
\end{definition}

\begin{definition}
Let $\GG{m}{\$}$ for $m \in \mathbb{N}_{+}$ is $\GG{m}{\empty}$ with two edges added:
$\pair{v_{m,m},b'}$ with label $G(\$^{C})$ and $\pair{v_{m,m},b'}$ with label $R(\$^{W})$.
\end{definition}

\begin{definition}
Let $\LL{m}{k}{\$}$ for $m \in \mathbb{N}_{+}, k \in \mathbb{N}_{+} \cup \{0\}, k \leq m$ is $\LL{m}{k}{\empty}$ with two edges added:
$\pair{v_{m,m},b'}$ with label $G(\$^{C})$ and $\pair{v_{m,m},b'}$ with label $R(\$^{W})$.
\end{definition}

\begin{fact}\label{iso}
For all $m$:
$\LL{m}{m}{\empty}$ is equal to $\GG{m}{\empty}$ and $\LL{m}{m}{\$}$ is equal to $\GG{m}{\$}$.
\end{fact}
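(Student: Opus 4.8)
The plan is to observe that Fact~\ref{iso} is a purely definitional unravelling: both equalities follow by comparing the defining descriptions of $\GG{m}{\empty}$ and $\LL{m}{m}{\empty}$ (respectively their $\$$-decorated variants) vertex-for-vertex and edge-for-edge. First I would treat $\LL{m}{m}{\empty}=\GG{m}{\empty}$. By definition $\LL{m}{k}{\empty}$ is the subgraph of $\GG{m}{\empty}$ induced by $V'=\{a,a',b',b\}\cup\{v_{i,j}:i,j\in[0,m];\ |i-j|\le k\}$. For $k=m$ the constraint $|i-j|\le m$ is vacuous for $i,j\in[0,m]$, since the largest possible value of $|i-j|$ is $m$ itself (attained at $(0,m)$ and $(m,0)$). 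Hence $V'=\{a,a',b',b\}\cup\{v_{i,j}:i,j\in[0,m]\}$, which is exactly the vertex set $V$ of $\GG{m}{\empty}$. An induced subgraph on the full vertex set is the graph itself (same edges, same labels, same colors), so $\LL{m}{m}{\empty}=\GG{m}{\empty}$.

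Next I would handle the $\$$-variants. By definition $\GG{m}{\$}$ is $\GG{m}{\empty}$ with the two extra edges $\pair{v_{m,m},b'}$ labelled $G(\$^{C})$ and $R(\$^{W})$ added, and $\LL{m}{k}{\$}$ is $\LL{m}{k}{\empty}$ with the very same two edges added. Since $v_{m,m}$ satisfies $|m-m|=0\le k$ for every $k\ge 0$, the vertex $v_{m,m}$ is present in $\LL{m}{k}{\empty}$ for all admissible $k$, so the decoration is well-defined; and for $k=m$ we already know $\LL{m}{m}{\empty}=\GG{m}{\empty}$, so adding the identical pair of edges to each side yields $\LL{m}{m}{\$}=\GG{m}{\$}$.

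There is essentially no obstacle here: the only thing to be slightly careful about is the range conventions, namely that $[0,m]$ means $\{0,1,\dots,m\}$ so that $\max_{i,j\in[0,m]}|i-j|=m$ exactly, which is what makes the cutoff $k=m$ non-restrictive. Everything else is a direct comparison of the two definitions, and I would simply state the two chains of equalities $V'=V$ and ``same induced edges'' and ``same two $\$$-edges added'' without further ceremony. I expect the proof in the paper to be a one-line remark (or even left to the reader), and my write-up would match that level of detail.
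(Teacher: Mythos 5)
Your proof is correct and matches the paper's treatment: the paper states this as a Fact with no proof at all, treating it as exactly the definitional unravelling you carry out (for $k=m$ the constraint $|i-j|\le m$ is vacuous on $[0,m]\times[0,m]$, so the induced subgraph is the whole graph, and the two $\$$-edges added are literally the same in both definitions). Nothing is missing.
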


\begin{exercise}\label{noreqG}
Languages from $\mathcal{Q}_{good}$ or $\mathcal{Q}_{ugly}$ do not generate requests in any $\GG{m}{\$}$.
\end{exercise}

\begin{figure*}
\centering
\includegraphics[width=0.6\textwidth]{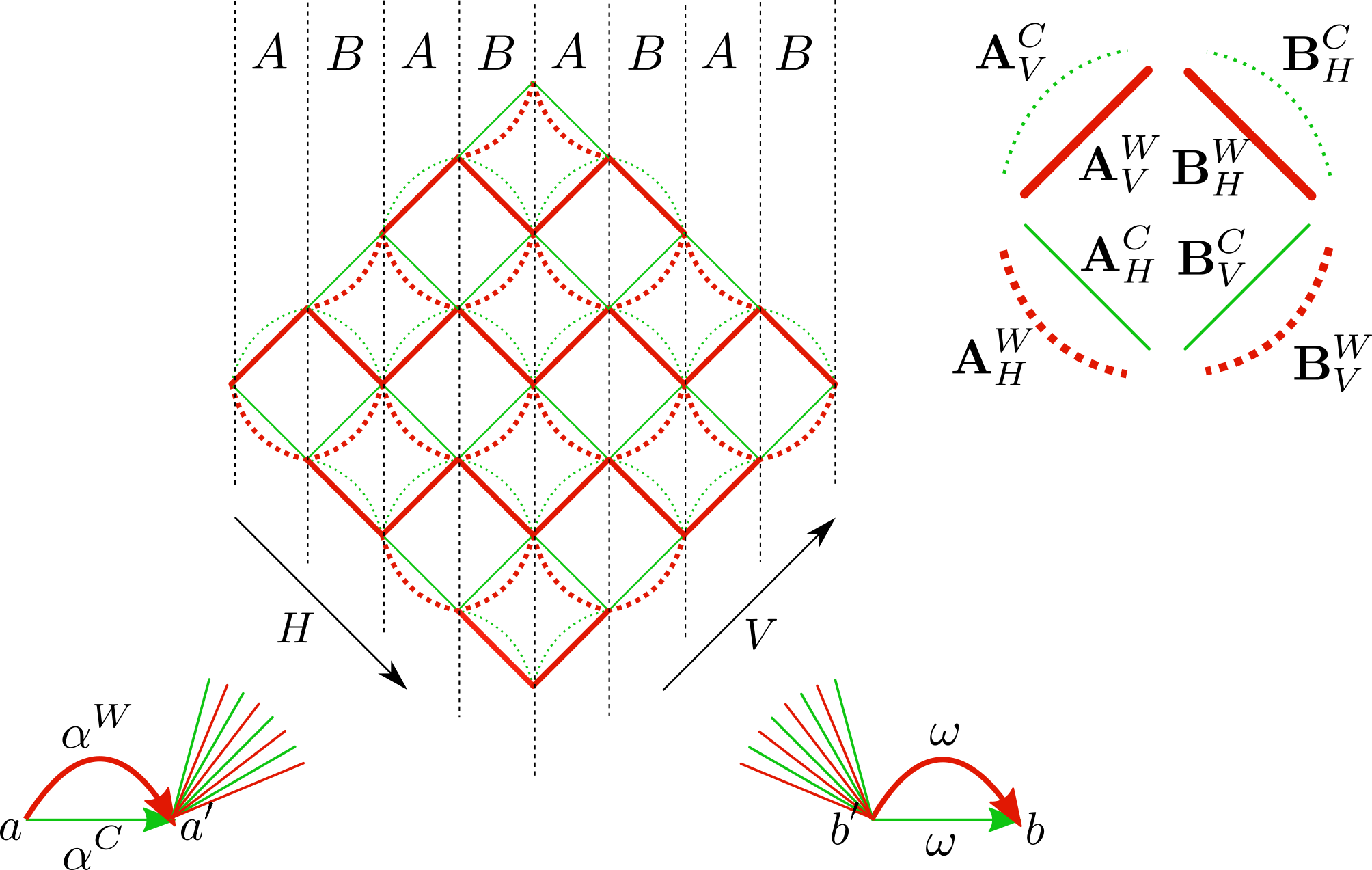}
\caption{\label{fig:fig3} $\GG{4}{\empty}$ (left). Smaller picture in the top-right corner explains how  different line styles on the main picture map to $\Sigma_{0}$ (please use a color printer if you can).}
\end{figure*}

\begin{figure*}
\centering
\includegraphics[width=\textwidth]{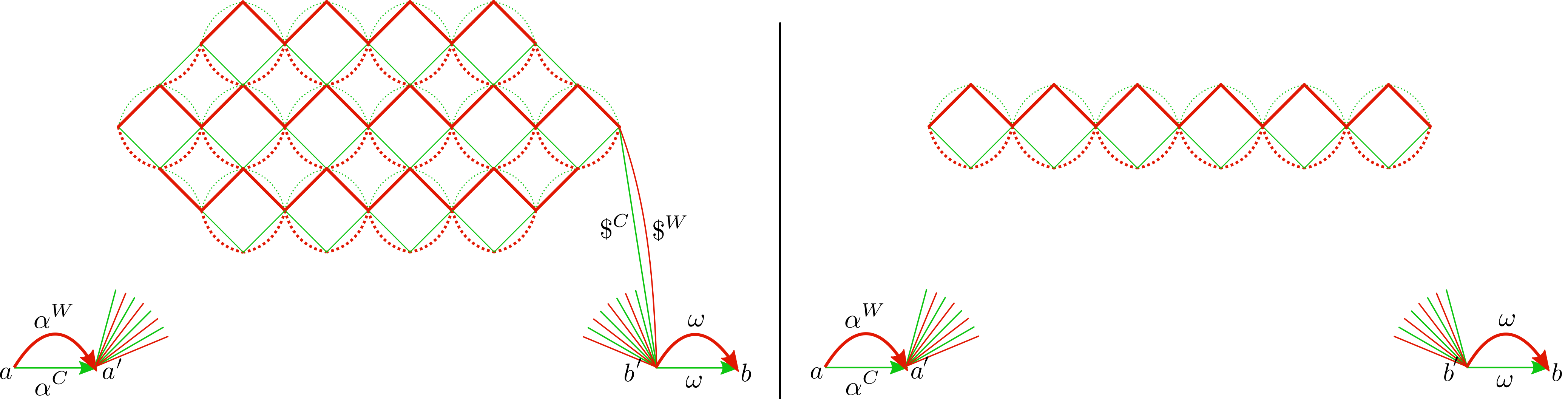}
\caption{\label{fig:fig4} $\LL{6}{3}{\$}$ (left) and $\LL{6}{1}{\empty}$ (right).}
\end{figure*}



\section{Stage III}
\label{phase2}

Now we imagine that either $\PP{m+1}{\empty}$ or $\PP{k}{\$}$ for some $k \leq m$ was created
as the current position in a play of the game of Escape and we proceed with the analysis to the later stage of the play, where either $\GG{m+1}{\empty}$ or $\GG{k}{\$}$ will be created.

\begin{lemma}\label{grid}
For any $m \in \mathbb{N}_{+}$ \textit{Crocodile} can force  \textit{Fugitive} to build a structure isomorphic, depending on \textit{Fugitive's} choice, 
to either $\GG{m+1}{\empty}$ or to $\GG{k}{\$}$ for some $k \leq m$.
\end{lemma}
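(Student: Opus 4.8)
The plan is to reuse exactly the pattern of Stage II (Lemma~\ref{Pk}), but now ``growing'' the structure in a second dimension. The key observation is that $\PP{m+1}{\empty}$ (and $\PP{k}{\$}$) are already the ``bottom row'' (or one-dimensional strip) of $\GG{m+1}{\empty}$ (resp.\ $\GG{k}{\$}$), so what \textit{Crocodile} must do is force \textit{Fugitive} to fill in the remaining vertices $v_{i,j}$ with $i,j\le m$ and all the $\CC{l}{o}{t}{\empty}{\empty}$-labelled edges, plus the $x$- and $y$-edges connecting each new vertex to $a'$ and $b'$. I would proceed by a double induction: the outer induction is the one already handled by Lemma~\ref{Pk} (it delivers $\PP{m+1}{\empty}$ or some $\PP{k}{\$}$), and then an inner induction on the partial grids $\LL{m}{k}{\empty}$ for $k=1,\dots,m+1$, using Fact~\ref{iso} to identify $\LL{m+1}{m+1}{\empty}$ with $\GG{m+1}{\empty}$ at the end. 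The ``anti-diagonal by anti-diagonal'' growth is the natural order: having $\LL{m}{k}{\empty}$, one extends it to $\LL{m}{k+1}{\empty}$ by adding the vertices $v_{i,j}$ with $|i-j|=k+1$.

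\textbf{Key steps, in order.} First, invoke Lemma~\ref{Pk} to get that \textit{Crocodile}, using strategy $S_{m+1}$ (or whatever the appropriate cumulative strategy is), forces \textit{Fugitive} into $\PP{m+1}{\empty}$ or $\PP{k}{\$}$ for some $k\le m$; if it is a $\$$-structure we argue as in Lemma~\ref{Pk} that no further languages generate requests (Exercise~\ref{noreqG}), so $\GG{k}{\$}$ is reached trivially. So assume $\PP{m+1}{\empty}$ is the current structure; identify it with $\LL{m+1}{1}{\empty}$ (the width-one strip), modulo relabelling of indices $v_i\leftrightarrow v_{i,0}$ or similar. Second, design the inductive step: from $\LL{m+1}{k}{\empty}$, have \textit{Crocodile} play the languages $Q_{good}^{10}$ and $Q_{good}^{11}$ (the only ones generating requests in $\PP{m}{\$}$, per Exercise~\ref{noreq}, and precisely the ones describing the ``square-closing'' pairs $\CC{B}{\hor}{W}{\empty}{\empty}\CC{A}{\ver}{W}{\empty}{\empty}+\CC{B}{\ver}{C}{\empty}{\empty}\CC{A}{\hor}{C}{\empty}{\empty}$ and its reverse). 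These requests force \textit{Fugitive}, for every pair of adjacent edges on the current anti-diagonal that form two sides of a missing unit square, to complete the square with the opposite-orientation pair of edges — creating the new vertex $v_{i,j}$ on anti-diagonal $k+1$. Third, after the new vertices appear, play the $x$- and $y$-languages ($Q_{good}^{12}$, $Q_{good}^{13}$, $Q_{good}^{14}$, $Q_{good}^{15}$) exactly as in Lemma~\ref{Pk} to force the red $x^W$/$y^W$ edges from $a'$ to each new $v_{i,j}$ and from each new $v_{i,j}$ to $b'$ — here Principle II does the work of fixing temperatures, and we must check that the new green paths $a'\to\ldots\to v_{i,j}$ and $v_{i,j}\to\ldots\to b'$ have length $\le 4$ so the structure stays P2-ready. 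Fourth, run $S_{color}$ to fill in the missing opposite-color/opposite-temperature edges, yielding $\LL{m+1}{k+1}{\empty}$. Iterate $m+1$ times; by Fact~\ref{iso}, $\LL{m+1}{m+1}{\empty}=\GG{m+1}{\empty}$. One must also track the ``$\$$'' alternative: at the top-right corner $v_{m+1,m+1}$ (or earlier, wherever a $\$$ appeared during the $\PP{}{}$ phase), \textit{Fugitive} may satisfy a $Q_{good}^{15}$-request with $G(\$^C)$ instead, which closes off the structure into $\GG{k}{\$}$ for the appropriate $k$; this is handled verbatim as in the last paragraph of the proof of Lemma~\ref{Pk}.

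\textbf{Main obstacle.} The delicate point will be the bookkeeping that shows $Q_{good}^{10}$ and $Q_{good}^{11}$ generate \emph{exactly} the requests that complete unit squares on the next anti-diagonal, and no spurious ones — i.e.\ that every green length-two path labelled $\CC{B}{\hor}{W}{}{}\CC{A}{\ver}{W}{}{}$ or $\CC{A}{\hor}{C}{}{}\CC{B}{\ver}{C}{}{}$ (and the reverses) in the current $\LL{m+1}{k}{\empty}$ either already has its closing pair or else sits on the growing frontier. This requires a careful case analysis of which vertices lie on which anti-diagonal and the parity condition (label of edges leaving $v_{i,j}$ is from $\CC{A}{}{}{}{}$ iff $i+j$ even) — one has to verify the $A$/$B$ and $H$/$V$ tags on the forced new edges are forced to be the ones prescribed in the definition of $\GG{m+1}{\empty}$, which is where languages $10$ and $11$'s precise shape (pairing $B$-horizontal-warm with $A$-vertical-warm, etc.) is essential. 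A secondary but real obstacle is checking the P2-ready invariant is maintained: as the grid grows, the \emph{graph} distance from $a'$ to an interior vertex along grid edges grows linearly, so the ``length $\le 4$'' clause can only be satisfied because of the direct $x$- and $y$-edges; one must be careful that these direct edges are added (by the $x,y$-languages) \emph{before} any request that relies on P2-readiness is invoked, exactly as Stage~I had to precede Principle~II.
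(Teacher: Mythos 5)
Your overall plan --- grow the grid band by band out of the initial path, using $Q_{good}^{10}$ and $Q_{good}^{11}$ to close unit squares on the frontier, then $Q_{good}^{12}$, $Q_{good}^{13}$ and $S_{color}$ to attach the $x$/$y$ edges and the opposite-colour counterparts, with Principle II fixing temperatures and Fact~\ref{iso} identifying $\LL{m+1}{m+1}{\empty}$ with $\GG{m+1}{\empty}$ --- is exactly the paper's (its Lemmas~\ref{L1}, \ref{layer}, \ref{nodolar} and \ref{Lmm}). But there is one genuine gap: your treatment of the $\$$ branch is wrong. You claim that if Stage~II ends in $\PP{k}{\$}$ then ``no further languages generate requests (Exercise~\ref{noreqG}), so $\GG{k}{\$}$ is reached trivially.'' Exercise~\ref{noreqG} is about the completed grid $\GG{m}{\$}$, not about the path $\PP{k}{\$}$; for the path, Exercise~\ref{noreq} says precisely that $Q_{good}^{10}$ and $Q_{good}^{11}$ \emph{do} still generate requests, and these are exactly the requests that drive all of Stage~III. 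Since $\PP{k}{\$}$ is a path and $\GG{k}{\$}$ is a full $k\times k$ grid, nothing is reached trivially: \textit{Crocodile} must run the same layer-by-layer strategy on $\PP{k}{\$}$ to force $\LL{k}{1}{\$},\dots,\LL{k}{k}{\$}=\GG{k}{\$}$ (this is what the paper's Lemmas~\ref{L1} and~\ref{layer} do; the $\$$-free case is the one dispatched by analogy in Lemma~\ref{nodolar}). Without this, the endgame of Section~\ref{ulga} collapses in the $\$$ case, because the contradiction with conditions (b2)/(b3) needs the red warm edges to form an actual $k\times k$ grid. Relatedly, your ``$\$$ alternative during Stage~III'' is a non-issue: once the path phase is over, $\$^{C}$ and $\$^{W}$ occur only in $Q_{good}^{5}$ and $Q_{good}^{15}$, and neither generates requests while the layers are being built.

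A secondary inaccuracy: $\PP{m+1}{\empty}$ is not the ``bottom row'' of the grid and is not $\LL{m+1}{1}{\empty}$ up to the relabelling $v_i\leftrightarrow v_{i,0}$. Its edges alternate $\CC{A}{\hor}{\empty}{\empty}{\empty}$ and $\CC{B}{\ver}{\empty}{\empty}{\empty}$, so it embeds as the staircase along the \emph{main diagonal} ($v_{2i}\mapsto v_{i,i}$, $v_{2i+1}\mapsto v_{i+1,i}$), and passing from this staircase to the width-one band $\LL{m+1}{1}{\empty}$ (which contains \emph{both} off-diagonals $|i-j|=1$) is itself a nontrivial round of play with language $11$ --- the paper's Lemma~\ref{L1}. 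This also explains why the paper alternates $S^{odd}$ (language $11$) and $S^{even}$ (language $10$) rather than playing both at every layer: the $A$/$B$ parity of the frontier edges flips from one diagonal to the next, and the ``generates exactly these requests'' bookkeeping you correctly flag as the main obstacle is only clean if the right one of the two languages is invoked at each step.
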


Notice that by Exercise~\ref{iso}, in order  to prove Lemma~\ref{grid} it is enough to prove that for any $m \in \mathbb{N}_{+}$ \textit{Crocodile} can force \textit{Fugitive} to build a structure isomorphic to either $\LL{m+1}{m+1}{\empty}$ or to $\LL{k}{k}{\$}$ for some $k \leq m$.

As we said, we assume  that  {\em Crocodile} already forced {\em Fugitive} to build a structure isomorphic to either $\PP{m+1}{\empty}$ or to $\PP{k}{\$}$ for some $k \leq m$.
Rename each $v_i$ in this  $\PP{m+1}{\empty}$ (or $\PP{k}{\$}$) as $v_{i,i}$.
If the structure which was built is $\PP{m+1}{\empty}$  we will show a strategy  leading
to $\LL{m+1}{m+1}{\empty}$ and when $\PP{k}{\$}$ was built, we will show a strategy  leading to $\LL{k}{k}{\$}$. 

Now we define a sequence of strategies $S_{layer}^{k}$, which, similarly to strategies for building $\PP{\bullet}{\bullet}$ consist only of languages from $\mathcal{Q}_{good}$, so instead of writing ${Q}_{good}^{i}$ we will just write $i$.

\vspace{0mm}
Let:
\begin{itemize}
\item $S^{odd} \coloneqq (11) \concat S_{color} \concat (12,13) \concat S_{color}$,
\item $S^{even} \coloneqq (10) \concat S_{color} \concat (12,13) \concat S_{color}$,
\item $
    S_{layer}^{k} \coloneqq
    \begin{cases}
      [\hspace{0.5mm}],  & \text{if}\ k = 0 \\
      S_{layer}^{k-1} \concat S^{odd} & \text{if}\ k \ odd \\
      S_{layer}^{k-1} \concat S^{even}  & \text{otherwise}
    \end{cases}
  	  $
\end{itemize}

\begin{lemma}\label{L1}
For all $k \in \mathbb{N}$ strategy $S_{layer}^{1}$ applied to the current structure $\PP{k}{\$}$ forces {\em Fugitive} to build $\LL{k}{1}{\$}$.
\end{lemma}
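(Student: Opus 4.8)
\textbf{Proof plan for Lemma~\ref{L1}.}
The plan is to mimic the structure of the proof of Lemma~\ref{Pk}, but now the strategy $S_{layer}^1 = S^{odd}$ (since $k=1$ is odd and $S_{layer}^0 = [\hspace{0.5mm}]$) does the work of attaching a fresh off-diagonal band to $\PP{k}{\$}$. I would start from the hypothesis that the current structure is (isomorphic to) $\PP{k}{\$}$, with the vertices of the diagonal renamed $v_{i,i}$ for $i\in[0,k]$, so that $v_{k,k}$ is the endpoint carrying the extra $G(\$^C)$ and $R(\$^W)$ edges to $b'$. Then I would walk through the languages of $S^{odd} = (11)\concat S_{color}\concat(12,13)\concat S_{color}$ one at a time, exactly as in Lemma~\ref{Pk}, tracking which requests each language generates and how Principle~II (the structure stays \textit{P2-ready} throughout, since all new vertices sit within distance $4$ of $a'$ and $b'$) forces the colour/temperature of each satisfying edge.

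The key step is language $11 = \CC{A}{\hor}{C}{\empty}{\empty}\CC{B}{\ver}{C}{\empty}{\empty} + \CC{A}{\ver}{W}{\empty}{\empty}\CC{B}{\hor}{W}{\empty}{\empty}$. In $\PP{k}{\$}$ the green diagonal path reads $\ldots\CC{A}{\hor}{C}{\empty}{\empty}\CC{B}{\ver}{C}{\empty}{\empty}\ldots$ at every even-index step, so $Q_{good}^{11}$ generates requests of the form $\pair{v_{2i,2i},v_{2i+2,2i+2},Q_{good}^{11\rightarrow}}$; by Principle~II these must be satisfied by a red path labelled $R(\CC{A}{\ver}{W}{\empty}{\empty}\CC{B}{\hor}{W}{\empty}{\empty})$, which creates the new off-diagonal vertices $v_{2i,2i+1}$ (going up then right) — or, symmetrically, by the red band below the diagonal depending on orientation; the point is that it forces exactly the vertices $v_{i,j}$ with $|i-j|=1$ to appear, which is precisely the vertex set of $\LL{k}{1}{\$}$ minus what was already present. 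After that, the first $S_{color}$ fills in the green/cold counterparts of these new red edges (using Lemma~\ref{color}), then $(12,13)$ generate the requests $\pair{a',v_{i,j},Q_{good}^{12\rightarrow}}$ and $\pair{v_{i,j},b',Q_{good}^{13\rightarrow}}$ forcing the red $x^W$ and $y^W$ spokes to and from each new vertex, and the final $S_{color}$ adds their green/cold counterparts. Invoking Lemma~\ref{color} once more, what remains is only edges of opposite colour and temperature, so the structure built is isomorphic to $\LL{k}{1}{\$}$.

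I would also need to check the two boundary subtleties: near $v_{0,0}$ the band only extends in one direction (there is no $v_{0,-1}$), and near $v_{k,k}$ the $\$$-edges are already present and no new request stems from them (this is why the hypothesis uses $\PP{k}{\$}$ and not $\PP{k}{\empty}$), so the diagonal does not try to grow past $v_{k,k}$. One must confirm that $Q_{good}^{11}$, $Q_{good}^{12}$, $Q_{good}^{13}$ generate \emph{no other} requests than the ones listed — this is the analogue of the bulleted "this is since" list in the proof of Lemma~\ref{Pk}, and it follows because the only paths in $\PP{k}{\$}$ labelled by the relevant words are along the diagonal. The main obstacle, as in Lemma~\ref{Pk}, is the bookkeeping: being sure that no spurious request (for instance from $Q_{good}^{11}$'s second disjunct already being present as a red band, or from a $\CC{B}{\ver}{C}{\empty}{\empty}\CC{B}{\ver}{C}{\empty}{\empty}$ pattern triggering $Q_{ugly}^3$) lets \textit{Fugitive} deviate, and that the resulting graph matches the definition of $\LL{k}{1}{\$}$ on the nose, including orientations and the $A$/$B$ parity of edges leaving each $v_{i,j}$.
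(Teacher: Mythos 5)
Your proposal follows essentially the same route as the paper's proof: it walks through the languages of $S^{odd} = (11)\concat S_{color}\concat(12,13)\concat S_{color}$ one by one, identifies the requests each generates in $\PP{k}{\$}$ (noting that the second disjunct of $Q_{good}^{11}$ generates none), and uses Principle~II to force the unique admissible responses, arriving at $\LL{k}{1}{\$}$. Apart from some muddled indexing of the newly created band (the requests run between consecutive even-indexed diagonal vertices and each creates exactly one new superdiagonal vertex), this matches the paper's argument.
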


\begin{proof}
Assume the current structure is  $\PP{k}{\$}$. Consider languages from $S_{layer}^{1}$:
\begin{itemize}
\item $11 = 
\CC{A}{\hor}{C}{\empty}{\empty}\CC{B}{\ver}{C}{\empty}{\empty} + \CC{A}{\ver}{W}{\empty}{\empty}\CC{B}{\hor}{W}{\empty}{\empty}$: 
This language generates one request of the form  $\langle v_{i},v_{i+2},Q_{good}^{11 \rightarrow} \rangle $ for every  $ i \in [0,2k-2] \}$. Each of these requests results from a green path labeled with $G(\CC{A}{\hor}{C}{\empty}{\empty}\CC{B}{\ver}{C}{\empty}{\empty})$ connecting $v_i$ and $v_{i+2}$. 

Notice that there are no requests generated by $Q_{good}^{11 \leftarrow}$. It is because neither $ \CC{A}{\ver}{W}{\empty}{\empty}$ nor  $\CC{B}{\hor}{W}{\empty}{\empty}$ occurs in $\PP{k}{\$}$.  

All generated requests have to be satisfied with $R(\CC{A}{\ver}{W}{\empty}{\empty}\CC{B}{\hor}{W}{\empty}{\empty})$ by Principle II. Notice that when satisfying each request a new vertex is created.

\item $S_{color} = (3,4,5,6,7,8,9)$: This sequence of languages adds missing green edges $G\CC{A}{\hor}{C}{\empty}{\empty}$ and $G\CC{B}{\ver}{C}{\empty}{\empty}$ to the edges $R\CC{A}{\hor}{W}{\empty}{\empty}$ and $R\CC{B}{\ver}{W}{\empty}{\empty}$ created by language $11$.

\item $12 = x^{C} \left(\CC{A}{\hor}{C}{\empty}{\empty} + \CC{B}{\hor}{C}{\empty}{\empty} + \CC{A}{\ver}{C}{\empty}{\empty} + \CC{B}{\ver}{C}{\empty}{\empty} \right) + x^{C} + x^{W}$:  This language generates requests of the form $\{ \langle a',u,Q_{good}^{12 \rightarrow} \rangle$ for all new vertices $u$ created by language $11$. Each of these requests results from a green path labeled with $x^{C} \left(\CC{A}{\hor}{C}{\empty}{\empty} + \CC{B}{\hor}{C}{\empty}{\empty} + \CC{A}{\ver}{C}{\empty}{\empty} + \CC{B}{\ver}{C}{\empty}{\empty} \right)$ connecting $a'$ and $u$, for some vertex $u$ created by language $11$.

Notice that there are no other requests generated since by Lemma~\ref{color} after applying strategy $S_{color}$ each edge labeled with $G(x^C)$ has its counterpart labeled with $R(x^W)$. 

All generated requests have to be satisfied with $R(x^{W})$ by Principle II. 

\item $13 =  \left(\CC{A}{\hor}{C}{\empty}{\empty} + \CC{B}{\hor}{C}{\empty}{\empty} + \CC{A}{\ver}{C}{\empty}{\empty} + \CC{B}{\ver}{C}{\empty}{\empty} \right)y^{C} + y^{C} + y^{W}$: This language generates requests of the form $\{ \langle u,b',Q_{good}^{13 \rightarrow} \rangle$ for all new vertices $u$ created by language $11$. Each of these requests results from a green path labeled with $(\left \CC{A}{\hor}{C}{\empty}{\empty} + \CC{B}{\hor}{C}{\empty}{\empty} + \CC{A}{\ver}{C}{\empty}{\empty} + \CC{B}{\ver}{C}{\empty}{\empty} \right)y^{C}$ connecting $u$ and $b'$, for some vertex $u$ created by language $11$.

Notice that there are no other requests generated since by Lemma~\ref{color} after applying strategy $S_{color}$ each edge labeled with $G(y^C)$ has its counterpart labeled with $R(y^W)$. 

All these requests have to be satisfied with $R(y^{W})$ by Principle II. 

\item $S_{color} = (3,4,5,6,7,8,9)$: This sequence of languages adds missing green edges $G(x^{C})$ and $G(y^{C})$ to edges added by languages $12$ and $13$.
\end{itemize}
\end{proof}

\begin{lemma}\label{layer}
For all $k,m \in \mathbb{N}, k < m$ strategy $S^{odd}$ (for $k+1$ odd) and $S^{even}$ (for $k+1$ even) applied to $\LL{m}{k}{\$}$ forces {\em Fugitive} to build $\LL{m}{k+1}{\$}$.
\end{lemma}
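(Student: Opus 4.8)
The plan is to mimic the proof of Lemma~\ref{L1} almost verbatim, the only difference being that we now start from a partial grid $\LL{m}{k}{\$}$ rather than a path $\PP{k}{\$}$, and we must be careful that the new ``diagonal layer'' we add is exactly the set of vertices $v_{i,j}$ with $|i-j| = k+1$, $i,j\in[0,m]$, and nothing more. So first I would fix the current structure to be (isomorphic to) $\LL{m}{k}{\$}$, with vertices named $v_{i,j}$ for $i,j\in[0,m]$, $|i-j|\le k$, and walk through the languages of $S^{odd}$ (the case $k+1$ odd; the case $k+1$ even is symmetric, using $Q_{good}^{10}$ in place of $Q_{good}^{11}$) one by one.

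The key step is the first language. For $k+1$ odd, language $11 = \CC{A}{\hor}{C}{\empty}{\empty}\CC{B}{\ver}{C}{\empty}{\empty} + \CC{A}{\ver}{W}{\empty}{\empty}\CC{B}{\hor}{W}{\empty}{\empty}$ generates, via its left disjunct, a request $\pair{v_{i,j},v_{i',j'},Q_{good}^{11\rightarrow}}$ for every green path labelled $G(\CC{A}{\hor}{C}{\empty}{\empty}\CC{B}{\ver}{C}{\empty}{\empty})$ inside $\LL{m}{k}{\$}$. I would argue (by inspecting the edge-labelling rules of $\GG{m}{\empty}$: the orientation alternates H/V and the A/B tag is determined by parity of $i+j$) that such green two-edge paths go $v_{i,j}\to v_{i+1,j}\to v_{i+1,j+1}$ exactly when the edges are present and $v_{i,j}$ lies on the lower boundary of the current band, i.e. $i-j=k$ (symmetrically, the other family of $Q_{good}^{11}$-requests, from $G(\CC{A}{\ver}{W}{\empty}{\empty}\CC{B}{\hor}{W}{\empty}{\empty})$-paths — which here are \emph{red}, hence may also trigger $Q_{good}^{11\leftarrow}$ — lands on the upper boundary $j-i=k$). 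By Principle~II each such request must be answered with $R(\CC{A}{\ver}{W}{\empty}{\empty}\CC{B}{\hor}{W}{\empty}{\empty})$ (resp. $G(\CC{A}{\hor}{C}{\empty}{\empty}\CC{B}{\ver}{C}{\empty}{\empty})$), and each answer introduces exactly one new vertex; checking the index arithmetic shows these new vertices are precisely the $v_{i,j}$ with $|i-j|=k+1$. Then $S_{color}$ fills in the opposite-colour/opposite-temperature edges on the newly created arcs (Lemma~\ref{color}); languages $12$ and $13$ force, by Principle~II, the red $x^W$-edges from $a'$ and the red $y^W$-edges to $b'$ at the new vertices (again no spurious requests appear, because after $S_{color}$ every $G(x^C)$ already has its $R(x^W)$ counterpart and likewise for $y$); and a final $S_{color}$ supplies the missing $G(x^C)$ and $G(y^C)$. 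The resulting structure is then $\LL{m}{k+1}{\$}$, possibly up to renaming the freshly added vertices — which is why the statement says ``forces {\em Fugitive} to build''.

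The main obstacle I anticipate is bookkeeping rather than a conceptual hurdle: one must verify that \emph{no other} requests are generated at any intermediate stage — in particular that the two-letter patterns of languages $11$, $12$, $13$ do not match anywhere except along the current diagonal boundary, and that the $\$$-edge $\pair{v_{m,m},b'}$ and the corner vertex $v_{m,m}$ do not create an unexpected request (this is where $k\le m$, together with Exercise~\ref{noreqG} applied at the end, matters). One also needs to confirm that every intermediate structure remains \textit{P2-ready} so that Principle~II stays applicable; as in the proof of Lemma~\ref{Pk}, this follows because all new vertices sit within distance $4$ of $a'$ and of $b'$ (they are each one $x$-edge from $a'$ and one $y$-edge from $b'$). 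Once these routine checks are in place, the lemma follows by the same one-language-at-a-time argument as Lemma~\ref{L1}.
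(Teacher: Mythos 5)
Your proposal follows the paper's proof essentially verbatim: the paper likewise walks through $S^{even}$ (resp.\ $S^{odd}$) one language at a time, shows that language $10$ (resp.\ $11$) generates requests only between consecutive vertices of the two boundary diagonals, invokes Principle~II to force the unique admissible answers (thereby creating exactly the new layer $|i-j|=k+1$), and then applies $S_{color}$, $12$, $13$, $S_{color}$ exactly as you describe, checking P2-readiness along the way. The only slip is a swapped bookkeeping detail: for language $11$ the green (cold) word $\CC{A}{\hor}{C}{\empty}{\empty}\CC{B}{\ver}{C}{\empty}{\empty}$ reads a horizontal edge first, so the green two-edge paths $v_{i,j}\to v_{i+1,j}\to v_{i+1,j+1}$ that generate $Q_{good}^{11\rightarrow}$ requests emanate from the boundary $j-i=k$ (where $v_{i+1,j}$ lies in the band but $v_{i,j+1}$ does not), not from $i-j=k$ as you state, while the red warm paths (which should be written $R(\cdot)$, not $G(\cdot)$) generate the $Q_{good}^{11\leftarrow}$ requests on $i-j=k$ --- interchanging the two boundaries in that sentence fixes this without affecting the argument.
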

\begin{proof}
Assume the Escape game starts from $\LL{m}{k}{\$}$ for odd $k < m$. The proof for the case where $k$ is even is analogous. Consider languages from $S^{even}$:

\begin{itemize}
\item $\mathbf{10} = \CC{B}{\hor}{W}{\empty}{\empty}\CC{A}{\ver}{W}{\empty}{\empty} + \CC{B}{\ver}{C}{\empty}{\empty}\CC{A}{\hor}{C}{\empty}{\empty}$: generates exactly \\$\{ \langle v_{i,j},v_{i+1,j+1},Q_{good}^{10 \rightarrow} \rangle | i-j = k, i,j \in [0,m-1] \} \cup$ $\{ \langle v_{i,j},v_{i+1,j+1},Q_{good}^{10 \leftarrow} \rangle | i-j = k, i,j \in [0,m-1] \}$. All requests in the first group result from paths labeled with $G(\CC{B}{\ver}{C}{\empty}{\empty}\CC{A}{\hor}{C}{\empty}{\empty})$ and all requests in the second group result from paths labeled with $R(\CC{B}{\hor}{W}{\empty}{\empty}\CC{A}{\ver}{W}{\empty}{\empty})$.

All requests in the first group have to be satisfied with $R(\CC{B}{\hor}{W}{\empty}{\empty}\CC{A}{\ver}{W}{\empty}{\empty})$ (name the new vertices $v_{i+1,j}$) and all requests in the second group have to be satisfied with $G(\CC{B}{\ver}{C}{\empty}{\empty}\CC{A}{\hor}{C}{\empty}{\empty})$ (name the new vertices $v_{i,j+1}$). All happens by Principle II.

\item $S_{color}$: adds missing edges of opposite colors incident to newly created vertices by language $11$.

\item $\mathbf{12} = x^{C} \left(\CC{A}{\hor}{C}{\empty}{\empty} + \CC{B}{\hor}{C}{\empty}{\empty} + \CC{A}{\ver}{C}{\empty}{\empty} + \CC{B}{\ver}{C}{\empty}{\empty} \right) + x^{C} + x^{W}$: generates exactly $\{ \langle a',v_{i,j},Q_{good}^{12 \rightarrow} \rangle | i-j = k+1, i,j \in [0,m] \} \cup \{ \langle a',v_{i,j},Q_{good}^{12 \rightarrow} \rangle | j-i = k+1, i,j \in [0,m] \}$. Each of these requests results from a green path labeled with $x^{C} \left(\CC{A}{\hor}{C}{\empty}{\empty} + \CC{B}{\hor}{C}{\empty}{\empty} + \CC{A}{\ver}{C}{\empty}{\empty} + \CC{B}{\ver}{C}{\empty}{\empty} \right)$ connecting $a'$ and $u$, for some vertex $u$ created by language $10$.

Notice that there are no other requests generated since by Lemma~\ref{color} after applying strategy $S_{color}$ each edge labeled with $G(x^C)$ has its counterpart labeled with $R(x^W)$

All generated requests have to be satisfied with $R(x^{W})$ by Principle II. 

\item $\mathbf{13} = \left(\CC{A}{\hor}{C}{\empty}{\empty} + \CC{B}{\hor}{C}{\empty}{\empty} + \CC{A}{\ver}{C}{\empty}{\empty} + \CC{B}{\ver}{C}{\empty}{\empty} \right)y^{C} + y^{C} + y^{W}$: generates exactly $\{ \langle v_{i,j},b',Q_{good}^{13 \rightarrow} \rangle | i-j = k+1, i,j \in [0,m] \} \cup \{ \langle v_{i,j},b',Q_{good}^{13 \rightarrow} \rangle | j-i = k+1, i,j \in [0,m] \}$. Each of these requests results from a green path labeled with $(\left \CC{A}{\hor}{C}{\empty}{\empty} + \CC{B}{\hor}{C}{\empty}{\empty} + \CC{A}{\ver}{C}{\empty}{\empty} + \CC{B}{\ver}{C}{\empty}{\empty} \right)y^{C}$ connecting $u$ and $b'$, for some vertex $u$ created by language $10$.

Notice that there are no other requests generated since by Lemma~\ref{color} after applying strategy $S_{color}$ each edge labeled with $G(y^C)$ has its counterpart labeled with $R(y^W)$

All generated requests have to be satisfied with $R(y^{W})$ by Principle II. 

\item $S_{color}$: adds edges with labels $G(x^{C})$ and $G(y^{C})$ to edges added by languages $12$ and $13$.
\end{itemize}

\end{proof}

\begin{lemma}\label{nodolar}
For all $k,m \in \mathbb{N}, k < m$ strategy $S_{layer}^{1}$ applied to $\PP{k}{\empty}$ forces {\em Fugitive} to build $\LL{k}{1}{\empty}$, strategy $S^{odd}$ (for $k+1$ odd) and $S^{even}$ (for $k+1$ even) applied to $\LL{m}{k}{\empty}$ forces {\em Fugitive} to build $\LL{m}{k+1}{\empty}$.
\end{lemma}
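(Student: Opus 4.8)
The plan is to imitate the proofs of Lemma~\ref{L1} and Lemma~\ref{layer} almost verbatim, the only (purely cosmetic) difference being that we start from a dollar-free structure $\PP{k}{\empty}$ (resp. $\LL{m}{k}{\empty}$) rather than from $\PP{k}{\$}$ (resp. $\LL{m}{k}{\$}$), so the two extra edges $\pair{v_{m,m},b'}$ labelled $G(\$^{C})$ and $R(\$^{W})$ are simply absent throughout. First I would observe that none of the languages appearing in $S_{layer}^{1}$, $S^{odd}$ or $S^{even}$ --- namely $Q_{good}^{10}$, $Q_{good}^{11}$, $Q_{good}^{12}$, $Q_{good}^{13}$ and the members of $S_{color}=(3,4,5,6,7,8,9)$ --- mention the symbol $\$^{C}$ or $\$^{W}$ at all. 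Hence the set of requests these languages generate in $\PP{k}{\empty}$ (resp. $\LL{m}{k}{\empty}$) is exactly the same as the set they generate in $\PP{k}{\$}$ (resp. $\LL{m}{k}{\$}$): adding or removing the $\$$-edges changes nothing that these languages can ``see''. Therefore the move-by-move analysis carried out in the proofs of Lemma~\ref{L1} and Lemma~\ref{layer} --- listing for each language in the strategy the requests it generates, checking via Principle~II (Lemma~\ref{principle-ii}) and Lemma~\ref{color} that each request has a forced answer, and checking that no other requests arise --- transfers word for word.

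Concretely I would argue as follows. For the first assertion, apply the argument of Lemma~\ref{L1} with $\PP{k}{\empty}$ in place of $\PP{k}{\$}$: language $11$ generates the requests $\pair{v_i,v_{i+2},Q_{good}^{11\rightarrow}}$ for $i\in[0,2k-2]$ (and none of the form $Q_{good}^{11\leftarrow}$, since neither $\CC{A}{\ver}{W}{\empty}{\empty}$ nor $\CC{B}{\hor}{W}{\empty}{\empty}$ occurs in $\PP{k}{\empty}$, exactly as before), each forced to be answered by $R(\CC{A}{\ver}{W}{\empty}{\empty}\CC{B}{\hor}{W}{\empty}{\empty})$ by Principle~II, creating a new vertex each time; then $S_{color}$ fills in the missing green counterparts; then $12$ and $13$ force the $R(x^W)$ and $R(y^W)$ edges from $a'$ and to $b'$ to the new vertices, and a final $S_{color}$ adds their green counterparts. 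The resulting structure is exactly $\LL{k}{1}{\empty}$. We must also note --- and this is the only place any genuine verification is needed --- that the structures obtained along the way remain \textit{P2-ready}, so that Principle~II is legitimately applicable: this is immediate since all new vertices lie on paths of length $\le 4$ from $a'$ and of length $\le 4$ to $b'$, just as in Lemma~\ref{L1}. For the second assertion I would likewise replay the proof of Lemma~\ref{layer} with $\LL{m}{k}{\empty}$ in place of $\LL{m}{k}{\$}$, treating the cases $k+1$ even (strategy $S^{even}$, language $10$) and $k+1$ odd (strategy $S^{odd}$, language $11$) exactly as there; removing the $\$$-edges removes no requests because $10$, $11$, $12$, $13$ and $S_{color}$ do not involve $\$$, and the forced answers (by Principle~II and Lemma~\ref{color}) and the P2-readiness of the intermediate structures are unchanged.

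I do not expect any real obstacle here: the lemma is essentially a ``dollar-free twin'' of the pair Lemma~\ref{L1}/Lemma~\ref{layer}, and the only thing to be careful about is to double-check that none of the strategy languages can generate a request that was being ``blocked'' by the presence of the $\$$-edges in the earlier lemmas --- but this cannot happen, because a $\$$-edge can only contribute to a request generated by a language mentioning $\$^{C}$ or $\$^{W}$ (namely $Q_{good}^{5}$, $Q_{good}^{15}$, and $Q_{ugly}^{2}$, $Q_{bad}^{2}$), and none of those appear in $S_{layer}^{1}$, $S^{odd}$ or $S^{even}$. So the proof is a routine transfer, and I would present it by simply pointing to the proofs of Lemma~\ref{L1} and Lemma~\ref{layer} and remarking that the $\$$-edges play no role in either argument.
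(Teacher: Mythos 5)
Your proposal is correct and follows essentially the same route as the paper's own (very terse) proof: both reduce the lemma to the observation that the move-by-move analyses of Lemmas~\ref{L1} and~\ref{layer} are insensitive to the presence or absence of the two $\$$-edges, since the strategies involved cannot ``see'' them. One small factual slip is worth flagging: you assert that none of the languages occurring in $S_{layer}^{1}$, $S^{odd}$, $S^{even}$ mentions $\$^{C}$ or $\$^{W}$, but $Q_{good}^{5}=\$^{C}+\$^{W}$ is language $5$, and $S_{color}=(3,4,5,6,7,8,9)$ is a constituent of all three strategies. The argument is unharmed --- in $\PP{k}{\empty}$ and $\LL{m}{k}{\empty}$ there are no $\$$-labelled edges at all, so $Q_{good}^{5}$ generates no requests, just as it generated none in the dollared structures of Lemmas~\ref{L1} and~\ref{layer} (where the pair $G(\$^{C})$, $R(\$^{W})$ is already complete) --- and this is exactly the repair the paper itself makes: it notes that $\$^{C},\$^{W}$ occur only in $Q_{good}^{5}$ and $Q_{good}^{15}$ and that these languages generated no requests in the course of those proofs.
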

\begin{proof}
Similar analysis to that in Lemma~\ref{L1} and Lemma~\ref{layer} can be applied here. Structures $\PP{m}{\empty}$ and $\PP{m}{\$}$ differ by only two edges labeled with $R(\$^{W})$ and $G(\$^{C})$. Letters $\$^{C}$ and $\$^{W}$ occur only in languages $Q_{good}^{5}$ and $Q_{good}^{15}$, these languages didn't generate any request in the process of building $\LL{m}{k+1}{\$}$ from $\LL{m}{k}{\$}$ in the proof of Lemma~\ref{layer} and building $\LL{m}{1}{\$}$ from $\PP{m}{\$}$ in the proof of Lemma~\ref{L1}.
\end{proof}


\begin{lemma}\label{Lmm}
For all $m \in \mathbb{N}$ strategy $S_{layer}^{m}$ forces {\em Fugitive} to build $\LL{m}{m}{\$}$ from $\PP{m}{\$}$ and $\LL{m}{m}{\empty}$ from $\PP{m}{\empty}$.
\end{lemma}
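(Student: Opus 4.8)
The plan is to prove both halves of the lemma by a single induction on the layer index, bootstrapping off Lemmas~\ref{L1}, \ref{layer} and~\ref{nodolar} (recall throughout that $\PP{m}{\empty}$ and $\PP{m}{\$}$, and hence the whole statement, only make sense for $m\ge 1$). Concretely, I will establish the following stronger claim by induction on $k$ for $1\le k\le m$: strategy $S_{layer}^{k}$, applied to the current structure $\PP{m}{\$}$, forces {\em Fugitive} to build $\LL{m}{k}{\$}$; and, in parallel, strategy $S_{layer}^{k}$ applied to $\PP{m}{\empty}$ forces him to build $\LL{m}{k}{\empty}$. Taking $k=m$ at the end gives exactly Lemma~\ref{Lmm}.

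For the base case $k=1$ there is nothing to do beyond citing Lemma~\ref{L1} for the ``$\$$'' version and the first part of Lemma~\ref{nodolar} for the ``$\empty$'' version (with their parameter instantiated to our $m$). For the inductive step, assume the claim for some $k$ with $1\le k<m$. By the definition of $S_{layer}^{k}$ we have $S_{layer}^{k+1}=S_{layer}^{k}\concat S^{odd}$ when $k+1$ is odd and $S_{layer}^{k+1}=S_{layer}^{k}\concat S^{even}$ when $k+1$ is even. Since a {\em Crocodile} strategy given by a concatenation $S\concat S'$ is by definition ``run $S$ to completion, then run $S'$'', the prefix $S_{layer}^{k}$ first brings the game — by the induction hypothesis — to a structure isomorphic to $\LL{m}{k}{\$}$ (resp. $\LL{m}{k}{\empty}$), and then Lemma~\ref{layer} (resp. the second part of Lemma~\ref{nodolar}), which applies precisely because $k<m$, shows that the suffix $S^{odd}$ or $S^{even}$ forces {\em Fugitive} from $\LL{m}{k}{\$}$ to $\LL{m}{k+1}{\$}$ (resp. from $\LL{m}{k}{\empty}$ to $\LL{m}{k+1}{\empty}$). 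Composing the two phases yields the claim for $k+1$, and the induction terminates at $k=m$.

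The one point that deserves care — and which I expect to be the only non-routine part — is the legitimacy of this decomposition of the play along a concatenation of strategies. To invoke Lemma~\ref{layer} with $\LL{m}{k}{\$}$ as its starting position, one needs that after the prefix $S_{layer}^{k}$ is exhausted the current structure is determined up to isomorphism (note that, unlike in Lemma~\ref{Pk}, every request handled inside $S_{layer}^{k}$ is \emph{forced} by Principle~II, so {\em Fugitive} has no choice), and one also needs that running the suffix does not ``reopen'' any request of a language appearing in the prefix. The latter is the monotonicity already used implicitly in the proofs of Lemmas~\ref{L1} and~\ref{layer} (each new edge added by a later phase is accompanied, via $S_{color}$, by its opposite‑colour counterpart, and the short words of $Q_{good}^{1\text{--}9}$ plus the placement constraints guarantee no fresh request for an earlier language); when writing the full proof I would spell this out as an explicit invariant maintained after each sub‑strategy, so that the inductive step is a clean ``prefix $\Rightarrow$ intermediate structure $\Rightarrow$ suffix'' argument.
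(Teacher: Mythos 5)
Your proposal is correct and follows essentially the same route as the paper, whose own proof of this lemma is just the one-line remark that it is ``an easy consequence of Lemmas~\ref{L1}, \ref{layer}, \ref{nodolar} and the definition of $S_{layer}^{m}$''; your induction on the layer index $k$, with Lemma~\ref{L1} and the first part of Lemma~\ref{nodolar} as the base case and Lemma~\ref{layer} together with the second part of Lemma~\ref{nodolar} as the inductive step, is precisely the argument the authors leave implicit. The extra care you take about decomposing the play along strategy concatenation is a reasonable elaboration of what the paper takes for granted.
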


\subsection{Proof of Lemma~\ref{Lmm}}\label{Lmm-proof}
\begin{proof}
That is an easy consequence of Lemmas~\ref{L1}, \ref{layer}, \ref{nodolar} and the definition of $S_{layer}^{m}$.
\end{proof}

\begin{observation}
By Exercise~\ref{iso} Lemma~\ref{Lmm} proves Lemma~\ref{grid}. 
\end{observation}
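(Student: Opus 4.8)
The plan is to obtain Lemma~\ref{grid} by a direct concatenation of the forcing strategies assembled across the three stages, using Exercise~\ref{iso} to identify the ``full width'' partial grids $\LL{m}{m}{\empty}$ and $\LL{m}{m}{\$}$ with $\GG{m}{\empty}$ and $\GG{m}{\$}$. In other words, essentially all of the content of Lemma~\ref{grid} is already packed into Lemma~\ref{Lmm}, and the Observation merely records that rewriting the conclusion of Lemma~\ref{Lmm} through Exercise~\ref{iso} yields Lemma~\ref{grid}; no new mathematics is needed.

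Concretely, I would first recall the exit state of Stage~II: by Lemma~\ref{Pk}, starting from the initial structure $\database_{0}$ that Fugitive is forced (Principle~I) to choose from $Q_{start}$, Crocodile's strategy $S_m$ forces Fugitive to produce a structure isomorphic to either $\PP{m+1}{\empty}$ or to $\PP{k}{\$}$ for some $k\le m$, the choice being Fugitive's. Renaming each $v_i$ as $v_{i,i}$, this is exactly the hypothesis under which Stage~III operates. I would then split on this choice. If the structure reached is $\PP{m+1}{\empty}$, Crocodile appends the strategy $S_{layer}^{m+1}$; by Lemma~\ref{Lmm} (itself obtained from Lemma~\ref{L1}, Lemma~\ref{layer}, Lemma~\ref{nodolar} and the inductive definition of $S_{layer}^{\bullet}$) this forces Fugitive to build $\LL{m+1}{m+1}{\empty}$, which by Exercise~\ref{iso} equals $\GG{m+1}{\empty}$. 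If instead $\PP{k}{\$}$ was reached, Crocodile appends $S_{layer}^{k}$; by Lemma~\ref{Lmm} this forces Fugitive to build $\LL{k}{k}{\$}$, which by Exercise~\ref{iso} equals $\GG{k}{\$}$. The concatenation of $S_m$ with the appropriate $S_{layer}^{\bullet}$ is then a strategy witnessing Lemma~\ref{grid}, which is what the Observation asserts.

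I do not expect a genuine obstacle here --- the substance lives in the Stage~III lemmas, and this is a bookkeeping step. The points deserving a moment's attention are: (a) keeping the ``$\empty$'' and ``$\$$'' branches aligned all the way through Lemmas~\ref{L1}, \ref{layer} and \ref{nodolar}, so that the extra pair of edges $\pair{v_{m,m},b'}$ with labels $G(\$^{C})$ and $R(\$^{W})$ passes inertly through every strategy occurring in $S_{layer}^{\bullet}$ (using that $\$^{C},\$^{W}$ occur only in $Q_{good}^{5}$ and $Q_{good}^{15}$, which then generate no further requests, and that $\mathcal{Q}_{good}$ and $\mathcal{Q}_{ugly}$ generate no requests at all in $\GG{m}{\$}$ by Exercise~\ref{noreqG}, so the play really does stabilise); (b) noting that every language appearing in Stage~III lies in $\mathcal{Q}_{good}$, so Crocodile stays shade-blind throughout, which is what legitimises the uniform treatment of the labels from $\Sigma_{0}$; and (c) the harmless passage from ``build $X$'' to ``build a structure isomorphic to $X$'', since the whole Stage~III analysis is isomorphism-invariant. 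With these in hand Lemma~\ref{grid} is immediate, which completes the Observation.
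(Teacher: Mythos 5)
Your proposal is correct and follows exactly the paper's route: the paper likewise treats this Observation as pure bookkeeping, combining the Stage~II exit states ($\PP{m+1}{\empty}$ or $\PP{k}{\$}$) with Lemma~\ref{Lmm} and then identifying $\LL{m+1}{m+1}{\empty}$ with $\GG{m+1}{\empty}$ and $\LL{k}{k}{\$}$ with $\GG{k}{\$}$ via Exercise~\ref{iso}. The side remarks you flag (the inertness of the $\$$-edges and Crocodile's shade-blindness) are exactly the points the paper itself discharges in Lemma~\ref{nodolar} and the surrounding discussion.
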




\section{And now we finally see the shades again}\label{ulga}

Now we are ready to finish the proof of Lemma \ref{poprawnoscredukcji}. 
First assume the original instance of Our Grid Tiling Problem has no \textit{proper shading}.  
The following is straightforward from König's Lemma by noticing that if there were arbitrary grids with proper shading, then there would be an infinite one:
\begin{lemma}
\label{konig}
If an instance $I$ of OGTP has no \textit{proper shading} then there exist natural $m$ such that for any $k \geq m$ a square grid of size $k$ has no shading that satisfies conditions (a1), (a2), (b1) and (b3) of \textit{proper shading}.
\end{lemma}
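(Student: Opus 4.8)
The plan is a standard compactness (König's Lemma) argument, so I would prove the contrapositive: if for every $m$ there is some $k \geq m$ such that the square grid of size $k$ admits a shading satisfying (a1), (a2), (b1) and (b3), then $I$ admits a proper shading (of the infinite grid $\naturals\times\naturals$), contradicting the hypothesis.

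First I would set up a tree $T$. For $n \geq 1$ let $G_n$ be the finite square grid on vertex set $[n]\times[n]$, and let the nodes of $T$ at depth $n$ be exactly the shadings of $G_n$ that satisfy (a1), (a2), (b1) and (b3); a depth-$(n{+}1)$ node $\tau$ is declared a child of the depth-$n$ node obtained by restricting $\tau$ to $G_n$. This is well defined because conditions (a1) and (a2) constrain individual edges, (b1) constrains the bottom-left horizontal edge, and (b3) forbids certain length-$2$ paths, so each of them is inherited by the subgrid $G_n$ of $G_{n+1}$. Since $\mathcal{S}$ is finite and each $G_n$ has finitely many edges, $T$ is finitely branching.

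Next I would argue that $T$ is infinite: under the assumption, for each $n$ there is some $k \geq n$ and a shading $\sigma_k$ of $G_k$ satisfying (a1), (a2), (b1) and (b3), and restricting $\sigma_k$ to $G_n$ yields a node of $T$ at depth $n$; hence $T$ has nodes of every depth. By König's Lemma $T$ has an infinite branch $\tau_1 \subseteq \tau_2 \subseteq \cdots$, and I would set $\sigma = \bigcup_{n} \tau_n$, a shading of $\naturals\times\naturals$. Every edge and every length-$2$ path of $\naturals\times\naturals$ lies inside some $G_n$, on which $\sigma$ agrees with $\tau_n$, so $\sigma$ satisfies (a1), (a2), (b1) and (b3); moreover (b2) refers to the upper-right vertical edge, which does not exist in $\naturals\times\naturals$, so (b2) holds vacuously. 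Thus $\sigma$ is a proper shading of an (infinite) square grid, contradicting the assumption that $I$ has no proper shading, and the lemma follows.

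The argument has no real obstacle; the only point worth flagging is the observation that makes it work, namely that (b2) is the sole ``global'' requirement of a proper shading and it is vacuous on the infinite grid, which is exactly why one may drop (b2) in the finite statement, while (a1), (a2), (b1) and (b3) all have finite character and therefore transfer along the König branch.
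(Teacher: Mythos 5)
Your proof is correct and is exactly the argument the paper has in mind: the paper dispatches this lemma in one sentence as ``straightforward from K\"onig's Lemma by noticing that if there were arbitrarily large grids with such shadings, then there would be an infinite one,'' and your tree-of-restrictions construction, together with the observation that (b2) is vacuous on $\naturals\times\naturals$ while (a1), (a2), (b1), (b3) have finite character, is precisely the detailed version of that compactness argument.
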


Let $m$ be the value from Lemma~\ref{konig}. By Lemma~\ref{grid} {\em Crocodile} can force {\em Fugitive} to build a structure isomorphic to either $\GG{m+1}{\empty}$ or $\GG{k}{\$}$ for some $k \leq m$. Now suppose the play ended, in some final position ${\mathbb H}$ isomorphic to one of these structures. We take off our glasses, and not only we still see this ${\mathbb H}$, but now we also see the shades, with each edge (apart from edges labeled with $\alpha, \omega, x$, $y$ and \$) having one of the shades from $\mathcal S$. Now concentrate on the red edges labeled with $\CC{\bullet}{\empty}{W}{\empty}{\empty}$ of  ${\mathbb H}$. They form a grid, with each vertical edge labeled with $V$, each horizontal edge labeled with $H$, and with each edge labeled with a shade from $\mathcal S$.
Now we consider two cases:
\begin{itemize}
\item If $\GG{m+1}{\empty}$ was built then clearly condition (b3) of Definition \ref{shading} is unsatisfied. But this implies that a path labeled with a word from one of the languages $Q_{bad}$ occurs in $\mathbb H$ between $a$ and $b$, which is in breach with Principle III because of language $Q_{bad}^{1}$.
\item If $\GG{k}{\$}$ for $k \leq m$ was built then clearly condition (b2) or (b3) of Definition \ref{shading} is unsatisfied. This is because we assumed that there is no \textit{proper shading}. But this implies that a path labeled with a word from one of the languages $Q_{bad}$ occurs in $\mathbb H$ between $a$ and $b$, which is in breach with Principle III because of language $Q_{bad}^{1}$. 
\end{itemize}
This ends the proof of Lemma \ref{poprawnoscredukcji} (ii).

For the proof of Lemma \ref{poprawnoscredukcji} (i) assume the original instance $\langle {\mathcal S}, {\mathcal F}\rangle$ of Our Grid Tiling Problem has a \textit{proper shading} -- a labeled grid of side length $m$. Call this grid ${\mathbb G}$.

Recall that $\GG{m}{\$}$ satisfies all regular constraints from ${\mathcal Q}_{good}^\leftrightarrow$ and from ${\mathcal Q}_{ugly}^{\leftrightarrow}$ (Exercise~\ref{noreqG}). Now copy the shades of the edges of ${\mathbb G}$ to the respective edges of $\GG{m}{\$}$. Call this new structure ($\GG{m}{\$}$ with shades added) $\mathbb M$. It is easy to see that 
$\mathbb M$ constitutes a counterexample, as in Lemma \ref{lm-det-struct}.\\ 



\bibliographystyle{acm}
\bibliography{references}

\end{document}